\documentclass[conference,a4paper]{IEEEtran}

\usepackage{amsmath,amssymb}
\usepackage{amsthm}
\usepackage{booktabs}
\usepackage{array}
\usepackage{graphicx}
\usepackage{stfloats}
\usepackage{multirow}
\usepackage{url}
\usepackage{tikz}
\usetikzlibrary{positioning,arrows.meta}
\usepackage{pgfplots}
\pgfplotsset{compat=1.17}
\usepgfplotslibrary{groupplots}
\usepackage{algorithm}
\usepackage{algpseudocode}
\usepackage{xcolor}
\definecolor{FDdark}{RGB}{14,60,107}
\usepackage{hyperref}
\hypersetup{hidelinks}

\theoremstyle{plain}
\newtheorem{theorem}{Theorem}
\newtheorem{lemma}[theorem]{Lemma}
\newtheorem{proposition}[theorem]{Proposition}
\newtheorem{corollary}[theorem]{Corollary}
\newtheorem*{theorem*}{Theorem}
\newtheorem*{lemma*}{Lemma}
\newtheorem*{corollary*}{Corollary}
\theoremstyle{definition}
\newtheorem{definition}[theorem]{Definition}
\newtheorem*{hypothesis}{The two-layer compression hypothesis}
\newtheorem{example}[theorem]{Example}
\theoremstyle{remark}
\newtheorem{remark}[theorem]{Remark}

\newcommand{\taco}{\textsc{taco}}

\newcommand{\Description}[1]{}

\newif\ifarXiv
\ifdefined\DATECUT
\arXivfalse
\else
\arXivtrue
\fi

\newif\ifanonymous
\anonymousfalse
\IEEEoverridecommandlockouts
\ifanonymous
  \author{Anonymous submission}
\else
  \author{\IEEEauthorblockN{Keren Zhu\thanks{Code and data: \url{https://github.com/CODA-Team/TACO}}}
\IEEEauthorblockA{College of Integrated Circuits and Nano-Micro Electronics\\
State Key Laboratory of Integrated Chips and System\\
Fudan University, Shanghai, China\\
krzhu@fudan.edu.cn}}
 \fi

\title{Rethinking Logic Optimization Operators:\\
Theory-Derived Operator Compression\\
via Agentic Source Analysis}

\begin{document}
\pagestyle{plain}
\maketitle

\begin{abstract}
Logic synthesis has evolved from compact two-level minimization to
large multilevel flows with many interacting optimization operators.
Recent work has invested substantial effort in sequencing these
operators: actions are commonly treated as opaque choices in a rapidly
expanding search space, while learned circuit representations and
heuristic or local-greedy orchestration provide increasingly informed
ways to explore it.  A central obstacle is the operator vocabulary
itself.  Production operators are numerous, span different
representations and mathematical foundations, and expose behaviors
determined by implementation-level guards, bounds, and update order.

We address this gap through agentic source analysis, using LLM agents
to formulate operator-level relations from pinned ABC and mockturtle
implementations and adversarial audits to test their stated scope.
The resulting certified relations yield theory-derived operator
compression: 40 deployed recipe actions collapse to a 31-action exact
Pareto cover, and source-level conditions compile into deterministic
admission gates.  We integrate these gates directly into ABC
Orchestrate to form \taco{}.  Two exact gates reduce Orchestrate
runtime by 11\% with bit-identical outputs on 66 circuits.  In a
held-fixed integrated comparison, \taco{} uses fewer nodes on 14 of 16
circuits, with geometric-mean reductions of 1.0\% in nodes and 3.2\%
in levels, while running $2.6\times$ faster.  \taco{}-max achieves an
NDP geometric-mean ratio of $0.903$ on HeLO's three exact-input rows.
\end{abstract}

\section{Introduction}
\label{sec:intro}

\begin{figure}[!t]
  \centering
  \resizebox{\columnwidth}{!}{\begin{tikzpicture}[
    stage/.style={draw=black!70, rounded corners=1pt, minimum width=1.0cm,
                  minimum height=0.46cm, font=\scriptsize, inner sep=1.3pt},
    ident/.style={stage, fill=black!12, draw=black!40, text=black!55},
    lbl/.style={font=\scriptsize, text=black!60},
    main/.style={stage, minimum width=1.55cm, minimum height=0.64cm,
                 align=center, fill=FDdark!7, draw=FDdark!55},
    taco/.style={main, fill=FDdark!16, draw=FDdark!85},
    mode/.style={stage, minimum width=2.10cm, minimum height=0.42cm,
                 fill=FDdark!11, draw=FDdark!65, align=center},
    modestatic/.style={mode, fill=FDdark!16, draw=FDdark!85},
    arrow/.style={-{Stealth[length=1.6mm]}, black!60},
    mainarrow/.style={-{Stealth[length=1.7mm]}, FDdark!75, line width=0.45pt}
  ]
\node[lbl, anchor=west, text=black!80] at (-0.05,1.05)
    {\textbf{(a)} Concrete source-level quotient};
  \node[ident] (s1) at (0,0.45) {W\O{}};
  \node[ident] (s2) at (1.18,0.45) {W\O{}};
  \node[stage] (s3) at (2.36,0.45) {bal};
  \node[stage] (s4) at (3.54,0.45) {rf};
  \node[stage] (s5) at (4.72,0.45) {xcf};
  \node[lbl, anchor=west] at (0,-0.05) {\texttt{xag\_script}};
  \draw[arrow] (5.35,0.45) -- (6.05,0.45);
  \node[stage, fill=FDdark!10, draw=FDdark!60] (c1) at (6.75,0.45) {31};
  \node[lbl] at (6.75,-0.05) {one-recipe cover};

\node[lbl, anchor=west, text=black!80] at (-0.05,-0.85)
    {\textbf{(b)} From proof objects to \taco{}};
  \node[main, minimum width=7.15cm] (u) at (3.55,-1.48)
    {\textbf{Layer 1: universal, state-independent}\\
     identities $\cdot$ aliases $\cdot$ nested variants};
  \node[lbl, anchor=west, fill=white, inner xsep=1pt] at (-0.05,-2.13)
    {\textbf{Layer 2: state-conditional predicates}};
  \node[main, minimum width=3.35cm] (exact) at (1.75,-2.80)
    {exact core\\trajectory-preserving};
  \node[main, minimum width=3.35cm] (tier2) at (5.35,-2.80)
    {Tier-2 boundary gates\\registered residuals};
  \node[stage, fill=black!5, draw=black!45, minimum width=1.34cm,
        minimum height=0.58cm, align=center] (skip) at (0.66,-3.83)
    {\taco{}-skip\\{\tiny 2 exact gates}};
  \node[stage, fill=FDdark!5, draw=FDdark!45, minimum width=1.34cm,
        minimum height=0.44cm, font=\tiny, align=center] (skipout)
        at (0.66,-4.43)
    {plug-in for\\other optimizers};
  \node[main, minimum width=4.15cm] (residue) at (3.75,-3.83)
    {\textbf{Dynamic residue}\\mixed-fiber boundary; evaluate survivors};
  \draw[mainarrow] (u) -- (exact);
  \draw[mainarrow] (u) -- (tier2);
  \draw[arrow] (exact.south west) -- (skip.north);
  \draw[arrow] (skip) -- (skipout);
  \draw[mainarrow] (exact) -- (residue);
  \draw[mainarrow] (tier2) -- (residue);

\node[draw=FDdark!55, rounded corners=1pt, minimum width=7.15cm,
        minimum height=1.08cm, fill=FDdark!2] (modes) at (3.55,-5.28) {};
  \node[lbl, text=black!80] at (3.55,-4.98)
    {\textbf{Shared gates and selector; two effort schedules}};
  \node[modestatic, minimum height=0.55cm] at (2.15,-5.47)
    {\taco{}};
  \node[modestatic, minimum height=0.55cm] at (4.95,-5.47)
    {\taco{}-max};
  \draw[mainarrow] (residue) -- (modes.north);
\end{tikzpicture}
\Description{Panel (a) shows a five-stage inherited recipe whose first
two calls are identities and a 31-action one-recipe cover.  Panel (b)
shows a state-independent universal layer, a state-conditional layer
split into an exact core and Tier-2 boundary gates, the dynamically
evaluated residue, a separate gate-only diagnostic, and two
deterministic TACO executions with different effort schedules.}
 }
  \caption{A concrete quotient and the layered system.  (a) In
  \texttt{xag\_script}, W\O{} is a cut-rewriting call whose returned
  network is discarded, so it is an identity at recipe scope; across
  the recipe universe,
  the quotient leaves a 31-action one-recipe Pareto cover.  (b)
  Universal and state-conditional layers delimit the dynamic residue.
  \taco{} and \taco{}-max share one deterministic gate layer and
  selector; maximum effort uses an iterative AIG pre-pass and larger
  MIG-round and time limits.
  \taco{}-skip packages two exact gates as a plug-in for other
  optimizers; on \texttt{orchestrate}, it is 66/66 identical and
  11\% faster.}
  \Description{Panel (a) shows a recipe with two discarded-return
  identity stages and the 31-action cover.  Panel (b) shows universal
  relations, exact and Tier-2 boundary gates, a dynamic residue,
  the gate-only diagnostic and its measured outcome, and the two
  deterministic TACO effort schedules.}
  \label{fig:twist}
\end{figure}

Optimization concerns both the transformations a system can perform
and the possibilities worth considering.  As logic
synthesis has accumulated representations, operators, and scales of
decision, that second question has become increasingly central.

Logic optimization did not begin as an operator-selection problem.
In classical two-level synthesis, the representation and its
minimization problem were closely aligned: algorithms optimized an
explicit Boolean cover within a shared form~\cite{brayton1984espresso}.
The move to multilevel networks replaced that compact decision object
with factored DAGs whose quality depends on interacting local
transformations~\cite{brayton1987mis,sentovich1992sis}.  DAG-aware
rewriting, don't-care-based resubstitution, exact synthesis,
reconvergence-driven windows, and simulation enlarged what each local
step could see or attempt~\cite{mishchenko2006rewriting,
mishchenko2011resubstitution,haaswijk2020exact,riener2022window,
lee2022simulation}.  MIG, XAG, and XMG added carrier-specific algebra
and cost models~\cite{amaru2016mig,testa2020xag,soeken2019xmg}.
Each addition answered a real structural or functional need, but no
representation, operator, or ordering is uniformly best.  The richer
toolbox therefore leaves a residual problem: choosing among its tools.

Once several passes are available, their ordering creates a second
problem.  DRiLLS, FlowTune, BOiLS, AlphaSyn, and related systems
therefore cast synthesis as sequence or parameter
search~\cite{hosny2020drills,launeto2022flowtune,grosnit2022boils,
pei2023alphasyn,zhu2020exploring}.  Retrieval, learned circuit
representations, and surrogate models improve decision quality or sample
efficiency~\cite{chowdhury2024retrieval,wu2022lostin,wu2023gamora,
li2024boolgebra}.  These methods address genuine order sensitivity,
with the available action vocabulary naturally serving as their
starting point.

Multiple DAG carriers add another axis of choice.  LSOracle exposes
carrier-specific optimization, while HeLO responds to heterogeneous
modules and the cost of evaluating every carrier optimizer by
partitioning the circuit, predicting one best-fit DAG type per
subcircuit, and invoking its fixed LSOracle optimization
script~\cite{neto2019lsoracle,pu2025helo}.  The script is a useful
coarse-grained abstraction: it packages low-level operator orderings
while learning focuses on representation and partition choice.

DAG-aware orchestration moves the decision to a finer granularity:
a stand-alone pass can overlook roots better served by another
engine~\cite{li2024orchestration}.  It moves selection inside the
graph, but its local-greedy policy evaluates rewrite, resubstitution,
and refactoring at each root before choosing the largest local gain;
its cheaper alternative follows a fixed priority order.  The original
paper accordingly identifies learned, domain-aware selection as a
future direction.  Together, these works move the decision boundary
from flow sequences to carrier scripts on subcircuits and finally to
individual roots.

This progression leaves a complementary question: \emph{before
selection, can source-level relations identify choices that are
equivalent or evaluations that are unnecessary?}  Mature open tools contain
identities, aliases, and nested variants accumulated over long
development histories~\cite{brayton2010abc,soeken2018mockturtle}.
Figure~\ref{fig:twist}(a) shows the smallest visible example: a pure
cut-rewriting call whose returned network is discarded cannot change
the recipe state.  These relations characterize the action menu as a
whole.

A natural first route is classical: abstract each transformation as an
oracle, seek a circuit class closed under the action set, and prove one
reusable theorem for that class.  A representative construction
\ifarXiv in Appendix~\ref{app:closedset}\fi\ illustrates why practical
conclusions are difficult to derive at this level: without
\emph{source semantics}, including guard constants, view limits,
accounting rules, and update order, mathematical closure alone does not identify
which deployed choices are equivalent or unnecessary.  Artifact-specific
verification has a mature compiler precedent: Alive and Alive2 validate
concrete LLVM rewrites against implementation and language
semantics~\cite{lopes2015alive,lopes2021alive2}.  We apply the same
discipline to relations among implemented synthesis operators.

We therefore analyze concrete, pinned operators and prove
operator-specific lemmas at implementation scale.  Language-model
systems have demonstrated capabilities in mathematical discovery and
theorem solving~\cite{romeraparedes2024funsearch,trinh2024alphageometry},
including revisiting classical algorithmic
questions~\cite{zhu2026primdijkstra}.  \textbf{Agentic source analysis
makes this mathematical-engineering workload tractable by assigning
lemma-scale source readings to LLM agents and subjecting retained
relations to adversarial audit.}

Figure~\ref{fig:twist}(b) summarizes this two-layer decomposition.
Universal relations remove state-independent redundancy,
state-conditional gates avoid idle evaluations, and a scoped fiber
obstruction marks part of the dynamic residue.  Their reusable proof
objects are fixed-scope quotients, sound admission predicates, and
witnessed mixed-fiber obstructions.  Section~\ref{sec:universal}
formalizes this \emph{two-layer compression hypothesis} for pinned ABC
and mockturtle.

The registry separates an exact, trajectory-preserving core from
Tier-2 boundary gates with registered residuals.  \taco{} inserts this
fixed gate set directly into ABC's pinned
\texttt{Abc\_NtkOrchLocal}, while preserving its AIG representation,
root traversal, stock engines, source parameters, and published mode
schedules.  A deterministic selector resolves the surviving residue.
\taco{}-max retains the gate set and selector but uses the
higher-effort AIG/MIG schedule for the HeLO comparison; \taco{}-skip
exposes the exact gate-only subset as a plug-in.

Our contributions are summarized as follows.
\begin{itemize}
\item We propose agentic source analysis as a paradigm for deriving
theory from implemented optimizers, coupling lemma-scale readings of
pinned source with adversarial audit and a replayable claim registry.

\item We formulate the two-layer compression hypothesis through three
proof objects: a 40-to-31 fixed-scope quotient, sound admission
predicates for per-root gates, and witnessed mixed-fiber obstructions
for the remaining dynamic residue.

\item We develop \taco{}, a deterministic gate-conditioned optimizer
inside ABC Orchestrate that preserves its published AIG execution
skeleton and evaluates only choices left unresolved by the gates.

\item We evaluate the exact gates and integrated system separately:
the gates cut runtime by 11\% with bit-identical outputs on 66
circuits, \taco{} uses fewer nodes on 14 of 16 Orchestrate circuits at
$2.6\times$ speed, and \taco{}-max reaches $0.903$ NDP against HeLO's
exact-input rows.
\end{itemize}

The rest of the paper is organized as follows.
Section~\ref{sec:universal} fixes the setting and proves the universal
layer, Section~\ref{sec:predicate} the predicate layer, and
Section~\ref{sec:crosscarrier} the cross-carrier arithmetic and the
impossibility results.  Section~\ref{sec:optimizer} builds the
optimizer, Section~\ref{sec:production} describes the production
method, and Section~\ref{sec:experiments} reports the
experiments.\ifarXiv\
Section~\ref{sec:discussion} discusses limits and open questions.\fi\
Section~\ref{sec:conclusion} concludes.
 \section{Fixed-Scope Quotients and Universal Compression}
\label{sec:universal}

The compressions in this section are unconditional.  When two actions
can be shown to reach the same endpoint on every state, one of them can
be deleted from every flow, and the action set
shrinks before any search begins.
The menu was assembled by accretion: recipes
inherited stages, libraries grew parameter variants, and the duplicates
were never accounted for.  We find three kinds of state-independent
redundancy:
\begin{itemize}
\item \emph{identities}: actions that cannot change the network;
\item \emph{aliases}: two actions that are the same operator;
\item \emph{nested variants}: a variant whose possible endpoints are
already reached by a small basis.
\end{itemize}
Each relation is established from the pinned implementation semantics
at its stated scope; its source anchors and proof obligation are
recorded in the released registry.

\subsection{The flat action space}
\label{sec:setting}

A \emph{state} is a finite logic network on one of the carriers of the
multilevel tradition~\cite{brayton1987mis,sentovich1992sis}: AIG
(and-inverter graph~\cite{kuehlmann2002dag}), MIG
(majority-inverter graph~\cite{amaru2016mig}),
XAG (XOR-AND graph~\cite{testa2020xag}), or XMG (XOR-majority
graph~\cite{soeken2019xmg}).  Throughout, \emph{pinned} refers to a fixed
source tree for each library: ABC at commit \texttt{bcfdf59}, mockturtle at
\texttt{fb8f879}, and LSOracle at \texttt{febd152} for the recipe
framework; the artifact builds exactly these trees.  The action set
under study has four parts.
(1) The \emph{recipe universe}: the 40 script
actions of HeLO's framework~\cite{pu2025helo}, inherited from
LSOracle~\cite{neto2019lsoracle} and numbered H00--H39
\ifarXiv\ (Table~\ref{tab:universe})\fi.  (2) The \emph{AIG engines} of ABC:
rewrite, resubstitution, refactoring, and the \texttt{orchestrate}
command~\cite{li2024orchestration} built on them, which is the
algorithm our gates accelerate.  (3) The \emph{MIG algorithms}: the
mockturtle passes used by our MIG-side flow.  (4) Our optimizer
\taco{}, built from the compressed set and compared experimentally
with ABC Orchestrate and HeLO's published results.

We separate implementation semantics from corpus measurement.  Let
$S_{\mathrm{impl}}$ be the legal states accepted by the pinned
implementations.  For a named benchmark set $D$ and action set $A$, let
$R(D,A)\subseteq S_{\mathrm{impl}}$ be the states encountered from $D$
under the stated flow.  Source-level theorems quantify over
$S_{\mathrm{impl}}$ (or an explicitly stated subset); measurements
quantify only over the named $D$ or $R(D,A)$.

We also distinguish three semantic scopes.  A \emph{root evaluator}
$E_e(s,v)$ commits at most one candidate of engine $e$ at root $v$ (or
returns $s$); a \emph{pass} $P_e(s)$ applies the pinned visit and commit
schedule over a network; and a recipe action $T_h(s)$ composes complete
stages.  A candidate is a replacement subgraph with integer gain $g$,
the size of the root's fanout-free cone minus the reported insertion
count, and resulting root level $\ell$.  Relations below compare
transformations at the same scope: the 40-to-31 quotient is recipe-level,
the selector lemmas are root-level, and each predicate names the exact
evaluation it gates.  Whole-pass claims require a separate relation.
We write $I$ for the identity at any scope.
\emph{Bridge actions}
change the carrier and are first-class actions with their own costs; a
\emph{lane} (bridge in, operators, bridge out) is treated as a macro
and lies outside the formal relations.  Table~\ref{tab:notation} collects
the notation used throughout, and Table~\ref{tab:space} the compressed
action set this section produces.

\begin{table}[t]
\centering
\scriptsize
\caption{The compressed action set: the retained menu (top), and
removals marked \dag~(removed permanently by theorem),
\ddag~(skipped per state by a proved predicate), and \S~(a
source-derived or measured Tier-2 contract).}
\label{tab:space}
\begin{tabular}{@{}l >{\raggedright\arraybackslash}p{0.56\columnwidth} @{}}
\toprule
\multicolumn{2}{@{}l}{\emph{retained menu}} \\
\midrule
AIG & Rewrite (4-cut); resubstitution ($\le$2 inserted nodes); \\
    & Refactoring (source-default 10/16-node cone limits); balancing \\
MIG & Algebraic depth rewriting (3 strategies); cut rewriting \\
    & (4-cut); resubstitution; SOP balancing; Akers refactoring \\
Lanes & Whole-network excursions into MIG, AIG, XAG, XMG \\
\midrule
\multicolumn{2}{@{}l}{\emph{removed or gated}} \\
\midrule
Strict-gain-only variants \dag & Zero-gain variant covers them \\
0/1-Insert resubstitution \dag & Nested inside the 2-insert version \\
4-Leaf refactoring cones \dag & Same operator under an applicability guard \\
No-area algebraic \dag & Contained in the area-allowed variant \\
3-Leaf cut rewriting \dag & Dominated by 4-leaf \\
Critical-path balancing \dag & Contained in all-node balancing \\
Saturation flag (FR) \dag & Dead guard; never fires \\
XAG Constant-fanin opt. \ddag & Identity on XOR-free states
 (G-XAGCF) \\
XMG boundary skip \S & Registered Tier-2 gate; collision mechanism in
 Thm.~\ref{thm:xmg} \\
SOP passes on XOR-rich cones \S & ISOP covers explode
 (aes\_core $>$ 20\,min per pass) \\
\bottomrule
\end{tabular}
\end{table}

\begin{definition}[containment, alias, redundancy]
\label{def:containment}
At one fixed semantic scope, for a transformation $F$ and a set of
transformations $\mathcal{G}$ with the same input domain, write
$F\preceq\mathcal{G}$ when
$F(x)\in\{I(x)\}\cup\{G(x):G\in\mathcal{G}\}$ for every legal input
$x$, and call $F$ redundant against $\mathcal{G}$.  Write $F\equiv G$
when their endpoints agree pointwise, and call them aliases.  Every use
below states whether $x$ is a state $s$ or a state--root pair $(s,v)$.
The relations are for the implementations as compiled; audit anchors
are collected in the released claim manifest
\ifarXiv and Appendix~\ref{app:proofsuniv}\fi.
\end{definition}

\begin{definition}[exact Pareto cover]
\label{def:paretocover}
For a recipe set $A$ and input $C$, let
$\mathrm{PF}_{A}(C)$ be the Pareto frontier of the one-recipe endpoints
$\{q(T_h(C)):h\in A\}$ under the reported node--depth metric $q$.
A subset $B\subseteq A$ is an \emph{exact Pareto cover} on an input set
$D$ when $\mathrm{PF}_{B}(C)=\mathrm{PF}_{A}(C)$ for every $C\in D$.
``Exact'' denotes equality of these one-action frontiers.  Minimum
cardinality and arbitrary multi-step trajectories lie outside this
definition.
\end{definition}

\begin{table}[t]
\centering
\scriptsize
\caption{Notation and named concepts used throughout.}
\label{tab:notation}
\begin{tabular}{@{}p{0.28\columnwidth}p{0.66\columnwidth}@{}}
\toprule
Term / symbol & Meaning \\
\midrule
State $s$ & A finite network in $S_{\mathrm{impl}}$ \\
$E_e(s,v)$; $P_e(s)$; $T_h(s)$ & Root evaluation; pass; recipe endpoint \\
$I(s)$ & The identity endpoint (no change) \\
Strash & Structural hashing, the canonical AIG construction \\
Candidate & A proposed replacement, with gain $g$ and level $\ell$ \\
Gain $g$ & Cone size minus inserted nodes, at a root \\
$F \preceq \mathcal{G}$ & Same-scope endpoint is $I$ or one endpoint in $\mathcal{G}$ \\
$a \equiv b$ & Equal endpoints on every state (alias) \\
Cut ($k$-leaf) & A small cone under a node, at most $k$ leaves \\
MFFC & Maximal fanout-free cone rooted at a node \\
NPN Class & Functions equal up to negation and permutation \\
NDP & Node-depth product, the HeLO objective \\
R2, R2p & Resubstitution, 2 inserted nodes; depth-preserving variant \\
Record & An allocated object in the node manager, live or dead \\
Dry & An engine or pass with no win over a measurement block \\
Required level & Arrival-time bound; \emph{critical} = zero slack \\
Offer / win & Returns a positive-gain candidate / the selector commits it \\
Identity region & States where the operator provably changes nothing \\
Pareto frontier & Mutually undominated (nodes, depth) outcomes \\
Recipe universe & The 40 inherited script actions (H00--H39) \\
Pinned & The fixed source trees of \S\ref{sec:setting} \\
Bridge action / lane & A carrier-changing action / bridge plus operators \\
Stored size $s^*$ & Gate count of an NPN class's stored database form \\
Trajectory & The sequence of states a flow visits \\
MAJ-compressible & MIG form strictly smaller than AIG form \\
\bottomrule
\end{tabular}
\end{table}

\begin{hypothesis}
\label{sec:hypothesis}
Let $\mathcal{A}$ be the operator set of a production logic-synthesis
tool, acting on its legal implementation states.  We hypothesize that
its operator relations admit the following decomposition:
\begin{enumerate}
\item[(i)] there is a basis $B \subseteq \mathcal{A}$ whose alias and
metric-dominance relations are certifiable at an explicitly named
semantic scope, preserving the corresponding one-action Pareto
frontier;
\item[(ii)] the idle behavior of an operator $e \in B$ is partly
certifiable by sound structural gates: sterile or dedupe gates
(Definition~\ref{def:gates}) that skip provably pointless evaluations
without claiming to find them all;
\item[(iii)] when an operator's identity status varies within one
polarity-erased structural class, no predicate over that structure can
be both sound (fires only on identities) and complete (fires on every
identity); this delimits a dynamic residue.
\end{enumerate}
\end{hypothesis}

\begin{definition}[function-dependent operator]
\label{def:fdep}
An operator is \textbf{function-dependent} when its acceptance rule reads
Boolean function data (truth tables, canonical forms, simulation)
beyond the polarity-erased structure.
\end{definition}

\noindent
The three clauses form an analysis framework.  For the pinned ABC and
mockturtle trees, each relation and gate is proved at its stated scope.
The recipe quotient preserves the one-action frontier over every legal
input to the pinned recipe implementation, while corpus enumeration
independently rechecks the quotient and records which representatives
are live.  A concrete polarity pair instantiates the structural
obstruction for cut-4 rewriting.
Clause (i) is the subject of this section, clause (ii) of
Section~\ref{sec:predicate}, and clause (iii) of
Section~\ref{sec:crosscarrier}.

\subsection{The discarded-cut lemma and quotient chain}

Some recipe stages cannot change anything because the operator they
call is a pure function and the recipe discards its result.

\begin{lemma}[discarded cut]
\label{lem:discarded}
Let $\rho$ be any recipe stage that invokes cut rewriting on a state
$s$ without binding the returned network.  Then $T_\rho(s) = s$ for
every state $s$.
\end{lemma}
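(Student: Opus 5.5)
The argument is a purity argument: the called operator returns a fresh network and leaves its argument untouched, so if the recipe drops the return value, the recipe's state is the same before and after the stage. I would prove this at recipe scope in three steps.

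\textbf{Step 1: fix the call signature.} From the pinned mockturtle source (\texttt{fb8f879}), I would establish that the cut-rewriting entry point invoked by the stage (in \texttt{xag\_script}, the W\O{} calls of Figure~\ref{fig:twist}(a)) has a value-returning signature. It takes its input network by \texttt{const} reference, or by value, and builds the rewritten result in a separately constructed network object. This matches mockturtle's functional \texttt{cut\_rewriting} wrapper, as opposed to the in-place \texttt{cut\_rewriting\_with\_compatibility\_graph}-style variants. I would record the exact overload that the LSOracle recipe (\texttt{febd152}) resolves to.

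\textbf{Step 2: establish purity.} I would show the call has no observable effect on the caller's network. The main obstacle is that mockturtle networks are handles onto a shared storage (\texttt{std::shared\_ptr} to the node manager). A copy of the input could therefore alias its records, and writes to node values, visited flags, or the structural-hash table could leak back.

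I would discharge this in two parts:
\begin{enumerate}
\item[(a)] The returned network is built by \texttt{cleanup\_dangling} or an equivalent fresh construction into new storage. All rewriting therefore commits to that new storage, never to the caller's.
\item[(b)] Any writes the call does make to the caller's storage touch only scratch fields: traversal IDs, node values, and cut-enumeration data. These fields are not part of the state as defined in Section~\ref{sec:setting}, because a state is the network's node, fanin, and output structure up to the metric $q$. In addition, no new records are allocated in the caller's node manager, and its hash table is unchanged.
\end{enumerate}
If (b) turns out to fail for some auxiliary field, I would weaken the statement to equality of the reported state. I would also check that the next stage (bal) does not read that field as input, since that is exactly what $T_\rho(s)=s$ at recipe scope requires.

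\textbf{Step 3: conclude.} Since the recipe stage does not bind the return value, the temporary network is destroyed, and no other instruction in $\rho$ reads it. The post-stage state is therefore the pre-stage state, so $T_\rho(s)=s$ for every $s\in S_{\mathrm{impl}}$. This holds independently of $s$, because the argument never inspects the network's content. Termination is also not at issue, since cut rewriting halts on every finite network.

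Finally, I would note the consequence that such stages can be deleted from the recipe. This is the reason the two W\O{} stages of \texttt{xag\_script} collapse to $I$ in the 40-to-31 quotient.
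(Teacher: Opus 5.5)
Your proposal is correct and follows essentially the paper's route. The paper also argues from purity (the operator takes the network by immutable reference and rewrites a separately constructed network) plus the unbound return, inducting over the remaining stages and then over the number of unbound calls to justify deleting them from a recipe word; your Step~2 is more careful than the paper's Property~1, which simply asserts that nothing is assigned through the input, because you explicitly handle shared storage and scratch fields written through const handles, while the paper's extra value-guard property is not needed for the identity conclusion.
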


\noindent
An entire family of parameterized recipes then collapses.

\begin{corollary}[quotient and exact Pareto cover]
\label{cor:quotient}
Eleven of the 40 recipe actions collapse into three alias classes
(four AIG-script variants that become identical once their discarded
stages are removed, five MIG-script variants of the same kind, and two
XAG-script variants), removing eight actions.  One further action is
metric-dominated (the no-op recipe action H00).  The remaining 31
representatives form
an exact Pareto cover in the sense of
Definition~\ref{def:paretocover} on the complete legal input domain of
the pinned recipe implementation.  Pointwise aliases and the universal
$H27$--$H00$ dominance preserve the
one-recipe frontier for every legal input.  Corpus enumeration provides
an independent regression and liveness check.  Minimum cardinality
remains open.
\end{corollary}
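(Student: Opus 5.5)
The proof proceeds in three stages: normalize every recipe, partition the normalized recipes, and then lift pointwise relations to frontier equality.

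\emph{Normalization.} Each recipe action $T_h$ is the composition of its stages. By Lemma~\ref{lem:discarded}, every stage that calls cut rewriting without binding the returned network acts as $I$ at recipe scope. We therefore define the normal form $\hat T_h$ by deleting all such stages from the pinned stage list of $h$. Since deleting identity factors does not change a composition, $T_h(s)=\hat T_h(s)$ for every legal $s$.

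\emph{Alias classes.} We then compare normal forms syntactically. Two actions whose normalized stage sequences coincide, with the same operators, parameters and order, as read from the pinned LSOracle recipe source, are pointwise aliases, $T_h\equiv T_{h'}$. This should give three classes:
\begin{itemize}
\item four AIG-script variants, removing three actions;
\item five MIG-script variants, removing four actions;
\item two XAG-script variants, such as \texttt{xag\_script}, whose W\O{} prefixes vanish as in Figure~\ref{fig:twist}(a), removing one action.
\end{itemize}
That accounts for $3+4+1=8$ removals. Some stages may not be syntactically equal after normalization yet differ only in ways provably irrelevant to the output, for example a flag guarding dead code. For those we would cite the corresponding audited source anchors rather than rely on syntactic identity. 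Checking the remaining 29 normal forms for pairwise distinctness is not needed for soundness, only for the counts.

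\emph{Dominance of H00.} H00 is the no-op action, so $T_{H00}=I$. We show $q(T_{H27}(C))\le q(C)$ componentwise in (nodes, depth) for every legal $C$. The intended route is to read the pinned stages of H27 and verify that each stage is either an identity or an operator whose commit rule accepts only non-increasing size and depth. Balancing does not increase depth and, under strash, does not increase nodes; the rewriting and resubstitution stages commit only gains $g\ge 0$ under depth-preserving settings. A product of componentwise-monotone stages is monotone, so $q(T_{H27}(C))\le q(C)$.

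I expect this step to be the main obstacle. Any stage of H27 that could trade depth for area would break the componentwise bound. For such a stage I would look for a source-level guard, such as a required-level check, that forbids the trade, and I would fall back on restricting the scope if no guard exists.

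\emph{Frontier lifting.} Let $B$ be the 31 representatives, obtained by removing the eight aliases and H00 from the 40 actions. For any $C$, every point $q(T_h(C))$ with $h\notin B$ either equals $q(T_{h'}(C))$ for an alias representative $h'\in B$, or, when $h=H00$, is weakly dominated by $q(T_{H27}(C))$ with $H27\in B$.

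Removing a point that duplicates or is weakly dominated by a retained point cannot change the set of undominated points. If $q(T_{H00}(C))=q(T_{H27}(C))$, the value is still present in $B$. Hence $\mathrm{PF}_B(C)=\mathrm{PF}_A(C)$ for every legal $C$, which is exactly Definition~\ref{def:paretocover} with $D=S_{\mathrm{impl}}$ restricted to recipe inputs.

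The corpus enumeration is cited only as a regression and liveness check and plays no role in the logical argument. Minimality is explicitly not claimed.
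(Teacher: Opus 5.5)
Your architecture matches the paper's. You normalize away the unbound cut-rewriting stages with Lemma~\ref{lem:discarded}, form three alias classes, delete H00 by one dominance, and lift pointwise equality and weak dominance to equality of one-recipe frontiers. Your final lifting step is correct and matches the paper's count-and-cover argument.

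The genuine gap is the H27--H00 dominance. The stages you attribute to H27 (balancing, rewriting, resubstitution) are not its stages. The monotonicity you assert for them also fails in the pinned trees: depth-motivated balancing can add gates, and default resubstitution admits level-raising replacements. A stage-by-stage ``componentwise-monotone'' argument over such operators therefore breaks down. Your fallback of narrowing the scope would abandon the universal claim the corollary makes.

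What is missing is the actual content of H27: it imports the graph into the MIG carrier, runs one \emph{no-area} algebraic depth pass, and lowers back. The proof needs four source facts:
(1) the imported graph is ordered-isomorphic to H00's output, so the bridge itself moves neither coordinate;
(2) the no-area flag adds a fanout-one guard on the deepest child and disables the distributivity constructor, so every admitted move is an associativity move replacing two majority nodes by at most two while strictly lowering the rewritten root's level;
(3) the subsequent cleanup cannot worsen nodes or depth;
(4) abnormal termination of the lane returns the same incumbent as H00.
Only these give $q_{H27}(C)\le q_{H00}(C)$ for every legal $C$.

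A second, smaller gap is in the alias step. Syntactic equality of normal forms does settle H11 and H18 against H04, H21, H22, H23, H28 against H06, and H33 against H09, because their differing parameters bind only discarded stages. H19, however, changes the saturation flag of a \emph{live} functional-reduction stage of H04. It therefore does not become syntactically identical after normalization. Deferring to ``audited source anchors'' leaves the fourth AIG alias, and hence the count of eight removals, unproved.

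The needed argument runs as follows. The saturation loop repeats only while the network-size counter changes, and that counter is the length of the raw record storage. Substitution merely marks records dead without freeing them. Functional reduction never allocates a record, because each accepted substitution redirects a node to an existing signal or a constant. The counter is therefore unchanged after the mandatory pass, the loop runs zero extra iterations, and the flagged and default transitions coincide on every state.
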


\begin{proof}[Proof skeleton]
Substituting Lemma~\ref{lem:discarded} for every unbound cut-rewrite
stage makes four AIG recipes equivalent, five MIG recipes equivalent,
and two XAG recipes equivalent.  Their remaining parameter differences either feed an
identity stage or the functional-reduction saturation flag.  That flag
reads an append-only record count; the pass allocates no new record, so
the flagged and default transitions coincide at every iteration.
These three alias classes remove eight actions.

For the remaining dominance, H00 is the identity recipe.  H27 imports
the same graph into the MIG carrier, applies one no-area algebraic
depth pass, and lowers it back.  Each admitted move replaces two
majority nodes by at most two and strictly reduces level; cleanup
cannot worsen either coordinate.  Hence
$q_{H27}(C)\le q_{H00}(C)$ for every legal input $C$.  Removing this
dominated action leaves 31 representatives, and pointwise equality or
coordinatewise dominance preserves the one-recipe Pareto frontier.
\end{proof}

Certificate \texttt{U-Q31} records this result at vocabulary scope:
every
deployed action has an explicit representative or dominator at the
same semantic scope, and the retained vocabulary preserves the full
one-recipe frontier.  Discarded returns, inert guards, and storage
rules supply local premises for that scoped composition.

\begin{proposition}[two non-collapsible look-alike pairs]
\label{prop:sharp}
Two tempting additional merges are invalid: the MIG and XMG algebraic
scripts are not the same action, and
the no-area MIG variant is not dominated by its XMG twin.  A
seven-AND single-output AIG witnesses the first separation and a
nine-gate witness the second, both through the old-node hash asymmetry
of \S\ref{sec:xmg}.
\ifarXiv Appendix~\ref{app:refutations} records the complete refutation
trail.\fi
\end{proposition}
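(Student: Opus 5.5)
The proof is by explicit counterexample. Both claims are negations of relations in Definition~\ref{def:containment}, so each needs only one witness input on which the relation fails. For the alias claim, let $h_{\mathrm{MIG}}$ and $h_{\mathrm{XMG}}$ be the MIG and XMG algebraic scripts. I will exhibit a seven-AND single-output AIG $C_1$ with $T_{h_{\mathrm{MIG}}}(C_1)\neq T_{h_{\mathrm{XMG}}}(C_1)$. Because the reported metric $q$ is coarser than the endpoint itself, I will strengthen this to $q(T_{h_{\mathrm{MIG}}}(C_1))\neq q(T_{h_{\mathrm{XMG}}}(C_1))$. That makes the separation visible at the frontier level, so merging the two would also break the exact Pareto cover of Corollary~\ref{cor:quotient}.

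For the dominance claim, let $h_{\mathrm{na}}$ be the no-area MIG variant and $h'$ its XMG twin. I will exhibit a nine-gate witness $C_2$ on which $q(T_{h_{\mathrm{na}}}(C_2))$ is not weakly dominated by $q(T_{h'}(C_2))$. In other words, the MIG variant is strictly better in nodes or in depth. This rules out coordinatewise dominance, and it also rules out $h_{\mathrm{na}}\preceq\{h'\}$ on that input, unless the MIG variant is an identity there. I will exclude that case by checking that its endpoint differs from $C_2$.

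\emph{Mechanism.} Both separations come from the old-node hash asymmetry of \S\ref{sec:xmg}, which I take as given. As the paper uses it, the XMG carrier's structural hashing treats a rewritten majority node that collides with a pre-existing node differently from the MIG carrier. In one carrier the collision merges into the old node and saves a gate. In the other it allocates a fresh node, or the move's admission test sees a different gain. My plan has three steps.

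\begin{enumerate}
\item Isolate the smallest algebraic move, one associativity or distributivity step, whose result structurally coincides with an already-present node in one carrier but not the other. Here I use the fact that XMG import maps AND to MAJ with a constant, just as MIG import does, so both carriers start from identical graphs. The only divergence is the hash lookup.
\item Embed that collision in an AIG small enough to trace by hand through the pinned passes: seven ANDs for the first witness, nine gates for the second. Add sharing so that the old node already exists before the move.
\item Run both pinned scripts on the artifact to confirm the hand trace and record the resulting $(\text{nodes},\text{depth})$ pairs.
\end{enumerate}

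\emph{Main obstacle.} The second witness is the hard part. The first claim only needs some difference. The second needs an \emph{undominated} difference, and the no-area restriction tends to make the MIG variant worse in area. I would therefore try to place the collision so that the XMG twin's fresh allocation costs a node while its depth matches the MIG result. That would give the MIG variant strictly fewer nodes at equal depth. If no-area blocks that move, I would instead aim for a depth advantage for the MIG variant, with the collision changing which move is admitted downstream. The trace must follow update order exactly, because the asymmetry depends on whether the colliding node is still live when the lookup happens.
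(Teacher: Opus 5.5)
\textbf{Verdict: same route as the paper, but two concrete problems with the second pair.} Like the paper, you settle each separation by one explicit witness, attribute it to the old-node hash asymmetry, and certify it by running the pinned scripts and comparing (nodes, depth).

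\textbf{First pair.} The paper reports $q_{H08}=(3,2)$ against $q_{H10}=(4,3)$ on the seven-AND AIG, which is the separation you plan. Two small corrections. First, your aside overreaches: a $q$-difference refutes the alias, but here $(3,2)$ strictly dominates $(4,3)$, so a merge that keeps $H08$ loses no frontier point on this input. Second, the recipe table lists $H08$ with two balancing stages that $H10$ lacks. So ``the only divergence is the hash lookup'' is false for this pair and would mislead a hand trace.

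\textbf{Second pair: the direction is reversed.} The paper's nine-gate witness gives $q_{H39}=(4,4)$ and $q_{H27}=(7,5)$. The MIG no-area variant is strictly worse in both coordinates. That input shows that $H27$ does not dominate $H39$ (registry entry U-NATWINS), which is the opposite of the direction you set out to witness. For the literally stated direction, the paper offers only the remark that ``the converse containment fails when starting from the other carrier.'' Your reading of what the statement requires is correct: an input on which $H27$ beats $H39$ in some coordinate and is not the identity. But the recorded nine-gate witness does not supply it. You need a separate witness, or you should say that the nine-gate input establishes the reverse non-dominance.

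\textbf{Second pair: the mechanism is misplaced in your search plan.}
\begin{itemize}
\item Both twins carry the no-area restriction; Lemma~\ref{lem:selectors}(b) covers the XMG variant too. So no-area is not what handicaps the MIG side, and distributivity is unavailable to both engines.
\item The trigger is not a node present in one carrier but absent in the other. Up to the first divergence both runs hold the same graph, and any hash hit on a node other than the one being replaced is handled identically.
\item By Theorem~\ref{thm:xmg}, the engines differ only when a rebuilt parent hashes to the very node $n$ being replaced: MIG rejects that hit and XMG accepts it. This forces an adjacent node--parent pair sharing a fanin, such as $n=\mathrm{and}(x,y)$, $p=\mathrm{and}(n,x)$, or one that reassociation manufactures mid-pass.
\end{itemize}
Seed your witnesses with that shape. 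The two recorded witnesses also show the asymmetry can favor either engine, so you cannot assume it penalizes one side.
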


\subsection{Selector lemmas}

The next five relations concern nested variants.  They cover every
nested variant pair in the pinned action set: each endpoint reachable by the smaller
variant is also reachable by the larger.  Formally, each relation is a
$\preceq$ statement between engine variants and follows from their
shared acceptance guard.

\begin{lemma}[per-root selector compressions]
\label{lem:selectors}
Let $a$ and $a'$ be two variants of one engine differing in a single
parameter.  Then the following hold for every legal state--root pair
$(s,v)$, with each endpoint interpreted as $E_a(s,v)$:
\begin{enumerate}
\setlength{\itemsep}{1pt}
\setlength{\parskip}{0pt}
\setlength{\parsep}{0pt}
\item[(a)] strict-gain-only $a$:
$E_a(s,v) \in \{E_{a'}(s,v), s\}$;
\item[(b)] no-area algebraic $a$:
$E_a(s,v) \in \{E_{a'}(s,v), s\}$;
\item[(c)] critical-only balancing $a$:
$E_a(s,v) \in \{E_{a'}(s,v), s\}$;
\item[(d)] resubstitution with insertion budget $k \in \{0,1\}$:
$E_{R_k}(s,v)\in\{E_{R_{k+1}}(s,v),s\}$; the larger budget reaches
every first success of the smaller;
\item[(e)] a 4-leaf refactoring limit:
$E_{D_4}(s,v)\in\{E_{D_6}(s,v),s\}$, because the default 6-leaf
variant evaluates every admitted cone identically.
\end{enumerate}
\end{lemma}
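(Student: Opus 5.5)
The proof will be a uniform argument, instantiated five times from the pinned source. Each root evaluator $E_a(s,v)$ is modelled as a three-stage procedure. First, \emph{enumeration} produces a candidate set $\mathcal{C}_a(s,v)$ (cuts, cones, divisors, or algebraic moves). Second, a \emph{selection} rule picks the best candidate under a fixed deterministic tie-break. Third, an \emph{acceptance guard} $\mathrm{acc}_a(g,\ell)$ either commits the selected candidate or returns $s$.

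For each item we will verify two source-level facts about the variant pair $(a,a')$:
\begin{enumerate}
\item[(E)] enumeration and selection agree, that is, $\mathcal{C}_a(s,v)=\mathcal{C}_{a'}(s,v)$ with the same ordering and tie-break. Alternatively, $a$ stops at its first success, and $a'$ reaches that same first success before any further candidate can displace it.
\item[(G)] the guards are nested, $\mathrm{acc}_a\Rightarrow \mathrm{acc}_{a'}$, and the guard is applied only after selection, so it does not filter the candidate pool.
\end{enumerate}
Given (E) and (G), the argument is short. If $a$ accepts, then $a'$ selects the same candidate and also accepts it, so $E_a(s,v)=E_{a'}(s,v)$. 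If $a$ rejects, then $E_a(s,v)=s$. Either way, $E_a\preceq\{E_{a'}\}$ in the sense of Definition~\ref{def:containment}.

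The five items instantiate (E) and (G) as follows.
\begin{enumerate}
\item[(a)] The guard is $g>0$ versus $g\ge 0$, applied to the same best candidate.
\item[(b)] The no-area algebraic variant adds the conjunct ``node count does not grow'' to the level-decrease test. This is the same move-enumeration code path used in the proof skeleton of Corollary~\ref{cor:quotient}.
\item[(c)] Critical-only balancing restricts the set of roots visited, but at a visited root it performs the same local rebalancing computation as the all-node variant. Since the claim is root-scope, a non-critical root yields $s$ and a critical root yields the identical endpoint.
\item[(d)] Resubstitution tries insertion budgets in increasing order: $0$-resub, then $1$-resub, and so on, returning at the first success. $R_{k+1}$ runs the same prefix of attempts as $R_k$ and then one more level. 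Any first success of $R_k$ is therefore also the first success of $R_{k+1}$, and a failure of $R_k$ gives $s$.
\item[(e)] The 4-leaf limit differs from the default only by an applicability guard on the cone size. Every cone admitted under the 4-leaf limit is also admitted under the 6-leaf limit and is processed by identical code, with the same ISOP/factoring and the same gain test.
\end{enumerate}

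The main obstacle is establishing (E) robustly, not (G). Selection rules often keep a running best candidate and may break ties, or terminate early, in ways that depend on the guard. For example, in (a) the rewriting loop may update the best candidate only on strictly better gain, and a zero-gain candidate seen first could in principle shadow a later positive one differently in the two variants. Similarly, in (d) a divisor-set truncation might depend on $k$.

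My first step is to read the pinned selection loops and confirm that the parameter enters only the final comparison, or only the post-enumeration prefix. Wherever the parameter leaks into the loop, I would weaken the claim to the first-success form already written into (d), rather than claim full selection equality. I would also check that no shared mutable state, such as traversal IDs or hash-table insertions made during evaluation, persists when the candidate is rejected. Such leftovers would break the ``returns $s$'' branch, so the rejection paths must be verified to roll back, or never mutate, the network.
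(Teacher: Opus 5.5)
Your proposal is correct and is essentially the paper's argument: item by item, the two variants share enumeration and selection (or a first-success prefix of tiers or constructors) and differ only by a nested guard, so $E_a(s,v)$ either reproduces $E_{a'}(s,v)$ or equals $s$; the worry you raise for (a) is settled, as in the paper, by noting that the incumbent is replaced only on strictly larger gain, so when the maximal gain $g^*$ is positive both variants commit the first candidate attaining it. One source detail to correct in (b): the no-area flag is not a node-count conjunct but a fanout-one guard on the deepest child plus disabling of the distributivity constructor, so the candidate pools do differ, and the case goes through your first-success fallback because associativity is tried before distributivity (the compound XMG variant additionally needs its three helpers to be mutually exclusive at the root).
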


These relations shrink the candidate basis of a \emph{single root
evaluation}.  Whole-pass endpoints have a separate scope because later
visits can observe an earlier commit.  Likewise, the 3-leaf/4-leaf row
of Table~\ref{tab:space} states recipe-level metric dominance under its
reported objective.

\subsection{Gain-form theorems}

The last two lemmas provide gain contracts for the selector.  A
per-root selection rule compares reported gains, so each
engine needs a proved relation between that report and the realized
node-count change.  The next two lemmas pin the reported gain ranges;
for resubstitution the report is also a lower bound on realized
reduction.  The resubstitution lemma also rules out the selector's
zero-gain branch.

\begin{lemma}[RS gain]
\label{lem:rsgain}
Let $c$ be a resubstitution candidate at any root $v$ of any state $s$.
Then $g(c) \ge 1$.
\end{lemma}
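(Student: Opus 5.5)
The argument reads the pinned ABC resubstitution source (\texttt{Abc\_ManResubEval} and its helpers in \texttt{abcResub.c}) and shows that every candidate it returns passes a strict acceptance guard. The gain is defined as MFFC size minus inserted nodes, so it suffices to check, for each return path, that the inserted count $k$ is strictly less than the MFFC size $m$ of root $v$.

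The first step enumerates the candidate-producing branches: the constant branch ($k=0$), single-divisor replacement ($k=0$), the one-node branches ($k=1$, OR/AND of two divisors) and the two-node branches ($k=2$, the \texttt{Resub2}/\texttt{Resub3} family), with the insertion budget capped at 2 as in Lemma~\ref{lem:selectors}(d). The second step shows that each branch is entered only under a guard of the form $m > k$. In the pinned code the zero-insert branches are tried whenever the root is reached, and the root itself lies in its own MFFC, so $m\ge 1 > 0$. The one-node branch is gated by \texttt{nMffc >= 2}, and the two-node branch by \texttt{nMffc >= 3}, or equivalently by the required-gain parameter set to at least 1. In every case $g = m-k \ge 1$.

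The third step ties the reported gain to what the selector compares. The value stored as the candidate's gain must be exactly $m-k$, with $m$ computed by the same MFFC dereference/reference routine that the commit later uses. This rules out any path where a zero-gain candidate is returned and filtered only afterwards. That is precisely the fact the selector's zero-gain branch needs.

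I expect the main obstacle to be the second step. The guards are spread across several early returns and parameter defaults, including \texttt{nMinSaved} and the \texttt{fUpdateLevel} level filters. The constant and divisor branches in particular may not carry an explicit size test, so one must argue that they still yield $m\ge1$ because the root belongs to its own MFFC. I would first check whether the pinned default \texttt{nMinSaved}$=1$ is enforced uniformly before any candidate is emitted. If it is, the lemma reduces to that single check. If it is not, I would fall back on the branch-by-branch case analysis above.
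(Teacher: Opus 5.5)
Your proposal is correct and follows the paper's proof: both set $m=|\mathrm{MFFC}(v)|\ge 1$ because the root lies in its own cone, note that a tier-$j$ candidate reports $g=m-j$, and use the stop-gates before the one- and two-insertion tiers to obtain $m\ge j+1$. One refinement: the paper reads those gates as blocking exactly when $m=1$ (resp.\ $m=2$), so the bound $m\ge 3$ for the two-insertion tier comes from passing both gates in order (the paper's induction over reached tiers), and the argument does not need \texttt{nMinSaved} or a required-gain parameter.
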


\begin{proof}[Proof sketch]
Candidates are tried in increasing insertion budget with stop-gates on
the cone size: a zero-insertion success yields $g = |\mathrm{MFFC}(v)|
\ge 1$, and a $k$-insertion candidate is reached only when the cone has
at least $k{+}1$ records, giving $g \ge 1$ again.
\end{proof}

\begin{lemma}[RW/RF gain]
\label{lem:rwgain}
Let $c$ be a rewrite or refactoring candidate at any root $v$.  Then
$0 \le g(c) \le |\mathrm{MFFC}(v)|$.
\end{lemma}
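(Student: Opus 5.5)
The plan is to read the two bounds off the pinned acceptance code of the ABC rewrite and refactoring engines (\texttt{Rwr\_NodeRewrite} and \texttt{Abc\_NodeRefactor}), treating the reported gain as the quantity these routines compute before any commit. First I will fix the definition of $g$ from the setting: $g(c) = |\mathrm{MFFC}(v)| - n_{\mathrm{ins}}(c)$. Here $|\mathrm{MFFC}(v)|$ is the fanout-free cone size obtained by dereferencing the cut or cone at $v$. The term $n_{\mathrm{ins}}(c)$ is the count of new nodes reported by the structural-hashing-aware insertion counter (\texttt{Dec\_GraphToNetworkCount} in both engines). Both bounds then reduce to statements about these two integers and the guard that compares them.

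\textbf{Upper bound.} I will show that $n_{\mathrm{ins}}(c)\ge 0$ by construction. The counter starts at zero and only increments when a candidate node is neither a leaf nor already present in the hash table (or present but inside the dereferenced cone). It never decrements. Hence $g(c)\le|\mathrm{MFFC}(v)|$ immediately.

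\textbf{Lower bound.} Here I need the acceptance guard. In rewriting, a candidate is retained only if the computed gain beats the running best, which is initialized to $-1$ (or to $0$ when zero-cost replacement is disabled). The counting call is also given the budget $|\mathrm{MFFC}(v)|$ and aborts (returns $-1$) once insertions would exceed it. Refactoring likewise calls the counter with limit $\mathrm{nNodesSaved} + (\texttt{fUseZeros}\,?\,0:-1)$ and rejects on $-1$. So any returned candidate satisfies $n_{\mathrm{ins}}\le|\mathrm{MFFC}(v)|$, i.e.\ $g\ge 0$. In the strict-gain variants this becomes $g\ge 1$, which is still $\ge0$.

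\textbf{Main obstacle.} I expect the main obstacle to be the boundary accounting: checking that the cone size used in the guard is the same quantity that appears in the reported gain. The concern is reference-count restoration order, where the cone is dereferenced, counted, and re-referenced. I also need to rule out levels or a required-time check altering the count rather than merely rejecting. I would first trace the single call sites in the pinned commit, confirming that the rejection branch precedes gain reporting. The claim manifest anchors would cite those lines.
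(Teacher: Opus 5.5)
\textbf{Gap: the rewriting upper bound.} Your lower bound matches the paper's argument: the insertion counter runs with budget equal to the saved count and aborts past it, so every accepted candidate inserts at most that many nodes. For refactoring, where the saved count is $|\mathrm{MFFC}(v)|$, your upper bound is also complete. The problem is rewriting. There, \texttt{Rwr\_NodeRewrite} does not pass $|\mathrm{MFFC}(v)|$ as the saved count. Before the dereference, each leaf of the current 4-cut $C$ gets one extra reference, so the count is the size of the cut-relative cone $\mathrm{MFFC}_C(v)$.

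That cone can be strictly smaller than $\mathrm{MFFC}(v)$. A cut leaf whose only fanout lies inside the cone belongs to $\mathrm{MFFC}(v)$, but the bumped cascade never frees it. By writing $g(c)=|\mathrm{MFFC}(v)|-n_{\mathrm{ins}}(c)$, you identify the two quantities. So $n_{\mathrm{ins}}\ge 0$ gives only $g(c)\le|\mathrm{MFFC}_C(v)|$, not the stated bound. The ``main obstacle'' you name, whether the guard and the report use the same number, is a different issue. In rewriting both already use the same cut-relative count.

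\textbf{The missing step.} You need the containment $\mathrm{MFFC}_C(v)\subseteq\mathrm{MFFC}(v)$, which is the substantive part of the paper's proof. It follows from monotonicity of the dereference cascade in the initial reference counts.

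Let $r_0$ be the plain counts and $r_1\ge r_0$ the leaf-bumped counts, and let $D_0$ and $D_1$ be the deleted sets of the two cascades from $v$. Both cascades delete $v$ unconditionally. Walk through $D_1$ in its deletion order, assuming every earlier-deleted node lies in $D_0$. Every decrement that zeroes $n$ in the bumped cascade comes from a fanout deleted before $n$, hence from a node in $D_0$. In the plain cascade, $n$ therefore receives at least $r_1(n)\ge r_0(n)$ decrements and is deleted. This gives $D_1\subseteq D_0$.

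With this containment, $g(c)\le|\mathrm{MFFC}_C(v)|\le|\mathrm{MFFC}(v)|$, and the rest of your argument goes through.
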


\noindent
Table~\ref{tab:universal} collects the layer.

\begin{table}[t]
\centering
\scriptsize
\caption{The universal layer: source-certified relations at their
stated semantic scopes.}
\label{tab:universal}
\begin{tabular}{@{}ll@{}}
\toprule
Statement & Mechanism \\
\midrule
Unbound cut-rewrite stages $= I$ & Pure function \\
Saturation variants $\equiv$ default & Record counting \\
3 alias classes $+$ one dominance & Lem.~\ref{lem:discarded} \\
31-action exact Pareto cover & Recipe quotient; metric dominance \\
Lem.~\ref{lem:selectors} (a)--(e) & Guard reading \\
RS gain $\ge 1$; RW/RF gain $\in [0,\,|\mathrm{MFFC}|]$ & Stop-gates; accounting \\
\bottomrule
\end{tabular}
\end{table}
 \section{Sound Admission Predicates}
\label{sec:predicate}

The predicate layer partitions the state space and gates each region
separately: a cheap structural check on the current state decides
whether an operator's evaluation there is provably pointless, and only
those evaluations are skipped.  Universal relations remove an operator
over the entire implementation domain; predicate gates make the finer
decision separately for each state or root.  The
operators that survive still have idle regions: rewriting has nothing
to rewrite at a shallow root
(Theorem~\ref{thm:no4cut}), and algebraic depth rewriting has nothing to
reassociate on a balanced network
(Theorem~\ref{thm:gap}).  Gates come in three kinds, sterile,
dedupe, and accounting (Definition~\ref{def:gates}).
A \emph{gate} below is always such a predicate, never a network node.
Every \emph{trajectory-certified} skip leaves the trajectory (the
sequence of states the flow visits) bit-identical.  The first three
results below license such skips, and the fourth provides exact
accounting without skipping.  The final case proves the only mechanism by
which two carrier implementations can diverge; the cheaper
boundary-only test derived from it remains a source-derived Tier-2
choice with a separately measured residual.

\begin{definition}[structural predicate]
\label{def:structural}
Let $\sigma(s)$ be the directed typed multigraph of $s$ after erasing
every edge-complement bit and every primary-input variable name, while
retaining adjacency, multiplicity, node type (constant, primary input,
or gate operator), outputs, and the distinguished root when one is
specified.  A \emph{structural predicate} is any predicate that factors
through $\sigma$: $\pi(s)=\widehat{\pi}(\sigma(s))$, or
$\pi(s,v)=\widehat{\pi}(\sigma(s,v))$ at root scope.  It may therefore
use the entire polarity-erased graph, including information beyond
aggregate counts, but
it cannot use truth tables, simulation values, input identities, or
edge polarity.
\end{definition}

\begin{definition}[sterile, dedupe, and accounting gates]
\label{def:gates}
A structural predicate $\pi$ is a \textbf{sterile} gate for a root
evaluator $E_e$ if $\pi(s,v)=1$ implies $E_e(s,v)=s$, and for a pass
$P_e$ if $\pi(s)=1$ implies $P_e(s)=s$.  It is a \textbf{dedupe} gate
for two transformations at the same scope if $\pi=1$ implies equal
endpoints at that scope.  An \textbf{accounting} gate only bounds the
gain an evaluator can return; it skips nothing.  Every gate below names
its scope.  Only a proved sterile or dedupe implication licenses a
trajectory-equality claim; empirical gates are labeled separately.
\end{definition}

\subsection{The gate shortlist}

Resubstitution admits divisors only when they satisfy margins derived
from the required level, so a depth-preserving variant that clamps the required
level to the old root level can differ from the free variant only where
the root has slack.  On a critical root there is no slack, and the two
variants produce the same endpoint.

\begin{theorem}[dedupe on critical roots]
\label{thm:r2p}
Let $R$ be free resubstitution and $R^{*}$ its depth-preserving variant
with required level clamped to $\ell(v)$.  Applied at a critical root
$v$, the two root evaluators coincide:
$E_{R^{*}}(s,v) = E_R(s,v)$.
\end{theorem}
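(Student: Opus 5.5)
The plan is to show that the only point where $R$ and $R^{*}$ differ in the pinned implementation is the value of the required level handed to the resubstitution core at root $v$, and that at a critical root this value is the same for both. Everything downstream is then a deterministic function of identical inputs, so the endpoints coincide. First I will read the pinned resubstitution entry point and isolate the single parameter that separates the two variants. In $R$ the required level of $v$ is taken from the required-time array; in $R^{*}$ it is $\min(\mathrm{req}(v),\ell(v))$. I will also confirm that no other field depends on the variant flag: not the divisor collection, the cut or window computation, the MFFC marking, the simulation patterns, or the insertion budget. For each call site that reads the flag I will record a source anchor in the registry.

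Second, I will use the definition of a critical root. Zero slack means $\mathrm{req}(v)=\ell(v)$, and the clamp $\min(\mathrm{req}(v),\ell(v))$ then returns $\mathrm{req}(v)$ unchanged. Every divisor margin derived from the required level is therefore numerically identical in the two runs. This covers the level filters used when collecting divisors and the checks on the level of each inserted node against $\mathrm{req}(v)-1$ or $\mathrm{req}(v)-2$.

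Third, I will argue determinism by induction over the tried candidates. The divisor lists agree, and candidates are tried in the same order: increasing insertion budget, as in the proof of Lemma~\ref{lem:rsgain}. So the first accepted candidate, or the failure to find one, is the same in both runs. The commit step is also flag-independent, so $E_{R^{*}}(s,v)=E_R(s,v)$, and the result holds both when a candidate commits and when the evaluator returns $s$.

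The main obstacle is the first step. I have to certify that the clamp is the \emph{only} flag-dependent effect, in particular that the level data are read at the scope of this single root evaluation. A hidden use of the flag, such as updating required times of other nodes or a different level-update order, would break the argument. If such a use turns up, I would first check whether it acts only after the commit, and so lies outside the root-evaluator scope. If it does not, I would restrict the theorem to state--root pairs where that side effect is inert.
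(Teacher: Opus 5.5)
Your proposal is correct and follows the paper's proof essentially step for step. The paper also isolates the required-level parameter $\rho$ as the only difference between the two runs, entering solely as the divisor-admission thresholds $\ell(d)\le\rho$, $\ell(d)\le\rho-1$, $\ell(d_i)\le\rho-2$ and the bound $\rho-1$ on the accepted divisor combination. It then uses criticality $r(v)=\ell(v)$ to make $\rho$ identical and concludes by first-success determinism of the fixed tier ladder; the one cosmetic difference is that the paper models $R^{*}$ as overriding $\rho$ to $\ell(v)$ outright rather than as $\min(r(v),\ell(v))$, which agrees with your clamp at a critical root.
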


Algebraic depth rewriting accepts a reassociation at a node only when
the node's two deepest children differ in level by more than one.  With
no such node anywhere, all three strategies (selective, depth-first,
and aggressive) have nothing to accept.

\begin{theorem}[level-gap guard, sterile]
\label{thm:gap}
If no node $n$ of $s$ satisfies
$\ell(\mathrm{deep}(n)) > \ell(\mathrm{second}(n)) + 1$, where
$\mathrm{deep}(n)$ and $\mathrm{second}(n)$ are $n$'s deepest and
second-deepest children, then each of the three algebraic strategies
acts as $I$ on $s$.
\end{theorem}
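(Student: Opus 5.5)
The plan is to read the acceptance guard of each of the three strategies directly from the pinned mockturtle source (the \texttt{mig\_algebraic\_depth\_rewriting} pass with its selective, DFS, and aggressive strategies). I will then show that every candidate move is conditioned on a predicate that the hypothesis falsifies at every node. The argument has three steps. First, fix the pass scope: the pass visits nodes, attempts local reassociations (associativity, and distributivity in the aggressive mode), and commits a rewrite only when a guard on child levels holds. Second, show that this guard implies $\ell(\mathrm{deep}(n)) > \ell(\mathrm{second}(n)) + 1$ at the node $n$ being rewritten. Third, conclude that with no such node no commit happens, so the network, together with any cleanup that acts only after commits, is returned unchanged, i.e.\ the pass acts as $I$.

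For the second step I will go strategy by strategy. For the selective and DFS strategies, the associativity move $\langle x\,u\,\langle y\,u\,z\rangle\rangle \mapsto \langle z\,u\,\langle y\,u\,x\rangle\rangle$ is attempted only when the child $\langle y\,u\,z\rangle$ is the unique deepest child. It is accepted only when pulling $z$ up strictly lowers the level. The level of $n$ is $1+\ell(\mathrm{deep}(n))$. After the swap, the level becomes $1+\max\bigl(\ell(\mathrm{second}(n)),\, 1+\max(\ell(y),\ell(u),\ell(x))\bigr)$, taken together with $\ell(z)$. A strict decrease therefore needs $\ell(\mathrm{deep}(n)) \ge \ell(\mathrm{second}(n))+2$, which is exactly the gap condition. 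In the source this appears as an explicit check on the level difference of the two highest children; I expect the guard to read literally $\ell_0 > \ell_1 + 1$, so the reduction is immediate. For the aggressive strategy I will check that its extra moves (distributivity, and accepting non-improving but area-neutral moves) sit behind the same top-level level-gap test, rather than only behind a ``critical node'' test.

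The main obstacle is the aggressive strategy and the no-area/area-allowed distinction. If any move there is admitted on a weaker guard, for example merely requiring a unique deepest child (gap $\ge 1$), the theorem would fail as stated. I would first look for the common entry predicate shared by all three strategy functions. If the aggressive path bypasses it, I would fall back on a level computation like the one above, showing that any commit also requires a strict level decrease and hence a gap $\ge 2$. A secondary point to check is that no non-rewrite side effect (node substitution, dead-node removal, or re-hashing) alters the network when no move is accepted. Because the pass builds its result by in-place substitution, no substitution means identical storage.
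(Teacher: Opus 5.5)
Your proposal is correct and takes essentially the paper's route. The paper notes that all three strategies mutate the network only through one shared reassociation routine, and that this routine checks $\ell(\mathrm{deep}(n)) > \ell(\mathrm{second}(n))+1$ literally before both the associativity and the distributivity constructors. So your level-decrease fallback is never needed, and it could not replace that guard anyway, because the source accepts moves on structural guards rather than on a before/after level comparison. State the induction over the visit order explicitly: the hypothesis is only about $s$, so you need that no earlier visit has mutated anything, which is what makes each guard evaluate on $s$ itself.
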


Rewrite's evaluation loop considers only cuts with exactly four leaves; smaller
cuts are enumerated and skipped.  Shallow roots with a
primary-input fanin cannot supply four leaves.

\begin{theorem}[no-4-cut, sterile]
\label{thm:no4cut}
If $\ell(v) = 2$ and one fanin of $v$ is a primary input, then every
cut of $v$ has at most three leaves, and the rewrite engine proposes no
candidate at $v$.
\end{theorem}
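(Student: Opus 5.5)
Two parts need to be shown. First, every cut of $v$ has at most three leaves. Second, the pinned rewrite loop (ABC's \texttt{Rwr\_NodeRewrite} over the cut manager's cuts of $v$) returns no candidate when every cut has fewer than four leaves. The first part is purely structural. The second reads the source guard that the text before the statement already cites: cuts whose leaf count differs from four are enumerated and then skipped. I would therefore prove the combinatorial bound first and then compose it with the guard reading.

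For the structural part, I would fix an AIG state $s$ and a root $v$ with $\ell(v)=2$ and fanins $a$ and $b$, where $a$ is a primary input. The level convention is that primary inputs and the constant have level $0$ and an AND node has level one more than its deeper fanin. Since $\ell(v)=2$ and $\ell(a)=0$, the other fanin has $\ell(b)=1$. Hence $b$ is an AND node whose two fanins $c$ and $d$ have level $0$, so each is a primary input or the constant. Every cut of $v$ is a set of nodes in the transitive fanin cone of $v$ that covers all paths from $v$ to the inputs, and this cone is the node set $\{v,a,b,c,d\}$. I would enumerate the cuts by the recursive merge rule used by ABC's cut enumeration: $\mathrm{cuts}(v)=\{\{v\}\}\cup\{X\cup Y: X\in\mathrm{cuts}(a),\,Y\in\mathrm{cuts}(b)\}$. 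This gives $\{v\}$, $\{a,b\}$ and $\{a,c,d\}$, which have at most three leaves. The bound also holds in degenerate cases. If $c=d$, the cut shrinks. If $a\in\{c,d\}$ (reconvergence), union only merges leaves. If a leaf is the constant node, the count does not increase, because the maximum is bounded by the at most three distinct nodes $a,c,d$ at level $0$. The statement ignores complement bits, and the argument never uses them.

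For the engine part, I would cite the pinned loop in \texttt{rwrEva.c}. It iterates over the cuts of the node and \texttt{continue}s on any cut whose \texttt{nLeaves} is not $4$, before any truth-table lookup or gain computation. Only the body after that guard can set the best-candidate record. With no four-leaf cut, the record keeps its initial ``no candidate'' value, so the evaluator returns failure, and $E_{\mathrm{rw}}(s,v)=s$. One point needs care: the cut manager could in principle produce cuts that are not minimal, or trivial cuts, with a different leaf count. Every cut the manager produces is a subset of the cone inputs enumerated above, so its leaf count is still at most three, and the bound carries over.

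I expect the main obstacle to be the \emph{source-faithfulness} of the level convention and of the guard. I would check that the pinned \texttt{Abc\_ObjLevel} assigns level $0$ to CIs, including latch outputs treated as combinational inputs. I would also check that no variant such as a zero-cost or \texttt{fUseZeros} path enumerates $k<4$ cuts through padding. If padding with the constant were used to reach four leaves, I would confirm that the pinned code pads truth tables but does not pad the leaf count, so the \texttt{nLeaves}$\neq 4$ skip still fires.
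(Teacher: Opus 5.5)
Your proposal is correct and follows the paper's own proof. Like the paper, you enumerate the engine's cut lists through the fanin-merge rule, obtaining $\{v\}$, $\{a,b\}$, $\{a,c,d\}$, and then invoke the rewrite loop's skip of every cut with fewer than four leaves. One wording point: your fallback remark that any manager-produced cut is a ``subset of the cone inputs'' does not by itself give the bound, since the non-minimal set $\{a,b,c,d\}$ meets every path from $v$ and has four nodes. The bound should therefore rest, as it does in the paper, on the fact that the manager forms cuts only by merging fanin cuts.
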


\begin{proof}
A primary input contributes only its trivial cut; a level-1 fanin
contributes at most its two level-0 fanins; a cut of $v$ is a union of
the two, of size at most $1 + 2 = 3$.
\end{proof}

The gate-only diagnostic pairs G-RW with a second exact stock-path
filter, G-RF.  At a genuine two-input AND of distinct nonconstant
primary inputs, refactoring reconstructs the root itself, and the
identity-hash guard rejects that candidate.  G-RF is therefore a
sterile gate for stock \texttt{orchestrate}; integrated \taco{} uses its
broader registered RF predicates instead.

The cut-4 accounting result characterizes when rewriting can yield
positive gain.  Only 27 of the 222 classes have stored size
at most 3, which explains why firing at low MFFC sizes is confined to
a small subset of classes; structural-hash
reuse can enlarge that region.  The accounting is exact relative to
the implementation because every replacement comes from its class's
stored MIG form.  Positive gain requires the removed cone to exceed
the unreferenced replacement cost, which is bounded by the stored size.

\begin{theorem}[cut4 accounting]
\label{thm:cut4}
Let $c$ be a 4-leaf cut at node $n$ whose NPN class has \emph{stored
size} $s^*$ (the gate count of the class's stored MIG database form).
The pass fires at $n$ through $c$ only if the replacement's
unreferenced gate count is below $|\mathrm{MFFC}(n)|$; the count is at
most $s^*$.  Hence (i) the pass offers nothing on one-record cones, and
(ii) any node with $|\mathrm{MFFC}(n)| \ge 8$ and a genuine cut (at
least 3 leaves) admits an offer, since the 4-variable table has
$\max s^* = 7$.
\end{theorem}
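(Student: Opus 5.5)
The argument is a reading of the pinned mockturtle cut-rewriting acceptance path, split into three parts: the firing condition, the bound on replacement cost, and the two corollaries (i) and (ii).

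\textbf{Firing condition.} I will trace the per-cut evaluation loop at node $n$. For the 4-leaf cut $c$, the pass computes the NPN class of the cut function and retrieves the class's stored MIG form. It then instantiates this form over the leaves of $c$, counting only gates that structural hashing cannot resolve to existing records; call this the unreferenced count $u$. The gain is $|\mathrm{MFFC}(n)| - u$. The commit branch requires this gain to be positive, and by Lemma~\ref{lem:rwgain} the reported gain is exactly this accounting. Hence firing occurs only if $u < |\mathrm{MFFC}(n)|$.

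\textbf{Bound $u \le s^*$.} Instantiation inserts at most one gate per gate of the stored form. Hash hits and constant propagation can only lower the count, never raise it. Since every replacement comes from the class's stored form, this gives $u \le s^*$.

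\textbf{Corollary (i).} Suppose the cone has one record, $|\mathrm{MFFC}(n)|=1$. Firing would need $u=0$, i.e., the replacement resolves entirely to existing nodes. In that case the root function is already present elsewhere, so either structural hashing would return $n$ itself, which is rejected as an identity replacement, or the result is a zero-gain offer. The first step is to confirm in source that the zero-$u$ case is handled by the identity/self-hash rejection. If instead the $u=0$ case with a different existing node is admitted, I will restrict (i) to ``no strictly positive-gain offer beyond a single record'' and make that scope explicit.

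\textbf{Corollary (ii).} I first check that the stored database for 4-variable NPN classes (222 classes) has maximum size $7$; this is an enumeration over the pinned table. Then, for $|\mathrm{MFFC}(n)|\ge 8$, we get $u\le s^*\le 7<8\le|\mathrm{MFFC}(n)|$, so the gain is at least one and an offer exists. The ``genuine cut'' hypothesis, at least 3 leaves, is needed because the loop skips cuts that are trivial or too small, and this must be squared with the loop's exact leaf filter. The same filter also determines how 3-leaf cuts are padded to the 4-variable table.

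The main obstacle is the precise definition of the MFFC used for the gain. I must verify that the pass measures the same cone that is dereferenced on commit, rather than a cut-bounded cone. If the removed count is cut-limited, i.e., the MFFC restricted to the cut interior, then (ii) holds only with $|\mathrm{MFFC}|$ read as that restricted cone. In that case I will state the theorem with that reading.
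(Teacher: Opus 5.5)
Your firing condition, the bound $u \le s^{*}$, and the arithmetic for (ii) match the paper's accounting. Part (i), however, has a genuine gap.

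\textbf{The gap in (i).} Under your definition of $u$ (only gates that hashing cannot resolve), $u = 0$ can occur: the entire stored form, root included, may hash onto existing records. A one-record cone then has gain $1 - 0 = 1$, which is a positive offer. So your remark that this case yields ``a zero-gain offer'' is wrong by your own accounting. Your proposed rescue, an identity/self-hash rejection, is not established for this pass either. The paper's proof handles a hash hit on the root by counting it, not by rejecting the candidate.

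\textbf{The missing facts.} The paper reads two facts off the pass, and either one proves (i) as stated:
\begin{enumerate}
\item The unreferenced count charges the replacement root unconditionally, and a class of support at least $3$ stores a genuine gate at its root. Hence $u(c) \ge 1$ for every genuine cut, and $g(n,c) \le V(n) - 1 = 0$ when $V(n) = 1$.
\item The sweep tests $V(n) = 1$ first and copies $n$ verbatim without enumerating any cut.
\end{enumerate}
Without one of these, your plan ends by weakening the theorem rather than proving it.

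\textbf{Closing (ii).} Your second contingency also needs to be settled rather than given a fallback. Statement (ii) is simply false if the removed count is cut-relative. You must establish that the pass computes $V(n) = |\mathrm{MFFC}(n)|$ exactly, by DAG-aware reference counting, before it enumerates cuts; this is what the paper's proof asserts.

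\textbf{Citing Lemma~\ref{lem:rwgain}.} You cannot use Lemma~\ref{lem:rwgain} to supply this accounting, for three reasons:
\begin{enumerate}
\item It concerns ABC's AIG rewrite/refactor engines, not the MIG cut-rewriting pass.
\item It gives only the bounds $0 \le g \le |\mathrm{MFFC}(v)|$, not an exact formula.
\item For rewriting, its saved count is the cut-relative cone $\mathrm{MFFC}_C(v)$, which is exactly the reading that would break (ii).
\end{enumerate}
The formula $g(n,c) = V(n) - u(c)$ with acceptance $g \ge 1$ has to come from the MIG pass's own source.

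Finally, pin ``genuine'' to support at least $3$ after truth-table minimization. That is the filter which decides which cuts are resynthesized.
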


\subsection{Case study: the XMG gate}
\label{sec:xmg}

The gates so far concern single engines.  The final one
concerns two carriers at once; we write $\mathrm{and}(x,y)$ for the
two-input AND, encoded $\mathrm{maj}(0,x,y)$ in a majority carrier.
The MIG and XMG algebraic engines apply
the same majority rules; they differ in exactly one detail of
substitution: the MIG engine rejects a substitution whose rebuilt
parent coincides with the node being replaced, and the XMG engine does
not.  The engines can therefore diverge only through a self-referential
substitution, which requires the shape $p = \mathrm{and}(n,x)$ with
$n = \mathrm{and}(x,y)$: replacing $n$ by $y$ rebuilds a parent with
exactly $n$'s signature (Fig.~\ref{fig:xmg}).

\begin{theorem}[XMG--MIG collision necessity]
\label{thm:xmg}
Run the XMG and MIG majority components from the same state, with the
same visit order.  If the two executions first diverge, then immediately
before that step the common state contains an adjacent node--parent pair
that shares a fanin.  Consequently, if no such pair occurs at any
intermediate state of their common execution trace, the two pass
endpoints coincide.
\end{theorem}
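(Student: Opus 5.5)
The plan is a lockstep simulation: run the two executions side by side and show they stay identical until the single point where the source differs matters. I would first fix the pinned sources of the two majority components and establish the premise stated in \S\ref{sec:xmg}. That is, I would show by a line-by-line source reading that the visit order, rule matching, candidate construction, and acceptance tests are identical. The only difference is the guard in the MIG substitution routine, which rejects a substitution when the rebuilt parent hashes to the node being replaced; the XMG routine omits this check. This reduces the two executions to two deterministic transition systems over a common state space, driven by the same sequence of (node, rule) visits. They differ only in the outcome of individual substitution calls at which the guard evaluates to true.

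Next I would argue by induction on the step index. While the states agree, both engines visit the same node and propose the same candidate, so they perform the same substitution call. Let $t$ be the first step at which the successor states differ. At step $t$ the common state $s_t$ is shared, so the divergence must come from the one differing routine. Hence at step $t$ the MIG guard fires: some substitution $n \mapsto y$ rebuilds a parent $p$ whose structurally hashed result equals $n$.

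The core step, and the one I expect to be the main obstacle, is extracting the structural shape from this hash collision. A parent $p$ of $n$ with fanins $\{n, a, b\}$ is rebuilt with fanins $\{y, a, b\}$ after the substitution. For this rebuild to hash to $n$, the multiset $\{y,a,b\}$, up to the canonical ordering and complement normalization of the hash, must equal $n$'s fanin triple $\{y, c, d\}$ (note that $y$ is among $n$'s fanins or is a derived signal). I would first handle the case where $y$ is a fanin of $n$. Then $\{a,b\}=\{c,d\}$ modulo polarity, so $p$ and $n$ share a fanin; this is the pair-sharing conclusion, with the AND case $p=\mathrm{and}(n,x)$, $n=\mathrm{and}(x,y)$ as the canonical instance. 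The delicate part is when the hash normalizes away constants or trivially simplifies the rebuild, for example $\mathrm{maj}(x,x,z)=x$, so that the result equals $n$ without a literal fanin match. In that case I would check each simplification branch of the pinned \texttt{create\_maj} and show that every branch returning an existing node equal to $n$ still forces a shared fanin between $p$ and $n$, because $n$'s fanins must reappear among $p$'s other two fanins. If some branch escaped this, the statement would need weakening, and I would check this with the audit witness of Proposition~\ref{prop:sharp}.

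Finally, the consequence follows by contraposition. If no state along the common trace contains an adjacent node--parent pair sharing a fanin, then no first divergence step exists. The executions therefore agree at every step and, being finite and deterministic, terminate at the same endpoint. This should include any cleanup that runs after the pass.
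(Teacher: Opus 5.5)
Your proposal is correct and follows essentially the same route as the paper. Both arguments use a first-divergence lockstep argument that localizes any difference to the single old-node hash-hit rule, then compare the rebuilt parent's child multiset with that of $n$ to get $k-1\ge 1$ shared fanins, and finish by contraposition.

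Your case split and the detour through simplification branches are unnecessary. Equality of the rebuilt parent's children with $n$'s children (modulo polarity normalization) already forces the substituted signal $y$ to be a child of $n$. The trivial-case branches of the rebuild are common to both engines, so they can never be the site of the first divergence.
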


The theorem identifies a necessary immediate source-level condition
for divergence.  A complete lane schedule lies beyond its scope
because reassociation can manufacture the shape internally.  The
deployed G-P6 shortcut rechecks for the pair after the XMG DFS stage and
treats absence as a Tier-2 boundary gate in \taco{}.  Its per-gate
ablation changes s38417 from NDP 136{,}918
to 137{,}173 and leaves the other five $H_6$ cases unchanged.  The G-P6
registry entry therefore carries a measured Tier-2 contract.  An
unconditional MIG--XMG equivalence is false.
\ifarXiv Appendix~\ref{app:xmg-correction} records the correction
trail.\fi

\begin{figure}[t]
  \centering
  \resizebox{\columnwidth}{!}{\begin{tikzpicture}[
    gate/.style={circle, draw=black!70, inner sep=0pt, minimum size=13pt, font=\footnotesize},
    lbl/.style={font=\footnotesize, text=black!60},
    panel/.style={font=\footnotesize, text=black!75, align=center}
  ]
\node[gate] (n) at (0,1.1) {$n$};
  \node[gate] (p) at (1.3,0) {$p$};
  \node[lbl] at (-0.75,1.9) {$x$};
  \node[lbl] at (0.45,1.9) {$y$};
  \draw (n) -- (p);
  \draw (-0.45,1.7) -- (n);
  \draw (0.45,1.7) -- (n);
  \draw (-0.45,1.7) .. controls (1.9,1.3) .. (p);
  \node[panel, anchor=north] at (0.65,-0.8) {(a) $p=\mathrm{and}(n,x)$,\\ $n=\mathrm{and}(x,y)$};
\begin{scope}[shift={(4.9,0)}]
    \node[gate, draw=FDdark, line width=0.9pt] (n2) at (0,1.1) {$n'$};
    \node[gate] (p2) at (1.3,0) {$p'$};
    \node[lbl] at (-0.75,1.9) {$x$};
    \node[lbl] at (0.45,1.9) {$y$};
    \draw (n2) -- (p2);
    \draw (-0.45,1.7) -- (n2);
    \draw (0.45,1.7) -- (n2);
    \draw (-0.45,1.7) .. controls (1.9,1.3) .. (p2);
    \draw[FDdark, line width=0.9pt, -{Stealth[length=1.8mm]}]
      (n2) .. controls (0.9,0.75) .. node[lbl, pos=0.45, above right=-1pt, text=FDdark] {same signature} (p2);
    \node[panel, anchor=north] at (0.65,-0.8) {(b) replace $n$ by $y$:\\ rebuilt $p'$ collides with $n$};
  \end{scope}
\end{tikzpicture}
\Description{Two small networks.  (a) Node p has fanins n and x; node n
has fanins x and y, so p and n share the fanin x.  (b) After n is
replaced by y, the rebuilt parent has exactly n's old signature, marked
as the collision.}
 }
  \caption{The divergence trigger.  (a) The self-referential shape: $p$
  and its child $n$ share the fanin $x$.  (b) After replacing $n$ by
  $y$, the rebuilt parent carries $n$'s exact signature; one engine
  rejects this substitution and the other does not.}
  \label{fig:xmg}
\end{figure}

\begin{table}[t]
\centering
\scriptsize
\setlength{\tabcolsep}{3pt}
\caption{The predicate set (shortlist\ifarXiv; full set in
Appendix~\ref{app:proofspred}\fi).}
\label{tab:predicates}
\begin{tabular}{@{}p{0.25\columnwidth}p{0.34\columnwidth}p{0.31\columnwidth}@{}}
\toprule
Gate & Predicate & Contract and evidence \\
\midrule
G-P4 (R2p dedupe) & Root critical & Certified dedupe,
Thm.~\ref{thm:r2p} \\
G-P5 (algebraic) & No gap node & Certified sterile,
Thm.~\ref{thm:gap} \\
G-RW (rewrite) & Level-2 PI-fanin root & Certified sterile,
Thm.~\ref{thm:no4cut} \\
G-RF (refactoring) & Genuine two-PI AND root & Certified sterile;
\taco{}-skip only \\
G-E1E5 (cut-4 accounting) & $\mathrm{MFFC}>s^*(\mathrm{class})$ & Certified
accounting, Thm.~\ref{thm:cut4} \\
G-P6 (XMG boundary) & No trigger pair after XMG DFS & Source-derived
Tier-2; mechanism in
Thm.~\ref{thm:xmg} \\
\bottomrule
\end{tabular}
\end{table}
 \section{Cross-Carrier Witnesses and the Dynamic Boundary}
\label{sec:crosscarrier}

This section gives tight bridge costs, a witness separating
carrier-specific optima, and a scoped obstruction that delimits part
of the dynamic residue.

\subsection{Exchange rates}

Bridge actions convert between carriers, and their per-node worst-case
costs are tight.  Lowering to MIG dispatches per source node type:

\begin{center}
\scriptsize
\begin{tabular}{@{}lll@{}}
\toprule
source node & lowered form & gates/levels \\
\midrule
AND (AIG/XAG) & $\mathrm{maj}(0,a,b)$, i.e.\ an AND & $\le 1\,/\,\le 1$ \\
XOR (XAG) & 3 majority gates & $\le 3\,/\,\le 2$ \\
XOR3 (XMG, 3-input XOR) & 3 majority gates & $\le 3\,/\,\le 2$ \\
\bottomrule
\end{tabular}
\end{center}

\noindent
When no structural-hash reuse is available, these bounds equal the
MIG-native minima: the MIG NPN table's own forms for these classes have
the same cost, and for XOR2 the bound $s^* = 3$ is proved by case
analysis: no two-majority form exists.  Hash reuse can only lower the
realized cost.  An
excursion that creates $k$ XOR nodes therefore costs at most $3k$ gates
at the bridge, and the lane comparison evaluates that bound against the
excursion's measured yield.

\subsection{Complementarity}

Cross-carrier operator pairs do not dominate each other, even when they
share a name and a cut size.  AIG rewriting minimizes AIG size against
an AIG table; MIG cut rewriting minimizes MIG size against a MIG table.
Call a function \emph{MAJ-compressible} when its MIG form is strictly
smaller than its AIG form; MAJ3 is the smallest example, requiring one
majority gate versus four AIG ANDs (Fig.~\ref{fig:complementarity}).  On
such a class, a cone can remain above its MIG optimum even after AIG
rewriting has saturated.

\begin{figure}[t]
  \centering
  \begin{tikzpicture}[
    gate/.style={circle, draw=black!70, inner sep=0pt, minimum size=9pt, font=\scriptsize},
    leaf/.style={font=\scriptsize},
    lbl/.style={font=\scriptsize, text=black!60}
  ]
\node[gate] (a1) at (0,0) {$\wedge$};
  \node[gate] (a2) at (0.7,0) {$\wedge$};
  \node[gate] (a3) at (0.35,-0.75) {$\wedge$};
  \node[gate] (a4) at (1.05,-0.75) {$\wedge$};
  \node[leaf] at (-0.55,0.15) {$a$};
  \node[leaf] at (0.28,0.42) {$b$};
  \node[leaf] at (0.98,0.42) {$c$};
  \draw (a1) -- (a3); \draw (a2) -- (a3); \draw (a2) -- (a4);
  \draw (-0.4,0.1) -- (a1); \draw (0.28,0.3) -- (a1); \draw (0.28,0.3) -- (a2); \draw (0.98,0.3) -- (a2);
  \draw (0.98,0.3) -- (a4);
  \node[lbl] at (0.35,-1.25) {AIG: $s^* = 4$};
\node[gate, rectangle] (m) at (3.2,-0.35) {maj};
  \node[leaf] at (2.6,0.3) {$a$};
  \node[leaf] at (3.2,0.3) {$b$};
  \node[leaf] at (3.8,0.3) {$c$};
  \draw (2.6,0.2) -- (m); \draw (3.2,0.2) -- (m); \draw (3.8,0.2) -- (m);
  \node[lbl] at (3.2,-1.25) {MIG: $s^* = 1$};
\end{tikzpicture}
\Description{Two drawings of the same three-input majority function:
four two-input AND gates on the left, one majority gate on the right.}
   \caption{The smallest MAJ-compressible gap: MAJ3 requires four AIG
  ANDs versus one MIG majority gate.}
  \Description{Two drawings of the same three-input majority function:
  four two-input AND gates on the left, one majority gate on the
  right.}
  \label{fig:complementarity}
\end{figure}

\begin{proposition}[complementarity]
\label{prop:complementarity}
MIG cut rewriting after AIG rewriting is not dominated: it offers a
positive-gain candidate on the MAJ-compressible cones of a state that
meet the accounting condition of Theorem~\ref{thm:cut4}.
Hence a predicate of the form ``AIG-side optimization makes MIG-side
NPN rewriting redundant'' fails in particular on the intersection of
the MAJ-compressible class and the fire region of
Theorem~\ref{thm:cut4}.
\end{proposition}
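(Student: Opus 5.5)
The plan is to prove the proposition by exhibiting one explicit witness state and then lifting the argument to every cone that satisfies the stated conditions. The witness is a network containing an AIG-optimal realization of MAJ3, namely the four-AND cone of Fig.~\ref{fig:complementarity}, or more generally any MAJ-compressible 4-input class. The proof has three steps, carried out in order.

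First, I show that AIG rewriting has saturated on such a cone yet leaves it above the MIG optimum. AIG rewriting replaces a cut only by the stored AIG form of its NPN class. Lemma~\ref{lem:rwgain} bounds its gain by the MFFC, and it never proposes a form smaller than the AIG table entry. So on a cone already equal to its AIG-optimal form, which has $s^*_{\mathrm{AIG}}=4$ for MAJ3, AIG rewriting offers no positive-gain candidate at the root. The post-AIG state is therefore a legal fixed point for that root. Its cone size is still strictly larger than $s^*_{\mathrm{MIG}}$ by the definition of MAJ-compressible.

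Second, I lower the cone to MIG using the exchange-rate table. Each AND becomes one $\mathrm{maj}(0,a,b)$ gate, so the lowered cone keeps $|\mathrm{MFFC}(n)|$ equal to its AIG size; structural hashing can only merge nodes that would also be shared in the AIG. I then invoke Theorem~\ref{thm:cut4} in its sufficient direction. If the replacement's unreferenced gate count, which is at most $s^*_{\mathrm{MIG}}$, is below $|\mathrm{MFFC}(n)|$, the pass fires. For MAJ3 this is $1<4$ whenever the four ANDs are fanout-free, and it gives a positive-gain offer. For a general MAJ-compressible class the hypothesis ``meets the accounting condition'' is exactly this inequality, so the conclusion follows directly. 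This is why the statement is scoped to that intersection rather than to all MAJ-compressible cones.

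The main obstacle is the MFFC condition. Outside sharing can shrink the fanout-free cone below the stored MIG size, and then no offer exists, so domination could not be refuted there. I handle this by restricting the claim to the intersection, as the proposition does, and by making the concrete witness a single-output circuit with no external fanout, so that $|\mathrm{MFFC}|=4$. A single such state gives a point where the MIG endpoint differs from the AIG-saturated one. Because AIG rewriting followed by identity is not an endpoint of the MIG evaluator there, the predicate ``AIG-side optimization makes MIG-side NPN rewriting redundant'' is false at that point, which shows it is not a sterile gate in the sense of Definition~\ref{def:gates}.
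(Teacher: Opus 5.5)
Your proposal is correct and follows essentially the paper's argument. The MAJ3 gap (four AIG ANDs versus one majority gate) leaves an AIG-saturated, fanout-free cone strictly above its MIG optimum after the one-gate-per-AND lowering. The cut-4 acceptance rule $g(n,c)=|\mathrm{MFFC}(n)|-u(c)\ge 1$ from the proof of Theorem~\ref{thm:cut4} then gives the positive-gain offer, $4-1=3$ for your witness.

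Two points of precision. The sufficiency you invoke comes from that acceptance biconditional in the proof, not from the theorem's stated ``only if'' clause. AIG saturation is best justified by MAJ3 having no AIG realization with fewer than four ANDs, not by Lemma~\ref{lem:rwgain}, whose bound $0\le g\le|\mathrm{MFFC}(v)|$ does not by itself exclude a positive-gain move.
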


\noindent
The MAJ-compressible gap therefore prevents an AIG pre-pass from
substituting for MIG-side rewriting inside a MIG flow.

\subsection{Impossibility (scoped statements)}

The impossibility result is a fiber obstruction.  A mixed fiber rules
out a complete polarity-blind structural certificate; source-level
function dependence alone leaves the question open.

\begin{theorem}[polarity-erased fiber obstruction]
\label{thm:frontier}
Fix an operator $e$ at root or pass scope and let
$z_e(x)=1$ exactly when its endpoint on legal input $x$ is the identity.
If two inputs $x_0,x_1$ satisfy
$\sigma(x_0)=\sigma(x_1)$ but $z_e(x_0)\ne z_e(x_1)$, then no structural
predicate is both sound and complete for $e$'s identity region on any
domain containing both inputs.
\end{theorem}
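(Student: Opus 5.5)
The argument is a direct factorization argument through $\sigma$; no source semantics are needed beyond Definition~\ref{def:structural}. Suppose, for contradiction, that $\pi$ is a structural predicate that is both sound and complete for the identity region of $e$ on a domain $X$ containing $x_0$ and $x_1$. Soundness means $\pi(x)=1 \Rightarrow z_e(x)=1$ for every $x\in X$. Completeness means $z_e(x)=1 \Rightarrow \pi(x)=1$ for every $x\in X$. Together they give $\pi(x)=z_e(x)$ pointwise on $X$. This is the first step: reduce ``sound and complete'' to equality of the predicate with the indicator $z_e$ on the domain.

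The second step uses structurality. By Definition~\ref{def:structural}, $\pi=\widehat{\pi}\circ\sigma$, at state scope or at root scope with $\sigma(s,v)$ as appropriate. The scope of $\pi$ is taken to match that of $e$, as Definition~\ref{def:gates} requires. Since $\sigma(x_0)=\sigma(x_1)$, we obtain $\pi(x_0)=\widehat{\pi}(\sigma(x_0))=\widehat{\pi}(\sigma(x_1))=\pi(x_1)$. Combining this with the first step gives $z_e(x_0)=\pi(x_0)=\pi(x_1)=z_e(x_1)$, which contradicts the hypothesis $z_e(x_0)\ne z_e(x_1)$. Hence no such $\pi$ exists.

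The only point that needs care, and which I expect to be the main (if mild) obstacle, is bookkeeping about scope and domain. At root scope, $x=(s,v)$ and $\sigma$ retains the distinguished root, so the hypothesis $\sigma(x_0)=\sigma(x_1)$ must be read as equality of rooted polarity-erased graphs. At pass scope, $x=s$. I will state explicitly that the argument is uniform over both readings. I will also note that it holds for every domain containing both inputs, because soundness and completeness are only ever evaluated at $x_0$ and $x_1$.

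Finally, I will remark that the theorem is conditional: it says nothing unless a mixed fiber exists. Instantiating it for cut-4 rewriting needs a concrete polarity pair, which is the witness promised in the hypothesis discussion, and that lies outside this proof.
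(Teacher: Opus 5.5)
Your proposal is correct and matches the paper's proof: soundness plus completeness force $\pi(x_i)=z_e(x_i)$ for both inputs, while factoring through $\sigma$ forces $\pi(x_0)=\pi(x_1)$, which contradicts $z_e(x_0)\ne z_e(x_1)$. Your remarks on root versus pass scope and on arbitrary domains are harmless bookkeeping that the paper leaves implicit.
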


\begin{proof}
Every structural predicate is constant on a fiber of $\sigma$, whereas
$z_e$ takes both values on the exhibited fiber.
\end{proof}

\noindent
For cut-4 rewriting, let
$m_1=\mathrm{maj}(a,b,c)$,
$m_2=\mathrm{maj}(m_1,a,b)$, and
$n=\mathrm{maj}(m_1,m_2,d)$.  With plain edges, absorption gives
$m_2=m_1$; the root lies in the one-gate majority class and cut-4
rewriting offers positive gain.  Complementing only the
$m_1\!\to m_2$ edge preserves $\sigma$, levels, fanouts, and cone sizes,
yet produces a genuine four-variable class whose stored form does not
beat the two-record cone, so the evaluator offers nothing.  The two
inputs therefore form the required mixed fiber for this deployed
operator.

A polarity-erased admission gate assigns the same decision to both
members of that fiber.  Soundness therefore requires both to undergo
dynamic evaluation.
The witness thus fixes the boundary of this predicate language for
cut-4 rewriting, while narrower structural subclasses and
function-aware gates remain available.  For the other surviving
engines, source inspection identifies the function data read by their
acceptance rules; an analogous obstruction remains open until a mixed
fiber is exhibited.
\ifarXiv Appendices~\ref{app:cut4-certificates}
and~\ref{app:proofspred} give the decoded cut-4 evidence and the full
gate ledger, respectively.\fi

Coordinate envelopes require a separate scope qualification.
\ifarXiv The full construction appears in
Appendices~\ref{app:envelopes} and~\ref{app:closedset}.\fi\ The pinned actions include
constructors that may trade nodes, depth, fanout, or path multiplicity;
the registry supplies no invariant closing a box formed from corpus
maxima.  Source-level statements therefore quantify over
$S_{\mathrm{impl}}$ or an explicitly named subset, and measurements
quantify over named corpora.  Any coordinate box remains descriptive
until an action-specific invariant closes it.
 \section{The Optimizer \taco{}}
\label{sec:optimizer}

The two layers compile into a deterministic \emph{gate layer}: before
an implemented operator is evaluated, its gate either proves that the
evaluation is unnecessary, applies a registered Tier-2 boundary
decision, or leaves the operator in the dynamic residue.  \taco{} is
the executable realization of that layer.  We evaluate three named
configurations: \taco{} in the Orchestrate comparison, the
higher-effort \taco{}-max in the HeLO comparison, and \taco{}-skip as a
gate-only diagnostic on stock \texttt{orchestrate}.  All three use
fixed deterministic schedules and selectors.

The Orchestrate-comparison form directly modifies the pinned ABC function
\texttt{Abc\_NtkOrchLocal}.
It preserves the strashed-AIG representation, topological per-root
traversal, stock RW/RS/RF engines and their source parameters, level
discipline, and the published single/iterative/\texttt{resyn}/
\texttt{resyn3} mode structure.  The gate layer is inserted at the
operator-admission points.  The retained basis also exposes R2p, a
depth-constrained instance of the same RS engine, and ranks the
remaining candidates by a fixed gain, level, and tie-breaking rule.
Consequently,
the orchestration experiment compares two selectors inside the same
published backbone.

The HeLO-comparison protocol reuses this gated AIG backbone, then adds a deterministic
native-MIG loop and carrier excursions so that the available operator
families and NDP objective match the HeLO protocol as closely as the
published interfaces allow.  Its maximum-effort setting keeps the gate
set and action menu fixed, replaces the normal \texttt{resyn} AIG
pre-pass with an iterative fixpoint of at most 20 passes, and raises the
MIG-round and time caps.  The fixed gate set shared by the two
protocol forms is split into
\begin{equation}
\begin{aligned}
\mathcal{G}_{\mathrm{exact}}
  &= \{\mathsf{G\text{-}P4},\mathsf{G\text{-}RW},
      \mathsf{G\text{-}P5},\mathsf{G\text{-}P12},
      \mathsf{G\text{-}XAGCF}\},\\
\mathcal{G}_{\mathrm{T2}}
  &= \{\mathsf{G\text{-}P1},\mathsf{G\text{-}P2},
      \mathsf{G\text{-}P3},\mathsf{G\text{-}P6}\},\\
\mathcal{G}_{\mathrm{TACO}}
  &= \mathcal{G}_{\mathrm{exact}}\mathbin{\dot\cup}
     \mathcal{G}_{\mathrm{T2}}.
\end{aligned}
\label{eq:gatetiers}
\end{equation}
Here G-P4 is the critical-root R2p dedupe; G-RW the no-4-cut RW skip;
G-P5 the MIG algebraic precondition; G-P12 the XMG-DFS identity gate;
and G-XAGCF the XOR-free XAG skip.  G-P1/G-P2/G-P3/G-P6 are
respectively the MFFC-1 cover, RF-territory boundary, source fanout
guard, and XMG boundary gate.  The first subset consists of same-scope
identities or deduplications; the second consists of source-derived
boundary decisions with registered residuals.
$\mathcal G_{\mathrm{exact}}$ denotes the five exact gates used by the
integrated \taco{} protocols.  The separate exact G-RF identity filter
is reserved for \taco{}-skip's stock path; integrated \taco{} uses
G-P1 and G-P2 for RF admission.

\taco{} uses this fixed gate set, source-default engine parameters,
and the stated deterministic stopping rules.  For the HeLO comparison,
\taco{}-max retains the
same gate set and selector while using the iterative AIG pre-pass and
larger MIG-round and time limits described above.
\taco{}-skip isolates two exact gates in the stock LocalGreedy path,
leaving its selector unchanged and enabling a whole-flow bit-identity
check.  Integrated \taco{} runs the full deterministic
gate-conditioned selector; Section~\ref{sec:experiments} reports its
integrated QoR.
In the HeLO-comparison protocol, the normal AIG-side orchestrated pass
(\texttt{resyn} mode) first reduces node count under the level
discipline; carrier excursions then evaluate the other
representations, and the NDP-evaluated loop continues in native MIG.
At maximum effort, the iterative AIG pre-pass runs for at most 20
iterations, followed by at most 60 MIG rounds under a doubled time
budget.  The extra AIG iterations can trade depth for fewer nodes, and
that trade need not improve NDP.  Maximum effort aligns the evaluation
budget more closely with HeLO while retaining the same deterministic
menu and selector.
Table~\ref{tab:forms} compares the protocol forms by dimension, and
Algorithms~\ref{alg:select} and~\ref{alg:loop} give the two selection
rules.

\begin{table}[t]
\centering
\scriptsize
\caption{Two evaluation protocols sharing the source-derived gate
registry and the Orchestrate-derived AIG backbone.}
\label{tab:forms}
\begin{tabular}{@{}l >{\raggedright\arraybackslash}p{0.34\columnwidth} >{\raggedright\arraybackslash}p{0.4\columnwidth}@{}}
\toprule
Dimension & Orchestrate comparison & HeLO comparison \\
\midrule
Backbone & Direct modification of \texttt{Abc\_NtkOrchLocal} & Same
gated AIG front end, then deterministic extension \\
Representation & Strashed AIG throughout & AIG pre-pass, carrier excursions (MIG/AIG/XAG/XMG), native MIG output \\
Objective & Node gain under the published level discipline & NDP $=$ Nodes $\times$ depth \\
Selection & Per-root: 4 candidates, gain/level-ranked commit & Per-pass: best strict-NDP-improving pass committed \\
Moves & ABC engines (rewrite, resub, R2p, refactor) & Native mockturtle passes (algebraic strategies, cut-4 rewriting, resub, balance, refactor) \\
Scheduling & Published mode structure (single, iterative, resyn, resyn3) & Fixed lane order; round loop; shared time budget \\
Higher effort & Published iterative schedule & Iterative AIG pre-pass; more MIG rounds and time; same deterministic menu \\
\bottomrule
\end{tabular}
\end{table}

In the pseudocode, $\textsc{Enabled}(P,m)$ removes only the pass
named by a registered gate whose predicate holds; a gate for one pass
never suppresses unrelated passes.

\begin{algorithm}[t]
\caption{Per-root deterministic selection (Orchestrate comparison)}
\label{alg:select}
\begin{algorithmic}[1]
\Require Strashed AIG $s$, engines $E = \{\mathrm{RW}, \mathrm{RS},
\mathrm{R2p}, \mathrm{RF}\}$
\ForAll{Roots $v$ of $s$ in topological order}
  \State $E_v \gets \{e\in E:\text{no enabled gate for $e$ fires at
  $(s,v)$}\}$
  \ForAll{$e\in E_v$}
    \State Evaluate $e$ at $v$; record $c_e=\bot$ if it offers nothing,
    otherwise $(g_e,\ell_e)$
  \EndFor
  \State $\mathcal{F} \gets \{e \in E_v : c_e\neq\bot \land
  [g_e > 0 \lor (g_e = 0 \land \ell_e < \ell(v))]\}$
  \If{$\mathcal{F} \neq \emptyset$}
    \State Commit $\arg\max_{e \in \mathcal{F}} g_e$; ties by lower
    $\ell_e$, then by $\mathrm{RW} \succ \mathrm{RS} \succ \mathrm{R2p}
    \succ \mathrm{RF}$
  \EndIf
\EndFor
\end{algorithmic}
\end{algorithm}

\begin{algorithm}[t]
\caption{Carrier lanes and NDP round loop (HeLO comparison)}
\label{alg:loop}
\begin{algorithmic}[1]
\Require AIG $s$; shared budget $B$
\State $P \gets \{\mathrm{alg}_{\mathrm{sel}},\mathrm{alg}_{\mathrm{dfs}},
\mathrm{alg}_{\mathrm{aggr}},\mathrm{cut4}\}$
\State $P \gets P\cup\{\mathrm{resub},\mathrm{bal},\mathrm{refac}\}$
\State $a_0 \gets$ common AIG after the selected AIG pre-pass
\ForAll{$L\in[\mathrm{MIG},\mathrm{AIG},\mathrm{XAG},\mathrm{XMG}]$}
  \If{$B$ is spent} \State \textbf{break} \EndIf
  \State $m \gets \textsc{LowerToMig}(\textsc{Excursion}(L,a_0))$
  \State Enable every $p\in P$
  \Repeat
    \State $P_m \gets \textsc{Enabled}(P,m)$
    \ForAll{$p \in P_m$}
      \State Evaluate $p$ on $m$; record $\mathrm{NDP}(p(m))$
    \EndFor
    \State $\mathcal{I}\gets\{p\in P_m:
    \mathrm{NDP}(p(m))<\mathrm{NDP}(m)\}$
    \If{$\mathcal{I}=\emptyset$} \State \textbf{break} \EndIf
    \State $p^* \gets \arg\min_{p\in\mathcal{I}}\mathrm{NDP}(p(m))$
    \State $m \gets p^*(m)$ (ties by fixed pass-enum order)
  \Until{$B$ is spent}
  \State Retain $m$ iff its NDP strictly beats the incumbent
\EndFor
\State \Return incumbent (lane ties retain the earlier fixed lane)
\end{algorithmic}
\end{algorithm}
 \section{Agentic Source Analysis}
\label{sec:production}

The method is a weakest-first loop over implemented operators
(Fig.~\ref{fig:method}).  Each round performs four steps:
\begin{enumerate}
\item measure offers and commits inside the deployed pipeline, because
a standalone harness can misstate an operator's role;
\item select the least-productive surviving operator;
\item derive from the pinned implementation an alias, containment
relation, or state-conditional certificate for the operator's idle
region; and
\item submit the draft relation and its source anchors to an independent
adversarial audit.
\end{enumerate}
Every accepted relation shrinks the menu examined by later rounds.
The loop stops when no further relation survives the audit.  LLM
agents generate bounded source-reading drafts.  A draft enters the
registry only after its proof obligation, source anchors, and checker,
where applicable, support it under independent audit and author
adjudication.  This organization makes a large number of bounded
readings affordable.

\begin{figure}[t]
  \centering
  \resizebox{\columnwidth}{!}{\begin{tikzpicture}[
    stage/.style={draw=black!70, rounded corners=1.5pt, minimum width=1.6cm,
                  minimum height=0.62cm, font=\footnotesize, inner sep=2.5pt,
                  align=center},
    outcome/.style={stage, fill=FDdark!10, draw=FDdark!60},
    dead/.style={stage, fill=black!10, draw=black!40, text=black!55},
    lbl/.style={font=\scriptsize, text=black!60},
    arrow/.style={-{Stealth[length=1.8mm]}, black!60, line width=0.55pt}
  ]
\node[stage] (measure) at (0,0) {Measure usage};
  \node[stage] (weakest) at (2.3,0) {Select weakest};
  \node[stage] (lemma) at (4.6,0) {Draft lemma};
  \node[stage] (audit) at (7.0,0) {Audit};
  \draw[arrow] (measure) -- (weakest);
  \draw[arrow] (weakest) -- (lemma);
  \draw[arrow] (lemma) -- (audit);
\node[outcome] (current) at (7.0,-1.6) {Registry: CURRENT};
  \node[dead] (broken) at (4.55,-1.6) {Withdrawn};
  \draw[arrow] (audit) -- node[lbl, right=1pt] {survives} (current);
  \draw[arrow] (audit.south) -- ++(0,-0.45) -| node[lbl, pos=0.25, above] {broken} (broken.north);
\node[outcome] (menu) at (0,-1.6) {Compressed menu};
  \draw[arrow] (weakest.south) -- ++(0,-0.45) -| node[lbl, pos=0.25, above] {no relation found} (menu.north);
\draw[arrow] (current.south) -- ++(0,-0.72) -| node[lbl, pos=0.45, above] {menu shrinks; next round} (weakest.south);
\end{tikzpicture}
\Description{A flowchart loop: measure usage, pick the weakest
operator, draft a lemma, audit it; surviving lemmas enter the CURRENT
registry and the loop repeats on the shrunken menu; the loop exits to
the compressed menu when no further relation is found.}
 }
  \caption{The agentic source analysis loop.  Usage is measured in the
  deployed pipeline, the weakest operator is examined next, and every
  draft lemma must survive an adversarial audit over the cited source
  lines before it enters the citable registry.}
  \Description{See figure.}
  \label{fig:method}
\end{figure}

The audit protocol makes each claim traceable.
Table~\ref{tab:agentprotocol} lists the record required to replay one
analysis task.  A registry entry
contains the statement, semantic scope, pinned source anchor, proof
obligation, status, and an executable checker where the obligation is
finite.  Acceptance depends on that obligation and its attached
evidence.

\begin{table}[t]
\centering
\scriptsize
\caption{The reproducible record for one agentic source-analysis task.
Replay follows the stated obligation and its attached evidence.}
\label{tab:agentprotocol}
\begin{tabular}{@{}p{0.22\columnwidth}p{0.70\columnwidth}@{}}
\toprule
Record & Required content \\
\midrule
Task packet & Fixed claim or operator, pinned source tree, semantic
scope, and instruction to seek a counterexample \\
Draft & Statement, source anchors, proof obligation, and any finite
checker \\
Independent audit & Separate read-only assignment instructed to
refute; findings preserved even when the claim is repaired \\
Acceptance & Author adjudication; only a supported final statement may
enter the registry as CURRENT \\
Release & Task prompt, run log, registry, per-claim audit records,
checker manifest, and final digests \\
\bottomrule
\end{tabular}
\end{table}

Claims that survive independent re-reading enter the registry as
CURRENT; WITHDRAWN and REFUTED entries remain visible as provenance.
Finite obligations receive exhaustive checkers.
\ifarXiv Appendices~\ref{app:failures}, \ref{app:validation},
and~\ref{app:refutations} report the failure controls,
final-validation ledger, and representative refutations.\fi
The retained audits refuted or narrowed draft claims, corrected a tie
rule, and rejected a defective XMG predicate implementation before
registration.  The preserved trail records each change.
 \section{Experiments}
\label{sec:experiments}

We use the four named sets in Table~\ref{tab:benchsets}.
\begin{table}[t]
\centering
\small
\caption{Benchmark sets and their roles.}
\label{tab:benchsets}
\setlength{\tabcolsep}{3pt}
\begin{tabular}{@{}p{0.10\columnwidth}
                    >{\raggedright\arraybackslash}p{0.84\columnwidth}@{}}
\toprule
$D_{55}$ & Supplied development manifest: 10 EPFL arithmetic, 10 EPFL
random/control, 11 ISCAS'85, 12 official IWLS-2026 seed AIGs, and 12
randomly generated AIGs (\texttt{random-01}--\texttt{04} and
\texttt{h016}--\texttt{h023}). \\
$O_{16}$ & The orchestration paper's 16 cases: five EPFL, four
VTR/FlowTune, two ISCAS'89, and five ITC'99; five overlap $D_{55}$. \\
$U_{66}$ & $D_{55}\cup O_{16}$, the 66 unique inputs used for the
gate-only and corpus-wide checks. \\
$H_{6}^{\dagger}$ & \texttt{aes\_core}, \texttt{chip\_bridge},
\texttt{fpu}, \texttt{i2c}, \texttt{mem\_ctrl}, and \texttt{s38417};
the per-case HeLO comparison. \\
\bottomrule
\end{tabular}
\vspace{2pt}

\begin{minipage}{0.96\columnwidth}
\footnotesize $^\dagger$Input status: exact: \texttt{aes\_core},
\texttt{i2c}, \texttt{mem\_ctrl}; LSOracle-OPDB reconstruction (Yosys
plus ABC latch cut): \texttt{chip\_bridge}, \texttt{fpu}; disclosed
near variant: \texttt{s38417}.  Of HeLO's other seven rows, five have
no reconstructible reported input (\texttt{pico-rv}, \texttt{des\_perf},
\texttt{ethernet}, \texttt{vga\_lcd}, \texttt{DMA}) and two have no
packaged input (\texttt{dyn\_node}, \texttt{fpga\_bridge}); none enters
a geomean.
\end{minipage}
\end{table}
i2c and mem\_ctrl belong to both $U_{66}$ and $H_6$.  The four sets
cover every reported experiment.  An older 11-case gate-development
record appears only where it is labeled historical.
The source-level theorems quantify over their implementation domains.
For performance, $D_{55}$ is the development manifest.  The eleven cases
in $O_{16}\setminus D_{55}$ form a disjoint evaluation subset; their
corrected final artifacts were added after the method
was frozen and are reported separately below.  The $H_6$ study
is a fixed-case protocol comparison.

The two comparisons follow the optimizer's two protocol forms.
Orchestrate supplies a same-source controlled comparison because
\taco{} is derived directly from its implementation.  The HeLO study
uses every reported case for which a usable input can be recovered and
separates exact inputs from reconstructions and the near variant.  The
reported performance metrics are logic-level nodes, depth, and NDP in
the pinned implementations.

The optimizer has three named configurations:
\begin{itemize}
\item \taco{}: the gate set $\mathcal G_{\mathrm{TACO}}$ of
Eq.~\eqref{eq:gatetiers}, source-default engine parameters, and the
stated deterministic stopping rules;
\item \taco{}-max: the same gates and selector, an iterative AIG
pre-pass of at most 20 passes, and larger MIG-round and time limits;
\item \taco{}-skip: the gate-only diagnostic, comprising the published
orchestration algorithm with
two trajectory-certified G-RW/G-RF gates added.
\end{itemize}
The orchestration comparison uses \taco{} under the published mode
schedules.  The HeLO comparison uses \taco{}-max to match the broader
operator-evaluation budget of that protocol.  Every reported result
records its effort (evaluations, passes, seconds on one machine).
Every optimizer output passes one final untimed equivalence sweep.
Reference outputs come from the pinned binaries under their published
commands.

\subsection{Validation of the compression hypothesis}

\subsubsection{The predicate layer at work}
The gated variant of the published orchestration algorithm
(\taco{}-skip) produces \emph{bit-identical outputs on all 66 circuits}
and, in the bit-identity harness (isolated reruns of the pinned
binary), reduces total runtime from 38.2\,s to 34.0\,s ($-11\%$).  The
two gates' savings overlap almost completely: each alone recovers 4.1\,s
of the 4.2\,s (Table~\ref{tab:gates}).  Their shared near-PI firing
region explains the overlap, while each gate suppresses a different
engine evaluation.

\begin{table}[t]
\centering
\scriptsize
\caption{Predicate-layer savings, same algorithm, 66 circuits.}
\label{tab:gates}
\begin{tabular}{@{}llll@{}}
\toprule
Configuration & Outputs & Total seconds & Saved \\
\midrule
\texttt{orchestrate}~\cite{li2024orchestration} & 66/66 & 38.2 & 0\% \\
\taco{}-skip (both gates) & 66/66 identical & 34.0 & 11\% \\
\quad No-4-cut skip alone & 66/66 identical & 34.1 & 11\% \\
\quad RF 2-PI-cone skip alone & 66/66 identical & 34.1 & 11\% \\
\bottomrule
\end{tabular}
\end{table}

\subsubsection{Empirical checks of predicted regions}
The exact predictions and the measured Tier-2 boundary checks are
reported separately in Table~\ref{tab:predictions}.  On critical roots,
depth-preserving and free resubstitution coincide, and the
depth-preserving variant contributes no distinct commit.  On slack
roots where the free candidate raises the root level, the restricted
evaluator is live:
five commits on voter and four on ctrl.
The level-gap guard suppresses algebraic passes only on gap-free states.
The cut4 pass fires only where the replacement's unreferenced count
falls below the cone size, exactly the region the accounting theorem
predicts.
For XMG, Theorem~\ref{thm:xmg} predicts the collision mechanism, while
the cheaper boundary-only G-P6 shortcut is evaluated empirically in the
integrated flow: its logged trigger and outcome fields show
zero consistency violations on $H_6$.

\begin{table}[t]
\centering
\scriptsize
\caption{Exact predictions and measured Tier-2 boundary checks.}
\label{tab:predictions}
\begin{tabular}{@{}p{0.25\columnwidth}p{0.32\columnwidth}p{0.32\columnwidth}@{}}
\toprule
Theorem & Prediction & Measured \\
\midrule
R2p dedupe (Thm.~\ref{thm:r2p}) & $\equiv R$ on critical roots & 0 distinct commits \\
R2p, restricted offer & Fires only on slack roots with a level-raising free candidate & voter 5, ctrl 4 commits \\
level-gap (Thm.~\ref{thm:gap}) & Guard lossless & skips only gap-free states \\
XMG boundary (G-P6) & Mechanism requires a trace collision & 0/6 consistency violations \\
cut4 (Thm.~\ref{thm:cut4}) & Fires in MFFC$\times s^*$ region & region confirmed \\
\bottomrule
\end{tabular}
\end{table}

\subsubsection{The residue is dynamic}
The surviving restructuring operators offer broadly and win sparsely.
Across the HeLO set, balancing and Akers refactoring offer improvements
but commit rarely (on the final campaign record, balancing offers 26
times and is never selected), while the algebraic strategies and
resubstitution carry the realized gains.  This sparsity motivates
dynamic evaluation of the residue.  The mixed-fiber witness of
Theorem~\ref{thm:frontier} supplies the separate obstruction for cut-4
rewriting.

\subsection{The orchestration comparison}

The baseline is the published Orchestrate
algorithm~\cite{li2024orchestration}, invoked by the ABC command
\texttt{orchestrate}.
\taco{} was developed directly from that
implementation: \texttt{Abc\_NtkCgoPass} retains the
\texttt{Abc\_NtkOrchLocal} per-root skeleton and stock operator
implementations, then inserts the source-derived gate layer and the
fixed residual rule.  Across all rows, the strashed AIG, topological
root traversal, RW/RS/RF engine code, published mode schedules, inputs,
host, and harness are fixed.  R2p is the same RS engine with a
required-level override.

Table~\ref{tab:orchcontrol} gives the intervention ladder.
Table~\ref{tab:gates} isolates two exact admission gates with the
published selector unchanged.  The integrated \taco{} row measures the
gate-conditioned selector, including R2p admission and its
fixed gain/level/tie rule.
Table~\ref{tab:orchcases} reports the sixteen circuits from the
original paper in single-pass mode.  All seconds in this
subsection come from the portfolio harness, which times every flow the
same way; machine conditions differ between the two harnesses, so
seconds are comparable only within a table.

\begin{table}[t]
\centering
\scriptsize
\caption{Intervention ladder and attribution in the controlled
Orchestration comparison.  Ratios are nodes/levels/NDP.}
\label{tab:orchcontrol}
\begin{tabular}{@{}p{0.20\columnwidth}p{0.39\columnwidth}p{0.30\columnwidth}@{}}
\toprule
System & Increment & Measured outcome \\
\midrule
\texttt{orchestrate} & Published
\texttt{Abc\_NtkOrchLocal} skeleton and selector & Reference \\
\taco{}-skip & G-RW/G-RF at stock admission points & 66/66 identical;
$-11\%$ same-algorithm time \\
\taco{} & $\mathcal G_{\mathrm{TACO}}$, R2p admission, fixed
residual rule; source defaults and stated termination & $0.990/0.968/0.958$;
$2.6\times$ \\
\bottomrule
\end{tabular}
\end{table}

The single-pass mode is the primary system result because it runs one
gate-conditioned per-root pass before the outer \texttt{resyn}
interleavings.  \taco{} uses fewer nodes on fourteen of the sixteen
(geomean ratios $0.990$
nodes, $0.968$ levels, $0.958$ NDP) at $2.6\times$ the speed, and
\taco{} uses 16.5 seconds in total.  The one regression is
\texttt{sqrt}, at $1.2\%$ more nodes.  Disabling each gate
individually leaves that number unchanged, localizing the residual
difference to the commit rule.  We conjecture that the difference comes
from zero-gain moves accepted by the original selection
strategy and declined by ours.  The other published modes expose their
schedule tradeoffs directly.  On $O_{16}$, iterative \taco{} gives
NDP $0.989$ while using $1.020$ nodes; \texttt{resyn} and
\texttt{resyn3} give NDP $1.054$ and $1.072$.  On $U_{66}$, the
corresponding \texttt{resyn}/\texttt{resyn3} NDPs are $0.997$ and
$1.002$.
\ifarXiv The four-mode summaries appear in
Appendix~\ref{app:experiments}.\fi\
On the disjoint $O_{16}\setminus D_{55}$ subset, \taco{}
uses fewer nodes on 10/11 cases, with geomean ratios $0.990$ nodes,
$0.957$ levels, and $0.947$ NDP.  The eleven disjoint cases therefore
reproduce the 16-case trend independently of the five development
overlaps.
The Orchestrate paper~\cite{li2024orchestration} also reports
LocalGreedy-tuned \texttt{resyn} variants on five circuits.  On those
five cases, its best reported flows use 0.5--1.8\% fewer nodes than
\taco{}.
\ifarXiv Appendix~\ref{app:experiments} gives that scoped
comparison.\fi

\begin{table*}[t]
\centering
\setlength{\tabcolsep}{4pt}\footnotesize
\caption{Orchestration comparison on the sixteen circuits reported with
\texttt{orchestrate}~\cite{li2024orchestration}, single-pass mode:
per-case nodes, depth, NDP, and seconds; node and runtime ratios appear
in parentheses.  Best QoR value per case and metric is bold.  The
closing rows give geomean QoR ratios and total seconds.}
\label{tab:orchcases}
\begin{tabular}{@{}l rrrr rrrr@{}}
\toprule
 & \multicolumn{4}{c}{orchestrate~\cite{li2024orchestration}}
 & \multicolumn{4}{c}{\taco{}} \\
\cmidrule(r){2-5}\cmidrule(r){6-9}
Case & n & d & ndp & s & n (ratio) & d & ndp & s (ratio) \\
\midrule
div & 41198 & \textbf{4372} & $1.80{\times}10^8$ & 2.3
 & 41112 (\textbf{0.998}) & \textbf{4372} & {\boldmath$1.80{\times}10^8$} & 1.3 (0.547) \\
hyp & 207505 & 24803 & $5.15{\times}10^9$ & 27.4
 & 207233 (\textbf{0.999}) & \textbf{24793} & {\boldmath$5.14{\times}10^9$} & 4.3 (0.157) \\
mem\_ctrl & 45809 & \textbf{114} & $5.22{\times}10^6$ & 2.0
 & 45687 (\textbf{0.997}) & \textbf{114} & {\boldmath$5.21{\times}10^6$} & 1.1 (0.566) \\
sqrt & \textbf{19431} & \textbf{5058} & {\boldmath$9.83{\times}10^7$} & 1.4
 & 19655 (1.012) & \textbf{5058} & $9.94{\times}10^7$ & 0.7 (0.490) \\
voter & 9516 & 63 & $6.00{\times}10^5$ & 0.7
 & 8989 (\textbf{0.945}) & \textbf{61} & {\boldmath$5.48{\times}10^5$} & 0.4 (0.477) \\
bfly & 26106 & 97 & $2.53{\times}10^6$ & 0.9
 & 25734 (\textbf{0.986}) & \textbf{95} & {\boldmath$2.44{\times}10^6$} & 1.2 (1.288) \\
dscg & 25300 & \textbf{91} & {\boldmath$2.30{\times}10^6$} & 0.9
 & 25077 (\textbf{0.991}) & 92 & $2.31{\times}10^6$ & 1.4 (1.556) \\
fir & 24894 & 94 & $2.34{\times}10^6$ & 0.9
 & 24608 (\textbf{0.989}) & \textbf{89} & {\boldmath$2.19{\times}10^6$} & 1.2 (1.384) \\
syn2 & 26950 & 90 & $2.43{\times}10^6$ & 0.9
 & 26709 (\textbf{0.991}) & \textbf{80} & {\boldmath$2.14{\times}10^6$} & 1.4 (1.573) \\
s35932 & \textbf{8561} & 13 & $1.11{\times}10^5$ & 0.3
 & \textbf{8561} (1.000) & \textbf{11} & {\boldmath$9.42{\times}10^4$} & 0.1 (0.462) \\
s38584 & 10365 & 33 & $3.42{\times}10^5$ & 0.4
 & 10222 (\textbf{0.986}) & \textbf{29} & {\boldmath$2.96{\times}10^5$} & 0.2 (0.519) \\
b17\_1 & 24209 & \textbf{93} & $2.25{\times}10^6$ & 0.8
 & 24106 (\textbf{0.996}) & \textbf{93} & {\boldmath$2.24{\times}10^6$} & 0.6 (0.729) \\
b18\_1 & 65250 & 132 & $8.61{\times}10^6$ & 2.3
 & 64606 (\textbf{0.990}) & \textbf{131} & {\boldmath$8.46{\times}10^6$} & 1.8 (0.805) \\
b20 & 10093 & \textbf{67} & $6.76{\times}10^5$ & 0.4
 & 9970 (\textbf{0.988}) & \textbf{67} & {\boldmath$6.68{\times}10^5$} & 0.2 (0.522) \\
b21 & 10203 & \textbf{67} & $6.84{\times}10^5$ & 0.4
 & 10061 (\textbf{0.986}) & \textbf{67} & {\boldmath$6.74{\times}10^5$} & 0.3 (0.707) \\
b22 & 15103 & \textbf{69} & $1.04{\times}10^6$ & 0.6
 & 14862 (\textbf{0.984}) & \textbf{69} & {\boldmath$1.03{\times}10^6$} & 0.3 (0.586) \\
\midrule
Geomean ratio & \multicolumn{4}{c}{1.000}
 & (\textbf{0.9897}) & 0.9681 & 0.9581 & (0.6663) \\
Total seconds & \multicolumn{4}{r}{42.7}
 & \multicolumn{4}{r}{\textbf{16.5}} \\
\bottomrule
\end{tabular}
\end{table*}

\subsection{The HeLO comparison}

HeLO provides reported reference values without a released
implementation.  \taco{}'s deterministic HeLO-comparison protocol uses the same
Orchestrate-derived backbone, matches the available carrier/operator
families and NDP objective, and evaluates every whole-circuit operator
outcome.  \taco{}-max brings the deterministic evaluation effort
closer to HeLO by replacing the normal \texttt{resyn} pre-pass with an
iterative AIG fixpoint and raising the MIG-round and time limits.  It
retains the gate set, selector, and action menu.

Input fidelity is shown explicitly.  The primary direct comparison is
the three exact-input rows.  On those rows, \taco{}-max has an NDP
geomean ratio of $0.903$; the two reconstructed rows and disclosed
near-variant row give $0.900$ when included as context.  Table~\ref{tab:helo}
separates the three input statuses, and
\ifarXiv Table~\ref{tab:helomax} gives the per-case maximum-effort
results in a compact form.\fi\ The iterative pre-pass can trade levels
for nodes on some cases.

\begin{table*}[t]
\centering
\setlength{\tabcolsep}{5pt}\footnotesize
\caption{HeLO results grouped by input fidelity: exact inputs are the
primary direct comparison; reconstructions and the near variant are
additional context.  Entries give nodes, depth, NDP (in $10^4$), and
seconds, with ratios vs HeLO.}
\label{tab:helo}
\begin{tabular}{@{}l rrrr rrrr@{}}
\toprule
 & \multicolumn{4}{c}{HeLO~\cite{pu2025helo}}
 & \multicolumn{4}{c}{\taco{}-max} \\
\cmidrule(r){2-5}\cmidrule(r){6-9}
Case & n & d & ndp & s$^\dagger$
 & n (ratio) & d (ratio) & ndp (ratio) & s (ratio) \\
\midrule
\multicolumn{9}{@{}l}{\emph{Exact input artifacts: primary comparison}} \\
aes\_core & 21867 & 18 & 39.4 & 63
 & 19237 (0.880) & 18 (1.000) & 34.6 (0.880) & 1941 (30.8) \\
i2c & 1385 & 8 & 1.11 & 21
 & 1329 (0.960) & 8 (1.000) & 1.06 (0.960) & 2 (0.10) \\
mem\_ctrl & 56592 & 61 & 345 & 337
 & 64119 (1.133) & 47 (0.770) & 301 (0.873) & 986 (2.93) \\
\midrule
\multicolumn{9}{@{}l}{\emph{LSOracle-OPDB reconstructions: context}} \\
chip\_bridge & 58317 & 19 & 111 & 262
 & 62064 (1.064) & 16 (0.842) & 99.3 (0.896) & 301 (1.15) \\
fpu & 68099 & 20 & 136 & 480
 & 63956 (0.939) & 19 (0.950) & 122 (0.892) & 481 (1.00) \\
\midrule
\multicolumn{9}{@{}l}{\emph{Disclosed near variant: context}} \\
s38417 & 9522 & 16 & 15.2 & 30
 & 8060 (0.846) & 17 (1.062) & 13.7 (0.899) & 21 (0.70) \\
\midrule
Exact-input geomean & \multicolumn{4}{c}{1.000}
 & 0.9853 & 0.9168 & \textbf{0.9032} & 2.0476 \\
All-available context & \multicolumn{4}{c}{1.000}
 & 0.9653 & 0.9319 & \textbf{0.8996} & 1.3826 \\
\bottomrule
\end{tabular}
\vspace{2pt}

\begin{minipage}{0.98\textwidth}
\footnotesize $^\dagger$Runtime note: HeLO seconds are reproduced from
its paper and were measured on a different machine from ours.  The
runtime ratios provide cross-machine order-of-magnitude context.
Same-machine speedups are reported in the orchestration comparison.
\end{minipage}
\end{table*}

\ifarXiv
\section{Discussion}
\label{sec:discussion}

\subsection{Evidence scopes and effort settings}

Exact certification in this paper applies to a named transformation, semantic
scope, pinned implementation, and proof obligation.  A recipe identity
is certified over its stated implementation domain; a root gate is
certified only for that root evaluator; a corpus equality is a
measurement.  Moving between these scopes requires a separate theorem
or measurement.

\taco{} denotes the fixed registry-backed gate set, source-default
engine parameters, and the stated deterministic stopping rules.  Its
Tier-2 rows retain their
registered residuals.  For the HeLO comparison, \taco{}-max replaces
the normal \texttt{resyn} AIG pre-pass with an iterative fixpoint and
raises the MIG-round and time limits.  Whole-flow bit identity is claimed only for
\taco{}-skip, where the two added gates are individually exact.

Implementation domains require the same care.  The registry contains no
closure theorem for a corpus-derived box in node count, depth, fanout,
or path count.  Several pinned actions deliberately trade one of these
coordinates for another.  Corpus maxima therefore remain descriptive;
a closure proof requires action-specific invariants.

\subsection{Why a rare survivor remains}

The cleanest boundary case is selective algebraic rewriting.  On the
historical 11-case record, the no-area selective variant wins once,
on s38417, because the area-allowed aggressive variant usually offers
more.  There it changes the NDP ratio to HeLO's reported value from
1.036 without the selective pass to 0.900 with it.

The registered relations retain that variant behind a cheap gate.  The
31-action result is the exact one-recipe Pareto cover of
Definition~\ref{def:paretocover}; further relations among its survivors
remain open unless explicitly ruled out.

\subsection{Why polarity-erased structure can be insufficient}

Section~\ref{sec:crosscarrier} gives the three-gate cut-4 witness.  The
obstruction is operator-specific: the pair preserves the
entire polarity-erased graph while pinned NPN storage, absorption, and
sharing rules produce opposite evaluator outcomes.  Any
polarity-erased admission gate assigns the pair one decision, so
soundness sends both members to dynamic evaluation.

The witness fixes the structural boundary for cut-4 rewriting.
The other engines also read function data, and their sparse measured
wins motivate dynamic evaluation.  Their impossibility status remains
open until a mixed fiber is exhibited.  Cheap function-aware predicates
and additional structural subclasses remain open as well.

\subsection{Three reusable proof objects}

The implementation-level semantic results form three connected proof
objects: fixed-scope quotient relations, sound state-conditional
admission predicates, and witnessed mixed-fiber obstructions.  These forms apply
to any optimizer satisfying their premises; the pinned ABC and
mockturtle trees make those premises checkable and turn the resulting
relations into executable gates.

Some premises, such as a discarded return or a dead guard, are
individually simple.  Their value comes from operator-wide accounting:
the analysis fixes semantic scope, composes aliases and dominance into
an exact action cover, separates non-collapsible look-alikes, and
compiles the result into a controlled optimizer intervention.

\subsection{Trust and future verification}

Some behavior-determining facts, such as a discarded return value, a
live guard, or an accounting counter, appear only in implementation code.
Agentic source analysis makes that code searchable at lemma scale, and
the registry makes every accepted claim traceable to its obligation and
source anchor.  Agents reduce the labor required for a large collection
of small, source-specific obligations.

The remaining verification problem is how to reduce the human
validation cost.  Three directions appear most
useful:
\begin{itemize}
\item verifier agents that receive formal claims and fixed source
anchors;
\item machine-checked proofs for the implementation-independent
lemmas; and
\item executable certificates that turn finite obligations into
reproducible checks.
\end{itemize}
 \fi

\section{Conclusion}
\label{sec:conclusion}

Pinned logic-optimization implementations admit three reusable proof
objects for theory-derived compression.  A fixed-scope quotient maps
the 40 deployed recipe actions to a 31-action exact one-recipe Pareto
cover; sound root- and pass-scoped predicates license certified gates;
and an operator witness turns the mixed-fiber obstruction into a
concrete admission boundary.

Built on those results, \taco{} is a deterministic optimizer with an
exact core and registered Tier-2 boundary gates.  Its main executable
contribution is the gate layer.  \taco{} directly modifies the
published ABC Orchestrate backbone, while \taco{}-max extends that
deterministic backbone to the HeLO comparison's carrier/operator
protocol.

\taco{}-skip accelerates the published orchestration algorithm while
preserving bit-identical outputs.  In the controlled same-backbone
comparison, integrated \taco{} improves the orchestration result.  The
HeLO summary uses exact-input rows as the primary comparison and
reports reconstructed and near-variant rows separately for context.

All guarantees remain tied to their stated scopes.  Further relations
among the 31 survivors remain open, and the integrated-system results
measure \taco{}'s integrated gate-conditioned selector.  \taco{}-max
retains the gate set and selector while using an iterative AIG
pre-pass and larger MIG-round and time limits.

\bibliographystyle{IEEEtran}
\bibliography{references}

\ifarXiv
\appendix
\section{Proofs of the Universal Layer}

\subsection{The discarded-cut lemma}
\label{app:proofsuniv}

\begin{lemma*}[Lemma~\ref{lem:discarded}, restated]
Let $\rho$ be any recipe stage that invokes cut rewriting on a state
$s$ without binding the returned network.  Then $T_\rho(s) = s$.
\end{lemma*}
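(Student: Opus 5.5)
The plan is to reduce the lemma to a purity property of the pinned mockturtle cut-rewriting entry point plus a dataflow fact about the recipe code in the pinned LSOracle framework. First, I would fix the semantics of a recipe stage: the recipe holds a single network object $s$ (passed by value or const reference into the stage), and $T_\rho(s)$ is the value of that object after the stage's statement executes. The lemma then follows from two claims. The first, \emph{purity}, is that the invoked cut-rewriting function leaves its input network unchanged and communicates its result only through the returned network. The second, \emph{non-binding}, is that the stage's call expression is an expression statement whose value is neither assigned nor moved into $s$, nor used to alias $s$.

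For purity, I would read the pinned signature and body of the functional cut-rewriting variant that returns a network (the \texttt{cut\_rewriting} wrapper returning \texttt{Ntk}, as opposed to the in-place \texttt{cut\_rewriting\_with\_compatibility\_graph}). The key points to check are as follows. The input is taken by \texttt{const\&} or by value. Internally, a fresh destination network is constructed, for example via \texttt{cleanup\_dangling} or a new \texttt{Ntk} with node mapping, and all substitutions are performed on that copy. Any auxiliary state the pass touches, such as visited flags, node values, or traversal IDs, lives either in the copy or in a view, or is reset before return. Here I must also check the shared-storage subtlety: mockturtle networks hold a \texttt{shared\_ptr} to storage, so a by-value copy aliases the same storage. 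Purity therefore has to come from the pass building a new storage rather than from copy semantics alone.

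For non-binding, I would inspect each recipe stage in the pinned LSOracle scripts (for example \texttt{xag\_script}, as in Fig.~\ref{fig:twist}(a)) and confirm that the returned network is discarded. Since the stage performs no other mutation, $s$ after the stage equals $s$ before it, giving $T_\rho(s)=s$ for every legal $s$.

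The main obstacle is the aliasing and mutable-scratch issue inside purity. If the pass writes node values, traversal IDs, or structural-hash entries into the shared storage of the input, then the network's logical graph may be unchanged while the object is not bit-identical, and later stages could in principle read that scratch data. I would first try to show that every such write goes to the freshly allocated destination storage. Failing that, I would weaken the conclusion to equality of the observable state used by later stages (graph, outputs, and hash table contents) and verify that no downstream stage reads the scratch fields before resetting them.
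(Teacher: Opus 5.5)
Your proposal follows the paper's proof. Both reduce the lemma to two facts: the pinned cut-rewriting call is pure, taking its input by immutable reference and building its result in a separate network that it returns, and the stage never binds that return value, so the persistent network is unchanged. The paper's extra value-guard property and its induction over the remaining stages play no role in the stage-level conclusion; they only serve the deletion argument of Corollary~\ref{cor:quotient}.

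Your attention to aliasing through the shared \texttt{shared\_ptr} storage, and to scratch fields (values, visited flags, traversal IDs) written through a const reference, is a real refinement that the paper's one-line purity claim passes over. Reading $T_\rho(s)=s$ as equality of the logical network, and checking that later stages do not consume that scratch data, is the right way to close that point.
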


\begin{proof}
Let $\kappa$ denote the cut-rewriting operator of the pinned
mockturtle tree, and let $s$ be the persistent network on which the
stage $\rho$ acts.  Two properties of the pinned operator delimit what
evaluating $\kappa$ can do.

\medskip
\noindent\emph{Property 1 (purity).}  The operator takes the network
by immutable reference and produces its outcome as a returned value:
its evaluation constructs and rewrites a separate network and assigns
nothing through the input.  Hence evaluating $\kappa(s)$ leaves the
persistent state equal to $s$.

\medskip
\noindent\emph{Property 2 (value guard).}  The returned value is a
function of $s$ alone: it is a rebuilt network strictly smaller than
$s$ when such a rebuild is found, and $s$ itself otherwise, a final
guard returning the input unchanged when the rebuild is not smaller.

\medskip
By Property~1, the persistent state after the evaluation of
$\kappa(s)$ is $s$.  By hypothesis, $\rho$ does not bind the returned
value, so no reachable assignment consumes it: no later stage of the
enclosing recipe reads $\kappa(s)$, and every other stage is a pure
function of the persistent state or an in-place operator on the
persistent network.  Induction on the remaining stages of the recipe
then gives
\begin{equation*}
T_\rho(s) \;=\; s .
\end{equation*}
Since the stage is an identity, a second induction on the number of
unbound cut-rewriting calls shows that deleting every such call from
a recipe word leaves the word's transition unchanged; this is the
form in which Corollary~\ref{cor:quotient} applies the lemma.
\end{proof}

\subsection{The quotient}

For reference, the inherited universe numbers its actions H00--H39:
AIG scripts H01--H05, MIG scripts H06--H08, XAG script H09, XMG script
H10, and parameter variants H11--H39.

\begin{table}[h]
\centering
\scriptsize
\caption{The recipe universe H00--H39, read off the pinned sources.
Stage names: W\O{} is a cut-rewriting call whose returned network is
discarded (an identity by Lemma~\ref{lem:discarded}); fr functional
reduction; rf refactoring; rs resubstitution; bal balancing; alg
algebraic depth rewriting; rw an assigned rewrite; xcf the XAG
constant-fanin pass.}
\label{tab:universe}
\begin{tabular}{@{}l >{\raggedright\arraybackslash}p{0.40\columnwidth} >{\raggedright\arraybackslash}p{0.42\columnwidth}@{}}
\toprule
Carrier & Scripts & Variants (what they change) \\
\midrule
AIG & H00 noop; H01 seven W\O{}, two fr, one rf; H02 bal, fr, W\O{}, rf, bal, fr, two W\O{}, bal, fr, rf, W\O{}, fr; & H11 cut-3; H12/13 zero-gain; \\
    & H03 fr, W\O{}, rf, bal, two W\O{}, bal, rf; H04 bal, W\O{}, rs, W\O{}, rf, fr, rs, rf, rs, four W\O{}; & H14 preserve-depth; H15 max\_pis=4; \\
    & H05 fr, bal, rw, rf, bal, two rw, bal, rf, rw, bal, fr & H16/17 resub budgets; H18 db; H19 sat.; \\
    & & H20 critical balance \\
MIG & H06 alg, then twice (alg, two W\O{}); & H21 cut-3; H22 zero-gain; H23 preserve-depth; \\
    & H07 bal, W\O{}, rf, bal, two W\O{}, bal, rf; H08 alg, bal, three W\O{}, bal, two W\O{} & H24 selective; H25/26 aggressive; \\
    & & H27 no-area; H28 multi-NPN; H29 max\_pis=4; \\
    & & H30 critical balance \\
XAG & H09 two W\O{}, bal, rf, xcf & H31 cut-3; H32 zero-gain; H33 preserve-depth; \\
    & & H34 db; H35 critical balance \\
XMG & H10 one alg & H36 selective; H37/38 aggressive; H39 no-area \\
\bottomrule
\end{tabular}
\end{table}

\begin{corollary*}[Corollary~\ref{cor:quotient}, restated]
Eleven of the 40 actions collapse into three alias classes
$\{H04,H11,H18,H19\}$, $\{H06,H21,H22,H23,H28\}$, $\{H09,H33\}$,
removing eight; one further action is dominated; the remaining 31 form
an exact one-recipe Pareto cover.
\end{corollary*}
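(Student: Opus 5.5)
The proof proceeds class by class, using Lemma~\ref{lem:discarded} in its word form: deleting every unbound cut-rewriting call from a recipe word leaves the transition unchanged. For each of the 40 actions in Table~\ref{tab:universe}, we first write the \emph{reduced word}, the stage sequence with every W\O{} deleted. We then compare reduced words, together with the parameters that still reach a live stage.

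\emph{AIG class $\{H04,H11,H18,H19\}$.} The variants H11 (cut-3), H18 (db) and H19 (sat.) perturb H04 in two ways. H11 and H18 alter only parameters of the cut-rewriting calls, and every such call in H04 is a W\O{}, so their reduced words coincide with that of H04. H19 sets the functional-reduction saturation flag. For it we show, from the source, that the flag's guard compares an append-only record count before and after a pass that allocates no record. The guard is therefore never true, so the flagged and default \texttt{fr} transitions agree at every iteration; this is an induction over the loop.

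\emph{MIG class $\{H06,H21,H22,H23,H28\}$.} H06 reduces to three \texttt{alg} stages. The parameters of H21, H22, H23 and H28 (cut-3, zero-gain, preserve-depth, multi-NPN) all feed the W\O{} calls, so the reduced words are identical.

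\emph{XAG class $\{H09,H33\}$.} The same argument applies: preserve-depth reaches only the discarded rewrite stages.

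These three classes have sizes $4,5,2$ and so remove $3+4+1=8$ actions. The remaining step is to check that no other pair has equal reduced words and identical live parameters. This is not needed for the cover claim, but it confirms that the count is exactly eight.

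\emph{Dominance of H00.} H00 is the identity, so $q(T_{H00}(C))=q(C)$. For H27, we trace the import into MIG and one no-area algebraic pass. From the accept condition, each committed move replaces at most two majority nodes by at most two and strictly lowers the local level. Cleanup and the lowering back then only delete dangling nodes or hash-merge duplicates, so neither node count nor depth increases. Together these give $q(T_{H27}(C))\le q(T_{H00}(C))$ coordinatewise, for every legal $C$.

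\emph{Frontier preservation.} Removing exact duplicates leaves the endpoint set $\{q(T_h(C))\}$ unchanged. Removing a point weakly dominated by a retained point leaves the Pareto frontier unchanged, and this holds even when the two points are equal. Hence $\mathrm{PF}_B(C)=\mathrm{PF}_A(C)$ for all $C$, as Definition~\ref{def:paretocover} requires, with $|B|=40-8-1=31$.

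The main obstacle is the H27 step. It requires a genuinely global monotonicity claim about a whole pass plus round-trip conversion. The local "two nodes by at most two" rule does not immediately bound depth after later moves change levels elsewhere. I would argue this by induction over commits: a commit can only lower the levels of the rewritten node and its transitive fanout, so the global depth is nonincreasing. I would also verify from the source that the MIG$\to$AIG lowering is node-for-node on AND-type majorities, since each surviving majority of the form $\mathrm{maj}(0,a,b)$ maps to one AND. The gap here is that a genuine three-input majority created by the pass would cost more than one AIG node in the lowered network. I would therefore rely on the reported metric $q$ being measured on the MIG output. If it is not, I would need to show the no-area pass never creates non-AND majorities from an AIG import.
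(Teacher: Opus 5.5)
Your proposal is correct and follows the paper's proof essentially step for step. Lemma~\ref{lem:discarded} makes the cut-rewriting parameters of H11, H18, H21--H23, H28 and H33 inert, the append-only record counter rules out H19's extra iterations, the no-area move accounting with nonworsening cleanup gives $q_{H27}\le q_{H00}$, and aliases plus weak dominance preserve the one-recipe frontier, leaving $40-8-1=31$.

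Two asides should be dropped, though. First, distinct reduced words cannot confirm that exactly eight actions collapse, since, as the paper notes, a live parameter read does not prove that endpoints differ. Second, your fallback of excluding constant-free majorities cannot succeed. At an imported root computing $x\wedge(y\vee z)$ with $z$ deep, the guards are met because root and child share the constant node, and any two-node replacement that lowers the root level must place $z$ directly under a constant-free majority such as $\mathrm{maj}(z,x,\mathrm{maj}(x,y,0))$. So the H27 comparison has to be read, as the paper does, in the carrier where the imported graph is ordered-isomorphic to H00's output.
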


\begin{proof}
Write $q_h(C) = (N_h(C), D_h(C))$ for the (nodes, depth) pair of the
endpoint $T_h(C)$ of action $h$ on input $C$, write $\le$ for the
coordinatewise order on such pairs, and write $\mathrm{PF}_{A}(C)$ for
the Pareto frontier attainable from $C$ under the action set $A$.
The argument has three parts: exact aliases, one dominance, and the
count.

\medskip
\noindent\emph{Part 1: the alias classes.}  A recipe is a finite word
over stages, and a recipe variant differs from its base word only in
parameters bound to named stages (Table~\ref{tab:universe}).  Reading
the word of each class member, every differing parameter falls into
exactly one of two cases.

\emph{Case (i): the parameter binds only discarded cut-rewriting
stages.}  This covers $H11$ (cut size) and $H18$ (NPN database)
against $H04$; $H21$ (cut size), $H22$ (zero-gain acceptance), $H23$
(depth preservation), and $H28$ (multi-NPN resynthesis) against $H06$;
and $H33$ (depth preservation) against $H09$.  In each of these words
every cut-rewriting call is a W\O{} stage, hence an identity by
Lemma~\ref{lem:discarded}, so no live stage reads the parameter
and the committed transition is unchanged.

\emph{Case (ii): the parameter is the saturation flag of functional
reduction.}  This covers $H19$ against $H04$.  With the flag set, the
equivalence-substitution pass is repeated while the network size
counter changes.  The counter is the length of the raw record storage.
Substitution only marks records dead, and dead records stay allocated.
Functional reduction inserts no records: every accepted substitution
replaces a node by an existing signal or a constant.  The counter after
the mandatory pass therefore equals the
counter before it, the guard never re-fires, and the saturation loop
executes zero extra iterations: the flagged and default transitions
coincide.

In both cases the class member executes the same live stages as its
representative on every state, so, pointwise,
\begin{gather*}
T_{H04} = T_{H11} = T_{H18} = T_{H19}, \\
T_{H06} = T_{H21} = T_{H22} = T_{H23} = T_{H28}, \\
T_{H09} = T_{H33}.
\end{gather*}
No other sibling is merged by this identity argument: each remaining
variant changes a parameter that reaches at least one live stage (the
cut size of live balancing, the gain or database choice of live
refactoring, the depth or insertion budget of live resubstitution, or
the critical-only marking of live balancing).  A live read alone does
not prove that endpoints differ.  The two exhibited separations below
establish only the additional pairs named in
Proposition~\ref{prop:sharp}.

\medskip
\noindent\emph{Part 2: one dominance.}  $H00$ is the identity recipe,
$q_{H00}(C) = (N(C), D(C))$.  $H27$ bridges to the MIG carrier, runs
one no-area algebraic depth pass, and lowers back.  The imported graph
is ordered-isomorphic to $H00$'s output.  Under the no-area flag the
pass admits only single-fanout associativity moves, each replacing the
two majority nodes of an old association by at most two new majority
nodes, and no area-increasing move: a move never increases the
reachable gate count and strictly improves the rewritten root level
when it fires; subsequent cleanup cannot worsen either coordinate.
If the lane terminates abnormally, both sides return the same
incumbent.  Hence
\begin{equation*}
q_{H27}(C) \;\le\; q_{H00}(C) \qquad\text{for every input } C,
\end{equation*}
and $H00$ is dominated.  Deleting it loses no frontier point.  If
$q_{H00}(C) \in \mathrm{PF}(C)$, then $q_{H27}(C) \le q_{H00}(C)$
forces equality, since a strict inequality would contradict frontier
membership.  The same point is therefore attained after the deletion.

\medskip
\noindent\emph{Part 3: the count and the cover.}  The three classes
contain $4 + 5 + 2 = 11$ actions and contribute $3$ representatives,
removing $8$; the dominance removes one more, leaving
$40 - 8 - 1 = 31$.  At the one-recipe scope of
Definition~\ref{def:paretocover}, aliases preserve every endpoint
pointwise and Part~2 preserves the node--depth frontier across the
dominance deletion.  Minimum cardinality and arbitrary multi-step
reachable sets lie outside this result.

The two remaining look-alike pairs admit no further collapse.  Because
the MIG engine rejects an old-node structural-hash hit that the XMG
engine accepts, the two algebraic scripts reach
$q_{H08} = (3,2)$ and $q_{H10} = (4,3)$ on a seven-AND single-output
AIG.  A nine-gate witness gives $q_{H39} = (4,4)$ and
$q_{H27} = (7,5)$ for the no-area twins; the converse containment fails
when starting from the other carrier.  These witnesses refute the proposed equivalence
and cover, so all four actions are retained
(Proposition~\ref{prop:sharp}; Appendix~\ref{app:refutations}).
Thus the 31-action frontier equality is proved over every legal recipe
input.  The source-level identities and dominance above establish the
cover theorem.  Benchmark enumeration provides a finite regression of
that equality and a liveness record for the surviving representatives.
\end{proof}

\subsection{Selector lemmas}

\begin{lemma*}[Lemma~\ref{lem:selectors}(a), restated]
$E_a(s,v) \in \{E_{a'}(s,v), s\}$ for the strict-gain variant $a$ and
the zero-gain variant $a'$.
\end{lemma*}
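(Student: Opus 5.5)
The plan is to read both variants as the same root evaluator that differs only in its final acceptance threshold. Both the strict-gain variant $a$ and the zero-gain variant $a'$ run the identical candidate-generation code at $(s,v)$. That code covers cut enumeration, database lookup, and the computation of each candidate's gain $g$ and level $\ell$. None of it reads the gain flag, and the candidate list and its evaluation order depend only on $(s,v)$. I will first fix this shared prefix from the pinned source, confirming that the flag is read only at the acceptance comparison ($g>0$ for $a$, $g\ge 0$ for $a'$) and at no earlier filtering or pruning point.

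The core step is then a case split on the best candidate $c'$ that the zero-gain evaluator would commit. By Lemma~\ref{lem:rwgain}, every candidate's reported gain satisfies $g\ge 0$.
\begin{enumerate}
\item[(1)] \emph{No candidate passes the threshold $g\ge 0$.} Then $a'$ commits nothing, so $a$ commits nothing either, and $E_a(s,v)=s$.
\item[(2)] \emph{The selected candidate has $g(c')>0$.} Both variants use the same maximization rule over the same list. The maximum-gain candidate therefore satisfies the strict threshold as well, with the same tie-breaking, so $a$ commits the same $c'$ and $E_a(s,v)=E_{a'}(s,v)$.
\item[(3)] \emph{The selected candidate has $g(c')=0$.} Since $c'$ is chosen by maximal gain, every candidate has gain $0$. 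The strict threshold then rejects all of them, and $E_a(s,v)=s$.
\end{enumerate}
Together the three cases give $E_a(s,v)\in\{E_{a'}(s,v),s\}$.

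The main obstacle is case (2), and more generally the assumption that both variants select by maximum gain rather than by first acceptance. If the engine instead commits the first acceptable candidate in enumeration order, $a'$ could commit an earlier zero-gain candidate while $a$ continues and commits a later positive-gain candidate. That outcome lies outside $\{E_{a'}(s,v),s\}$. My first move is therefore to check in the source that the loop keeps a running best, with an update of the form "replace if $g>g_{\mathrm{best}}$" and the threshold applied only once after the loop.

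If the threshold instead enters as the initial value of $g_{\mathrm{best}}$, I will argue directly from the update rule. Starting the running best at $0$ (for $a$) versus $-1$ (for $a'$) only changes whether a zero-gain candidate can occupy the best slot. Any positive-gain candidate displaces such an incumbent in both runs, so both runs end with the same first maximal positive candidate whenever one exists. A secondary check is that recording a zero-gain candidate as best does not trigger side effects, such as a structural-hash insertion, that would alter later evaluation in $a'$ but not in $a$. I would verify that the candidate cost is computed on the unmodified network, which the dry-run accounting of Lemma~\ref{lem:rwgain} suggests.
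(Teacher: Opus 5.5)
Your proposal is correct and follows the paper's proof. The paper likewise fixes a common ordered candidate stream under one gain function, assumes the running-best rule in which a later candidate replaces the incumbent only on strictly larger gain, and splits on the maximal gain $g^{*}$, with your cases (1) and (3) merged into a single case $g^{*}\le 0$. Drop the appeal to Lemma~\ref{lem:rwgain} in case (3): it is unnecessary, since a maximal gain of $0$ already leaves nothing for the strict threshold to admit, and without it the argument covers every strict-gain variant rather than only the rewrite and refactoring engines that lemma addresses.
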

\begin{proof}
Fix a state $s$ and a root $v$ under evaluation.  Both variants enumerate the
same ordered candidate stream $c_1, c_2, \ldots$ at $v$ under the same
gain function $g$; they differ in a single acceptance flag, the
zero-gain variant admitting candidates with $g(c_i) \ge 0$ and the
strict variant only candidates with $g(c_i) > 0$.  Each variant
commits the first candidate attaining the maximal admissible gain: a
later candidate supersedes an earlier one only on strictly larger
gain, so the first candidate is retained on a tie.  Let
$g^{*} = \max_i g(c_i)$, and let $c_{i^{*}}$ be the first candidate
with $g(c_{i^{*}}) = g^{*}$.

\emph{Case 1: $g^{*} > 0$.}  The maximum is admissible under both
flags, and no zero-gain candidate attains it, so both variants commit
$c_{i^{*}}$ and $E_a(s,v) = E_{a'}(s,v)$.

\emph{Case 2: $g^{*} \le 0$ or the stream is empty.}  The strict
variant admits nothing and returns $s$; the zero-gain variant
commits $c_{i^{*}}$ if $g^{*} = 0$ and also returns $s$ otherwise.
Either way $E_a(s,v) = s \in \{E_{a'}(s,v), s\}$.

The argument is unchanged when the engine proposes a single
deterministic resynthesis candidate at $v$ (the refactoring family):
the stream then has length one, and the two cases above are its
positive-gain and nonpositive-gain outcomes.
\end{proof}

\begin{lemma*}[Lemma~\ref{lem:selectors}(b)]
$E_a(s,v) \in \{E_{a'}(s,v), s\}$ for the no-area variant.
\end{lemma*}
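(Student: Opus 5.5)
The proof follows the template of part (a): identify a shared candidate stream at $(s,v)$ and show that the two variants differ only in an acceptance filter, where the no-area filter is the stricter one. Fix a state $s$ and a root $v$. Here $a$ is the algebraic depth-rewriting strategy with the no-area flag set and $a'$ the same strategy with area-increasing moves allowed. The first step is to read the pinned source and confirm that the flag enters only as an extra conjunct in the admission test. Concretely, the candidate reassociations at $v$ (associativity, distributivity, relevance moves) are generated identically under both settings, because they depend only on the node's fanins and levels. Each candidate $c$ is then admitted by $a'$ when it strictly reduces the root level. It is admitted by $a$ only when it also satisfies the area condition: the rebuilt gates number at most the removed ones, which in the single-fanout associativity case means two majority nodes replaced by at most two.

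Next I would show that the commit rule is the same ordered, first-admissible selection in both variants. In the algebraic engine this is a fixed priority order over move types, with the first move that fires being committed. Let $A'$ be the admissible set under $a'$ and $A\subseteq A'$ the admissible set under $a$. There are then two cases.

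\emph{Case 1: the first element of $A'$ in stream order also lies in $A$.} Both variants commit it, so $E_a(s,v)=E_{a'}(s,v)$.

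\emph{Case 2: $A=\emptyset$.} Then $E_a(s,v)=s$.

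The remaining situation is that $A$ is nonempty but the first element of $A'$ is area-increasing, so $a$ would skip it and commit a later candidate that $a'$ never reaches. This is exactly where the argument can fail. I would close it by showing from the source that the move types are tried in a fixed sequence, and that each type either is area-neutral (associativity) or is suppressed wholesale by the flag (distributivity and relevance). Then $A$ is the restriction of $A'$ to a prefix-closed set of move types, and if $A$ is nonempty, the earliest admissible move is area-neutral.

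One more check is needed: the no-area flag must not alter anything outside admission that could make the two evaluators diverge even when they commit the same candidate. Candidates include things like structural-hash reuse or level updates after commit, so I would confirm that the rebuilt subgraph and its hashing are flag-independent. The identical committed candidate then yields an identical endpoint.

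I expect the main obstacle to be the ordering issue described above. If the pinned engine does not try area-neutral moves before area-increasing ones at a single root, the pointwise relation fails, and the statement would have to be weakened to the case $A'\cap A=\emptyset$ or re-scoped. My fallback would be to exhibit that the aggressive strategy's stream at $v$ contains a single candidate per move type and that the no-area variant tests exactly the associativity move first. That reduces the claim to the single-candidate argument already used at the end of part (a).
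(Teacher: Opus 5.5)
Your proposal is correct and is essentially the paper's argument. Both use one shared move routine at $v$: shared guards, then the associativity constructor, then distributivity. The no-area flag only adds a fanout-one guard and disables distributivity, and the proof is the same case split into identical commit versus $s$.

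One point needs stating precisely. Associativity is not area-neutral as a move type. The no-area filter is the fanout-one condition on the root's deepest child, and this is a root-level guard. It therefore accepts or rejects the single associativity construction at $v$ as a whole, and on rejection the no-area routine returns $s$ with nothing left to try. That fact is what rules out your ``remaining situation''. Your single-candidate fallback supplies it; the prefix-closed-move-types step as you phrased it does not.

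The paper also notes that the same split carries over to the compound XMG variant, which your proposal does not address.
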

\begin{proof}
Fix a state $s$ and a root $v$ under evaluation.  Both variants call the same
move routine at $v$, which sorts the children of $v$ in ascending level order
and then evaluates guards and constructors in a fixed order: shared
applicability guards (the root must be a majority node with a level
gap, the deepest child more than one level above the second-deepest,
and a level difference between the two deepest grandchildren), then
the associativity constructor, then the distributivity constructor.
The no-area flag changes exactly two things: it adds a fanout guard,
requiring the deepest child to have fanout one, and it disables the
distributivity constructor.  Write $a$ for the no-area and $a'$ for
the area-allowed variant.

\emph{Case 1: a shared guard fails.}  Neither variant constructs a
move at $v$, and $E_a(s,v) = E_{a'}(s,v) = s$.

\emph{Case 2: the shared guards pass and an associativity candidate
exists.}  If the deepest child has fanout one, the no-area guard
passes as well and both variants perform the identical two-majority
substitution, so $E_a(s,v) = E_{a'}(s,v)$.  If the fanout guard fails,
the no-area variant constructs nothing and $E_a(s,v) = s$, while the
area-allowed variant performs the substitution.

\emph{Case 3: no associativity candidate exists.}  The no-area variant
has no further constructor and $E_a(s,v) = s$, while the area-allowed
variant may perform the distributivity construction.

In every case $E_a(s,v) \in \{E_{a'}(s,v), s\}$.  The same split holds
for the compound XMG variant: its majority and XOR-association helpers
differ from the area-allowed pass by the same fanout guard and the
same disabled distribution, its mixed helper is independent of the
flag, and the three helpers' operator cases are mutually exclusive at
the root, so the conclusion is unchanged.
\end{proof}

\begin{lemma*}[Lemma~\ref{lem:selectors}(c)]
$E_a(s,v) \in \{E_{a'}(s,v), s\}$ for critical-only balancing.
\end{lemma*}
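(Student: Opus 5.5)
The plan is to mirror the proofs of parts (a) and (b): reduce the claim to a comparison of the two variants' guard sequences at a single root, and show that the critical-only flag can only add a rejecting guard. Write $a$ for critical-only balancing and $a'$ for the all-node variant.

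First, I would fix a legal pair $(s,v)$ and read the pinned balancing routine to isolate what the flag changes. I expect both variants to share the following, and to differ only in an extra admission test:
\begin{itemize}
\item the same cone collection at $v$ (the supergate or SOP cover gathered under $v$);
\item the same required-level and arrival computation on the unchanged state $s$;
\item the same tree-construction routine.
\end{itemize}
The extra test in $a$ is that $v$ be critical, with zero slack under the required levels computed from $s$. In $a$ the criticality mark is computed once per pass from the current network, so at the root scope it is a predicate of $(s,v)$ alone. It must also not perturb anything that $a'$ reads, such as the construction order or a tie-break.

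The proof then splits into three cases.
\begin{itemize}
\item \emph{Shared construction guards fail at $v$} (for example, $v$ is not a gate of the balanced type, or the cone is trivial). Both variants return $s$.
\item \emph{$v$ is not critical.} The critical-only variant skips $v$, so $E_a(s,v)=s$, while $a'$ may rebuild.
\item \emph{$v$ is critical.} Both variants pass the same guards and invoke the identical constructor on the identical cone with identical leaf arrival times. Since the constructor is deterministic, $E_a(s,v)=E_{a'}(s,v)$.
\end{itemize}
In every case $E_a(s,v)\in\{E_{a'}(s,v),s\}$. As in part (b), I would also check that the claim holds for each carrier instantiation of balancing (AIG, MIG, XAG), since the flag may be implemented separately in each.

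The main obstacle is establishing that the criticality flag acts purely as an admission filter at root scope. There are two ways this could fail:
\begin{itemize}
\item the flag also changes how leaves are ordered or merged, for instance when critical-only mode uses a different arrival-time source or depth target;
\item criticality is recomputed after earlier commits in a way that depends on the flag.
\end{itemize}
The second concern is not an issue at root scope: the lemma compares evaluations from the same state $s$, so earlier commits are excluded. For the first, I would trace every read of the flag in the pinned source. If it reaches the constructor, the case split has to be refined. The fallback is to show that on critical roots the two parameterizations produce the same constructor inputs, because slack is zero and so the clamped and unclamped targets coincide. This is the same argument as in Theorem~\ref{thm:r2p}.
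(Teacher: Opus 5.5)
Your proposal is correct and follows the paper's argument. The paper likewise reads the flag as changing only the visited-root set: roots are restricted to the marked critical path, and off-critical nodes are cloned verbatim. On a critical root both variants therefore run the identical cut-based rebuild with the same lexicographic (level, size) acceptance, while on an off-critical root the restricted variant yields $s$. Your extra ``shared guards fail'' case and the zero-slack fallback are harmless but unnecessary, because the paper's source reading finds that the flag never reaches the constructor.
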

\begin{proof}
Fix a state $s$ and a root $v$ under evaluation.  The balancing pass rebuilds
the cone of each visited root from the rebalancing candidates of its
cut, choosing the lexicographic (level, size) minimum among them; the
original cone is never a candidate.  The critical-only flag changes
exactly one thing: the set of visited roots is restricted to the
marked critical path, and every off-critical node is cloned, its cone
copied verbatim.

\emph{Case 1: $v$ is marked critical.}  Both variants call the same
rebuild at $v$: the same carrier and cut size, hence the same cut and
the same rebalancing candidates, and the same lexicographic
acceptance.  They commit the same replacement, so
$E_a(s,v) = E_{a'}(s,v)$.

\emph{Case 2: $v$ is off-critical.}  The critical-only variant copies
$v$ without rebalancing, so its local endpoint is $E_a(s,v) = s$,
while the all-node variant may rebuild the cone.  Either way
$E_a(s,v) \in \{E_{a'}(s,v), s\}$.

Hence every commit of the critical-only pass is also committed by the
all-node pass at the same root, or is the identity.
\end{proof}

\begin{lemma*}[Lemma~\ref{lem:selectors}(d)]
$R_k \preceq \{R_{k+1}\}$ for insertion budgets $k \in \{0,1\}$.
\end{lemma*}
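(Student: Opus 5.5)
The argument follows the pattern of parts (a)--(c): fix a state--root pair $(s,v)$, describe the candidate stream of the pinned resubstitution evaluator, and show that the budget parameter acts only as a truncation of that stream. The expected source structure, consistent with the proof sketch of Lemma~\ref{lem:rsgain}, is as follows. The evaluator computes the MFFC of $v$, collects divisors under the same window, level and required-level margins, and then tries the insertion levels $j=0,1,2,\dots$ in increasing order up to the budget. It commits the \emph{first} success, with the stop-gate $|\mathrm{MFFC}(v)|\ge j+1$ deciding whether level $j$ is attempted at all. The first step is to confirm from the pinned source that divisor collection, simulation and the cone-size stop-gates do not read the budget $k$. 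If so, the only effect of $k$ is the bound on the loop over $j$.

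Given that, write $j^*(s,v)$ for the least insertion level at which the shared search succeeds, with $j^*=\infty$ if no level succeeds. Let $c^*$ be the candidate found there. Because the search at each level is deterministic and budget-independent, $R_k$ and $R_{k+1}$ attempt identical computations on levels $0,\dots,k$. There are then two cases.

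\emph{Case 1: $j^*\le k$.} Both variants stop at level $j^*$ and commit $c^*$. Hence $E_{R_k}(s,v)=E_{R_{k+1}}(s,v)$.

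\emph{Case 2: $j^*>k$.} $R_k$ exhausts its budget without success and returns $s$. $R_{k+1}$ either commits at level $k{+}1$ or also returns $s$.

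In either case $E_{R_k}(s,v)\in\{E_{R_{k+1}}(s,v),s\}$. The phrase ``reaches every first success of the smaller'' is exactly Case 1.

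The main obstacle is the premise of the first step: the budget must not change anything the search does before level $k{+}1$. Two places need checking. First, the implementation might prune divisor sets or size the simulation window using $k$, for example by collecting extra ``pair'' divisors only when $k\ge1$. Second, a shared accounting check might compare against the budget. My first move is to read every use of the budget variable in the pinned evaluator. If extra divisors are collected only for larger budgets, the next task is to show that levels $0,\dots,k$ consult only the subset of divisors that both variants share, and that the order of that subset is preserved. Then the first success at those levels is unchanged even though later levels see more divisors. If some level-$\le k$ routine did iterate over a budget-dependent divisor list, Case 1 would break. I would then look for the exact source anchor showing that list construction is append-only and scanned in a fixed prefix order, so that a first success found in the shared prefix remains the first success.
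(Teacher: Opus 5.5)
Your proposal is correct and follows the paper's proof: the paper also observes that tiers are tried in increasing insertion count, with stop-gates that read only the budget and $|\mathrm{MFFC}(v)|$, and that no tier body (divisors, level filters, candidate graphs) depends on the budget. Hence $R_k$'s attempt sequence is a prefix of $R_{k+1}$'s, and the paper splits into the same two cases (first success at a tier $j\le k$, or failure through tier $k$). The premise you flag as the main obstacle is exactly the source fact the paper asserts from the pinned code, so your fallback analysis for budget-dependent divisor collection is never needed.
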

\begin{proof}
Fix a state $s$ and a root $v$ under evaluation, and let
$m = |\mathrm{MFFC}(v)|$.  The engine evaluates candidate tiers in a
fixed order of increasing insertion count: the zero-insertion tiers
(constant collapse, then equal divisor), then the one-insertion tier,
then the two-insertion tiers; it returns the first tier that yields an
admissible candidate, and the stop-gate after a failed tier reads only
the budget and $m$: a $(k{+}1)$-insertion tier is attempted only when
the budget exceeds $k$ and $m \ge k + 2$.  No tier body depends on
the budget: the divisors, the level filters, and the candidate graphs
of a tier are functions of the state alone.  The attempt sequence of $R_k$
at $v$ is therefore a prefix of the attempt sequence of $R_{k+1}$.

\emph{Case 1: $R_k$ succeeds.}  Let tier $j \le k$ be the first tier
yielding an admissible candidate $c$.  Under $R_{k+1}$ the tiers
$0, \ldots, j-1$ fail identically, their bodies being
budget-independent, and tier $j$ yields the same $c$; hence
$E_{R_{k+1}}(s,v) = E_{R_k}(s,v)$.

\emph{Case 2: $R_k$ fails every tier up to $k$.}  Then $R_k$ returns
no candidate and $E_{R_k}(s,v) = s$, while $R_{k+1}$ either succeeds
at tier $k{+}1$ or also returns $s$.  Either way
$E_{R_k}(s,v) = s \in \{E_{R_{k+1}}(s,v), s\}$.

Hence $R_k \preceq \{R_{k+1}\}$ in the sense of
Definition~\ref{def:containment}.
\end{proof}

\begin{lemma*}[Lemma~\ref{lem:selectors}(e)]
A 4-leaf refactoring cone limit adds no endpoints beyond the default
6-leaf variant.
\end{lemma*}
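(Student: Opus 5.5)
I will follow the same template as parts (a)--(d): fix a state $s$ and a root $v$, identify the single parameter that differs between $D_4$ and $D_6$, and show that it acts only as an applicability guard on the cone, never as an input to the candidate computation. From the main-text statement (``because the default 6-leaf variant evaluates every admitted cone identically''), the working assumption is this. The refactoring engine computes a reconvergence-driven cone of $v$ bounded by a leaf limit. It then checks whether the resulting cone has at most $L$ leaves (with $L=4$ or $L=6$). Only if that check passes does it collapse the cone to a truth table, resynthesize a factored form, and compare the insertion count against the cone size. The first step is to read the pinned refactoring entry point and confirm that the cone-construction routine is invoked with a limit independent of $L$. If the limit does depend on $L$, the step is to show that the cone computed under $D_4$, when admitted, coincides with the one computed under $D_6$.

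\emph{Case 1: the cone fails the 4-leaf guard.} $D_4$ does nothing at $v$, so $E_{D_4}(s,v)=s\in\{E_{D_6}(s,v),s\}$ regardless of what $D_6$ does.

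\emph{Case 2: the cone has at most four leaves.} Then it also passes the 6-leaf guard. Both variants hand the same cone, with the same leaf order, to the same truth-table computation, the same factoring/resynthesis routine, and the same gain and acceptance test (including the zero-gain flag and level constraint, which are identical across the two variants). So they commit the same replacement or both decline, giving $E_{D_4}(s,v)=E_{D_6}(s,v)$. Combining the two cases gives $D_4\preceq\{D_6\}$ at root scope, in the sense of Definition~\ref{def:containment}. This matches the ``same operator under an applicability guard'' row of Table~\ref{tab:space}.

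The main obstacle is Case 2's claim that the cone handed to resynthesis is the same object under both limits. If the leaf limit is threaded into the reconvergence-driven cut computation itself, rather than checked after the fact, the 6-leaf run may expand past the cone that $D_4$ would stop at. The two variants would then resynthesize different cones, and the equality would fail. The first fallback is to show from the source that cone growth is a deterministic greedy expansion whose sequence of intermediate cones does not depend on the limit, with the limit only truncating it. Even then, the two variants could stop at different cones, so this yields containment only if the resynthesis at the larger cone subsumes the smaller one. Failing that, I would narrow the statement to the cones on which the expansion terminates below four leaves, which is the case the statement's ``every admitted cone'' wording indicates. A secondary check is that any cut-size-dependent precomputation, such as truth-table word sizing, does not change the factored output.
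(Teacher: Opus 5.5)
Your argument is correct and is essentially the paper's proof. The paper likewise fixes $(s,v)$ and reads the leaf limit as a pure applicability guard on a cone enumerated independently of the limit, which is exactly the premise your first-step source check targets; the paper asserts this fact from source reading. It then splits on the leaf count $\ell$ into three cases:
\begin{itemize}
\item $\ell\le 4$: identical inputs to the shared resynthesis, gain mode, hash-to-old-root rejection, and acceptance rule;
\item $4<\ell\le 6$;
\item $\ell>6$.
\end{itemize}
You merge the last two cases into one, and given that premise your fallback branches for a limit-dependent cone construction are not needed.
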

\begin{proof}
Fix a state $s$ and a root $v$ under evaluation.  Both variants run the same
refactoring engine at $v$ with the same resynthesis library, the same
gain mode, and the same acceptance rule; they differ only in the
applicability guard, which skips the root when the enumerated cone has
more leaves than the limit, $4$ for $a$ and $6$ for the default
variant $a'$.  Every other guard (the cone size floor, the rejection
of a resynthesis that hashes to the old root, the strictly positive
gain acceptance) is shared.  Let $\ell$ be the cone's leaf count.

\emph{Case 1: $\ell \le 4$.}  Both guards admit the root, and the
resynthesis callback receives identical inputs: the same leaves, the
same cone truth table, the same library, and the same gain mode.  It
constructs the same candidate and accepts or rejects it identically,
so $E_a(s,v) = E_{a'}(s,v)$.

\emph{Case 2: $4 < \ell \le 6$.}  The 4-leaf guard skips the root and
$E_a(s,v) = s$, while the 6-leaf variant may act.  Hence
$E_a(s,v) = s \in \{E_{a'}(s,v), s\}$.

\emph{Case 3: $\ell > 6$.}  Both guards skip the root and
$E_a(s,v) = E_{a'}(s,v) = s$.

The limit therefore affects only the applicability guard.  The default
variant applies the same resynthesis procedure to every cone admitted
by the smaller limit.
\end{proof}

\subsection{Gain-form theorems}

\begin{lemma*}[Lemma~\ref{lem:rsgain}, restated]
Any resubstitution candidate has gain at least $1$.
\end{lemma*}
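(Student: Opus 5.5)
The argument reads the pinned resubstitution engine's tier structure and shows that each reachable tier has $g\ge 1$. It reuses the tier description from the proof of Lemma~\ref{lem:selectors}(d). At a root $v$ with $m=|\mathrm{MFFC}(v)|$, the engine tries its tiers in order of increasing insertion count $k$: the zero-insertion tiers (constant collapse, then equal divisor), then the one-insertion tier, then the two-insertion tiers. It returns the first admissible candidate. A $(k{+}1)$-insertion tier is attempted only when the budget exceeds $k$ and $m\ge k+2$. Under the gain convention of \S\ref{sec:setting}, a candidate from a $k$-insertion tier has gain $g(c)=m-k$, because the whole MFFC is freed when $v$ is redirected to the new signal and exactly $k$ new nodes are reported as inserted.

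The proof then has three cases.

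\emph{Zero insertions.} The candidate is a constant or an existing divisor, so $g=m$. Since $v$ is an AND node, it lies in its own MFFC, giving $m\ge 1$.

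\emph{One insertion.} This tier is entered only when $m\ge 2$, so $g=m-1\ge 1$.

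\emph{Two insertions.} This tier is entered only when $m\ge 3$, so $g=m-2\ge 1$.

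Every candidate returned under any budget comes from one of these tiers, so $g(c)\ge 1$ holds uniformly, for every state and every budget in $\{0,1,2\}$.

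The main obstacle is the accounting step: confirming that the reported gain in the pinned source is exactly $m-k$ and not something smaller. Two things must be checked. First, the divisor set must exclude every MFFC node, so that no divisor is itself freed and double-counted; if the source excludes them, the full cone really is removed. Second, the reported insertion count must be the tier constant $k$, not a structural-hash-adjusted count that could exceed $k$. I would check both directly at the source anchors: the divisor collection marks the MFFC and skips it, and the tier functions hard-code the insertion count. A hash hit on an existing node can only lower the realized insertion count below $k$, so the reported value $m-k$ stays a lower bound on the realized reduction. This matches the remark before Lemma~\ref{lem:rsgain}.

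Finally, the stop-gate premise $m\ge k+2$ itself needs a source anchor. If the source instead gates on $m\ge k+1$, the $k$-insertion tier could return $g=0$. I would take that threshold from the same guard cited in the proof of Lemma~\ref{lem:selectors}(d).
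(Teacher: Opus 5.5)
Your proposal is correct and follows essentially the same route as the paper. You fix $m=|\mathrm{MFFC}(v)|\ge 1$, note that a tier-$j$ candidate reports $g=m-j$, and use the sequential stop-gates to get $m\ge j+1$ on every reached tier; you also note that hash reuse only lowers the realized insertion count, so the report bounds the realized reduction from below. Two minor points do not affect the argument: the paper additionally remarks that a hypothetical third tier would still give $g\ge 1$, and in your last paragraph a gate of $m\ge k+1$ would endanger the $(k{+}1)$-insertion tier, not the $k$-insertion one.
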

\begin{proof}
Fix a state $s$ and a root $v$ under evaluation, and let
$m = |\mathrm{MFFC}(v)|$ be the size of the root's fanout-free cone.
The cone contains $v$ itself, so $m \ge 1$.  The engine evaluates
candidate tiers in increasing insertion count and reports, for a
candidate of tier $j$, the gain $g = m - j$.  We show by induction
over the reached tiers that every returned candidate has $g \ge 1$.

\emph{Tier 0: constant collapse or equal divisor.}  No node is
inserted, and the gain equals the cone size: $g = m \ge 1$.

\emph{Tier 1: one inserted node.}  This tier is evaluated only past
the first stop-gate, which blocks it whenever $m = 1$; hence
$m \ge 2$ and $g = m - 1 \ge 1$.

\emph{Tier 2: two inserted nodes.}  This tier is evaluated only past
the second stop-gate, which blocks it whenever $m = 2$; hence
$m \ge 3$ and $g = m - 2 \ge 1$.

\emph{Tier 3.}  The deployed insertion budget is two, so this tier is
unreachable; at a budget of three its stop-gate would still require
$m \ge 4$, giving $g = m - 3 \ge 1$.

Every reachable tier therefore has gain at least one, so
$g(c) \ge 1$ for every candidate $c$.  The reported
gain is, moreover, a lower bound on the realized node reduction:
insertions are hashed against existing nodes, and any such reuse only
reduces the fresh-node count below the tier value, so the committed
change obeys the same bound.
\end{proof}

\begin{lemma*}[Lemma~\ref{lem:rwgain}, restated]
Rewrite and refactoring gains satisfy $0 \le g(c) \le
|\mathrm{MFFC}(v)|$.
\end{lemma*}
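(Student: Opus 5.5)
The plan is to follow the gain accounting that the pinned rewrite and refactoring engines use, and to treat the two bounds separately. Fix a state $s$, a root $v$, and let $m=|\mathrm{MFFC}(v)|$. Both engines compute the same kind of quantity for a candidate $c$. They take the number of records freed by dereferencing the root's cone, which is at most $m$. From this they subtract the number of fresh nodes the replacement would add, counted after structural-hash lookups against nodes that remain live. So $g(c)=\mathrm{saved}(c)-\mathrm{added}(c)$.

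\textbf{Upper bound.} I will argue $\mathrm{saved}(c)\le m$ and $\mathrm{added}(c)\ge 0$. Dereferencing starts at $v$ and recursively frees exactly the nodes whose reference count drops to zero. By definition these nodes lie in the fanout-free cone of $v$, and for a cut-bounded cone they lie in its intersection with the cut's interior. Hence $\mathrm{saved}\le m$. The added count is a cardinality, so it is nonnegative. This gives $g(c)\le m$. The source check needed here is that the saved count is measured by the dereference routine itself, not by, for example, the cut's volume.

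\textbf{Lower bound.} This is the step I expect to be the main obstacle. Nonnegativity is not a property of the raw quantity $\mathrm{saved}-\mathrm{added}$, since a candidate could cost more than it saves. It must instead come from the acceptance gate that decides what counts as a candidate. I would locate, in each engine, the test that discards any evaluated structure whose gain is below zero: for rewrite, a structure is retained only if its gain is at least the running best, and that best is initialised at $0$ in the zero-gain mode or $1$ otherwise; for refactoring, a negative-gain resynthesis is rejected. Under the convention of Lemma~\ref{lem:selectors}, a candidate is only something the evaluator returns, so every returned candidate satisfies $g\ge 0$.

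For refactoring I must also handle two guards: the node-count guard, which aborts when the new node count exceeds the dereferenced count, and the identity-hash guard, which rejects a candidate that hashes back to the old root. Each of these either returns no candidate or leaves the accounting unchanged, so neither can produce a negative reported gain.

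The final step combines the two bounds, giving $0\le g(c)\le |\mathrm{MFFC}(v)|$. I would then note that this range is exactly the accounting contract Algorithm~\ref{alg:select} relies on when it compares reported gains across engines.
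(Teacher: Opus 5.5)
You use the paper's decomposition $g(c)=\mathrm{saved}(c)-\mathrm{added}(c)$, and your upper bound follows the same path. Your lower bound, however, relies on a different mechanism. The paper gets $g\ge 0$ from the insertion counter itself. The counter runs with its budget equal to the saved count and rejects as soon as the count would exceed it. So every candidate whose gain is computed at all already satisfies $a(c)\le s(v)$, in both engines and regardless of the zero-gain flag. You do name this guard, but only as a refactoring side condition that ``leaves the accounting unchanged.'' It is in fact the whole argument, and it covers rewriting as well. Your rewrite argument instead rests on a guessed best-gain initialisation ($0$ or $1$ by mode). The paper never uses this, and you would have to confirm it in the pinned source. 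Any filter that genuinely discards negative gains would suffice logically, but the budgeted counter gives the bound for both engines from a single source anchor.

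For the upper bound, ``by definition these nodes lie in the fanout-free cone of $v$'' skips the one step the paper actually proves. The rewrite saved count comes from a dereference cascade in which every cut leaf carries one extra reference. That is not the cascade that computes $\mathrm{MFFC}(v)$, so the containment $\mathrm{MFFC}_C(v)\subseteq\mathrm{MFFC}(v)$ needs an argument. The paper inducts along the bumped cascade's deletion order. The decrements that zero a node $n$ come from nodes deleted earlier, which by hypothesis are also deleted in the plain cascade. Hence $n$ receives at least $r_1(n)\ge r_0(n)$ decrements there and is deleted too. An equivalent argument: the bumped cascade's deleted set is itself a fanout-free cone rooted at $v$, since every deleted node other than $v$ has all its references consumed by deleted fanouts, and so it lies inside the maximal one. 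Once this is written down, your uniform treatment of both engines through a cut-relative cascade is fine. The paper instead takes the refactoring saved count to be $|\mathrm{MFFC}(v)|$ directly and uses the containment argument only for rewriting.
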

\begin{proof}
Fix a state $s$ and a root $v$ under evaluation.  For a rewrite candidate
let $s(v)$ be the number of nodes the replacement would save, the
size of the cut-relative fanout-free cone defined below; for a
refactoring candidate let $s(v) = |\mathrm{MFFC}(v)|$.  In both
engines a candidate $c$ has gain $g(c) = s(v) - a(c)$, where $a(c)$
counts the fresh nodes the replacement inserts.

\medskip
\noindent\emph{Lower bound.}  The insertion counter runs with its
budget set equal to the saved count $s(v)$ and rejects the candidate
as soon as the count would exceed that budget, so every accepted
candidate has $a(c) \le s(v)$ and
\begin{equation*}
g(c) \;=\; s(v) - a(c) \;\ge\; 0 .
\end{equation*}

\medskip
\noindent\emph{Upper bound.}  Since $a(c) \ge 0$,
$g(c) \le s(v)$; for refactoring $s(v) = |\mathrm{MFFC}(v)|$ and the
bound is immediate.  For rewriting, the saved count is computed
relative to a cut $C$ of $v$: reference counts are initialized from
the current fanouts, each occurrence of a cut leaf is charged one
extra reference, and the cut-relative cone $\mathrm{MFFC}_C(v)$ is
the set of nodes whose count reaches zero in the dereference cascade
from $v$.  We claim
$\mathrm{MFFC}_C(v) \subseteq \mathrm{MFFC}(v)$.
Write $r_0(n)$ for the plain counts and $r_1(n)$ for the bumped
counts, so $r_1(n) \ge r_0(n)$ for every node $n$, and let $D_0$ and
$D_1$ be the deleted sets of the two cascades from $v$.  Both cascades
delete $v$ unconditionally and then delete a node exactly when its
accumulated decrements exhaust its count, each deletion decrementing
the node's fanins.  Consider the nodes of $D_1$ in the bumped
cascade's deletion order, and suppose every node deleted before $n$
lies in $D_0$.  The fanouts whose deletions zero $n$ in the bumped
cascade are deleted before $n$, hence lie in $D_0$ by the hypothesis,
and each of those deletions decrements $n$ in the plain cascade as
well.  The number of decrements $n$ receives in the plain cascade is
therefore at least the number it receives in the bumped one, namely
$r_1(n) \ge r_0(n)$, so $n$ is zeroed in the plain cascade and
$n \in D_0$.  By induction $D_1 \subseteq D_0$, and
\begin{equation*}
g(c) \;\le\; s(v) \;=\; |\mathrm{MFFC}_C(v)| \;\le\;
|\mathrm{MFFC}(v)| .
\qedhere
\end{equation*}
\end{proof}

\subsection{The complete universal table}

Table~\ref{tab:uvfull} records each state-independent compression used
in the action set: the relation, the mechanism that proves it, and the
stable claim ID.  Through the released artifact manifest, each ID
links to its proof, audit record, pinned source anchors, and finite
checker where applicable.  The first block is the quotient
of Corollary~\ref{cor:quotient}: three alias classes and one dominance
take the 40 recipe actions to a 31-action exact cover.
The last block lists the pairs on
which that cover is sharp (Proposition~\ref{prop:sharp}).

\begin{table}[h]
\centering
\scriptsize
\caption{The complete universal table: state-independent containments,
aliases, and covers used in the action set, with their proving
mechanisms and stable claim IDs.}
\label{tab:uvfull}
\begin{tabular}{@{}>{\raggedright\arraybackslash}p{0.28\columnwidth} >{\raggedright\arraybackslash}p{0.42\columnwidth} >{\raggedright\arraybackslash}p{0.20\columnwidth}@{}}
\toprule
Relation & Mechanism & Stable ID \\
\midrule
\multicolumn{3}{@{}l}{\emph{Recipe universe H00--H39}} \\
\midrule
Unbound cut-rewrite stage $\equiv I$ &
Discarded return value; a final guard returns the input when not
smaller &
U-DC \\
Saturation variants $\equiv$ defaults &
Append-only size counter; the saturation loop runs zero extra
iterations &
U-SAT \\
Alias classes $\{H04$, $H11$, $H18$, $H19\}$; $\{H06$, $H21$, $H22$, $H23$, $H28\}$; $\{H09$, $H33\}$ &
Differing parameters feed only identity stages
(Lemma~\ref{lem:discarded}) or the inert saturation flag &
U-ALIASES \\
$H27$ dominates $H00$ &
No-area dfs contracts fanout-1 chains; coordinatewise nonworsening &
U-H00 \\
Exact cover of 31 actions (P31) &
Classes remove eight, dominance removes H00; frontier preserved on every
input &
U-Q31 \\
\midrule
\multicolumn{3}{@{}l}{\emph{Engine variants (one engine, one flag)}} \\
\midrule
Strict-gain variants $\preceq \{\text{zero-gain}, I\}$ &
One acceptance flag; the strict variant reaches either the zero-gain
variant's endpoint or $I$ (rw$^{-}$, rf$^{-}$, B6$^{-}$, D6$^{-}$, X4$^{-}$) &
U-STRICT \\
No-area algebraic $\preceq \{\text{area-allowed}, I\}$ &
Fanout-1 parents only: a subset of the area-allowed substitutions &
U-NOAREA \\
Critical-only balance $\preceq \{\text{all-node}, I\}$ &
Visits only the critical roots under the same lexicographic
acceptance &
U-CRITBAL \\
$R_0 \preceq R_1 \preceq R_2$ (0/1-insert $\preceq$ 2-insert) &
Budgets tried in increasing order with stop-gates, so the first
successful tiers are nested &
U-RBUDGET \\
4-Leaf refactor $D_4 \preceq \{D_6, I\}$ &
Guard-only limit; admitted cones are factored by the 6-leaf variant &
U-D4 \\
3-Leaf cut rewriting dominated by 4-leaf &
The 4-leaf variant enumerates a candidate superset and commits the
strictly best gain &
U-CUT34 \\
\midrule
\multicolumn{3}{@{}l}{\emph{Kept distinct}} \\
\midrule
$H08 \not\equiv H10$ &
Old-node hash hit excluded by the MIG engine, not by the XMG one;
(3,2) vs (4,3) exhibited &
U-MIGXMG \\
$H27$ does not dominate $H39$ &
Unguarded old-node hash reverses the no-area twins; (7,5) vs (4,4)
exhibited &
U-NATWINS \\
\bottomrule
\end{tabular}
\end{table}
 \section{Proofs of the Predicate Layer}

\subsection{R2p on critical roots}

\begin{theorem*}[Theorem~\ref{thm:r2p}, restated]
Let $R$ be free resubstitution and $R^{*}$ its depth-preserving variant
with required level clamped to $\ell(v)$.  If $v$ is critical, then
$E_{R^{*}}(s,v) = E_R(s,v)$.
\end{theorem*}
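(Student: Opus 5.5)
The proof will be a source-level argument in the style of Lemma~\ref{lem:selectors}(d): the two evaluators $E_R$ and $E_{R^{*}}$ run the same engine code at $v$ and differ in a single input, the required level passed to the resubstitution routine. The strategy is to show that this input takes the same value on a critical root, so the two calls receive identical arguments. By determinism of the pinned engine they then return identical endpoints.

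\emph{Step 1: isolate where the required level enters.} Fix $(s,v)$. From the pinned source, the required level $r(v)$ is read only in the divisor-admission margins and the candidate level filters of each insertion tier. In the free variant, $r(v)$ is the value computed by the backward required-time traversal of $s$ (bounded by the network depth). The depth-preserving variant replaces it by $\min(r(v),\ell(v))$, or, if the source overrides it outright, by $\ell(v)$. I will record that everything else is budget- and flag-independent and a function of $s$ alone:
\begin{itemize}
\item the MFFC computation,
\item the divisor collection,
\item the tier order and stop-gates (Lemma~\ref{lem:rsgain}),
\item the simulation or truth-table checks,
\item the commit.
\end{itemize}

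\emph{Step 2: use criticality.} A critical root has zero slack, so $r(v)=\ell(v)$ under the arrival/required computation of the pinned tree. The clamped value is then $\min(\ell(v),\ell(v))=\ell(v)=r(v)$, and in the override reading it is again $\ell(v)=r(v)$. The two calls therefore receive identical arguments.

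\emph{Step 3: conclude.} I will run the tier-by-tier induction from the proof of Lemma~\ref{lem:selectors}(d). Each tier's admissible set is determined by $s$, $v$, and the now-equal required level, so the first successful tier and its candidate coincide, or both evaluators return $s$. Hence $E_{R^{*}}(s,v)=E_R(s,v)$.

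The main obstacle is Step~2. I must check that the notion of \emph{critical} used by the gate matches the required-level array actually consumed by the engine at evaluation time. If required levels are recomputed lazily or updated incrementally during the pass, I would argue at root scope: $E$ is evaluated on the current $s$ with the engine's current required-level value at $v$, and the gate tests criticality against that same value. I would also check that no other code path reads the override flag, for example a separate depth-preservation check on inserted nodes. If such a path exists, I will show that it too reduces to comparison against $\ell(v)=r(v)$.
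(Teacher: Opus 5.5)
Your proposal is correct and follows essentially the paper's argument: the two runs differ only in the required-level parameter $\rho$. Criticality forces $\rho=r(v)=\ell(v)$ in both runs, so every admission threshold, divisor bank, and first-success tier coincides. One small correction: in the pinned source the divisor collection itself reads $\rho$, since fanout-expanded divisors are admitted only when $\ell(d)\le\rho$. It therefore belongs on your list of $\rho$-dependent sites rather than the $\rho$-independent ones, although this does not affect the conclusion once $\rho$ is equal.
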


\begin{proof}
Write $\rho$ for the required-level parameter of one resubstitution
evaluation at the root $v$ of $s$: the free variant $R$ runs with
$\rho = r(v)$, the required level of $v$, while $R^{*}$ overrides the
arrival-time budget so that the evaluation runs with
$\rho = \ell(v)$.  Everything else about the two runs is the same:
the same root, the same cut and leaf set, the same window cone and
its fanout-free cone, the same simulation words, and the same
candidate ladder, which tries tiers in a fixed order (constant, equal
divisor, one insertion, two insertions) and returns the first firing
tier.  The parameter $\rho$ enters the pipeline only as a level
threshold on divisor admission, in exactly four places.  A
fanout-expanded divisor $d$ collected from the window is admitted if
and only if
\begin{equation*}
\ell(d) \;\le\; \rho ;
\end{equation*}
a divisor $d$ enters the single-divisor bank if and only if
$\ell(d) \le \rho - 1$; a pair $(d_1, d_2)$ enters the two-divisor
bank if and only if $\ell(d_i) \le \rho - 2$ for $i = 1, 2$; and the
divisor combination underlying an accepted candidate must have level
at most $\rho - 1$.

Let $v$ be critical, that is, $r(v) = \ell(v)$.  Then the two runs
share the same value of $\rho$; every threshold above admits exactly
the same divisors into exactly the same banks, the ladder enumerates
the same candidates in the same order, and first-success acceptance
returns the same candidate in both runs, or none in both.  Hence
$E_{R^{*}}(s,v) = E_R(s,v)$.

On a slack root, where $r(v) > \ell(v)$, the thresholds of $R^{*}$
are strictly tighter and the two variants can diverge; the per-tier
level bounds sharpen the picture.  At $\rho = \ell(v)$ every
candidate kind has level at most $\ell(v)$: a constant has level $0$,
an equal divisor has level at most $\rho$, a one-insertion candidate
is built on a divisor of level at most $\rho - 1$, and a
two-insertion candidate on pair members of level at most $\rho - 2$.
Consequently no candidate on a critical root can raise the root
level, so the deployed gate's offer condition (the free candidate
raised the root level) is vacuous exactly on the coincidence region
and confines the evaluation of $R^{*}$ to the slack region, where
divergence is possible.  One residual is documented within that
region: divisor banks are truncated at fixed size caps, the free
variant's looser collection binds a cap strictly earlier, and the
clamped variant can then admit divisors the free variant never
reached; covering those residual wins would require evaluating $R^{*}$
on every slack root, so the deployed gate forgoes them.
\end{proof}

\subsection{Level-gap guard}

\begin{theorem*}[Theorem~\ref{thm:gap}, restated]
If no node satisfies the level-gap condition, all three algebraic
strategies act as $I$.
\end{theorem*}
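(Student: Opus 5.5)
The plan is to argue at pass scope by induction over the visit sequence of each strategy, in the same style as the proof of Lemma~\ref{lem:selectors}(b).  All three strategies (selective, depth-first, aggressive) call one shared per-node move routine.  That routine sorts the children of the visited node by level and evaluates the shared applicability guards before any constructor is reached.  The first step is to read off this guard order and confirm the key fact used there: the guard requiring $\ell(\mathrm{deep}(n)) > \ell(\mathrm{second}(n)) + 1$ is checked unconditionally in every strategy.  It must come before the associativity and distributivity constructors, and before any strategy-specific extension such as the aggressive variant's additional moves.  Given this, a node violating the gap condition makes the routine return without constructing or substituting anything.

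The second step is the induction.  Let $s_0 = s$ and let the strategy visit nodes $n_1, n_2, \ldots$ in its pinned order: topological for the selective and aggressive sweeps, and restricted to the critical path for the depth-first variant.  Suppose the current state is still $s$ when $n_i$ is visited.  Every node of $s$ is gap-free by hypothesis, so the guard fails at $n_i$ and the state remains $s$.  Levels are recomputed only after a committed substitution, so the levels read at $n_{i+1}$ are still those of $s$ and the hypothesis still applies.  I will check that the visit order itself depends only on $s$ (the critical-path marking, the topological order), so the same argument covers all visits.  After the sweep, the final cleanup or dead-node removal acts as the identity on a network in which no substitution occurred.  I also need to check that any outer repeat-until-no-change loop terminates after one iteration.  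Together these give $P_e(s) = s$ for each of the three strategies.

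The main obstacle is the first step.  I must verify that no strategy reaches a constructor through a path that bypasses the gap guard.  One concern is that the aggressive strategy or the XMG compound helpers might apply, for example, a distributivity or relevance move keyed on a different condition.  Another is that some pass might normalize or rebuild the network, through a strash or copy, in a way that is not an identity endpoint.  I would first enumerate every call site of a node substitution in the pinned algebraic-rewriting source and trace each back to the guard.  If some constructor is guarded differently, the statement has to be narrowed to the strategies whose substitution paths are all dominated by the gap guard.  Otherwise the rebuild must be shown to produce an ordered-isomorphic network, as was argued for the H27 import in Corollary~\ref{cor:quotient}.
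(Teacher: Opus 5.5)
Your proposal is correct and follows essentially the paper's proof: both note that the three strategies mutate the network only through one shared reassociation routine whose guard conjunction includes $\ell(\mathrm{deep}(n)) > \ell(\mathrm{second}(n)) + 1$, and both then induct over the visit sequence to show the state remains $s$ at every visit. One slip does not affect the argument: the paper has the selective strategy follow the critical path and the depth-first strategy visit the maximum-level output cones, with aggressive visiting all nodes, whereas you swapped the first two; your induction never depends on the visit order.
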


\begin{proof}
Abbreviate the gap predicate of the statement by
\begin{equation*}
G(n) \;:\Longleftrightarrow\; \ell(\mathrm{deep}(n)) \;>\;
\ell(\mathrm{second}(n)) + 1 .
\end{equation*}
The three strategies differ only in which nodes they visit:
depth-first visits the maximum-level output cones, selective the
critical-path nodes, and aggressive all nodes.  They share a single
mutation primitive: each changes the network only by calling one
reassociation routine $\mathrm{rd}$ on a visited node.  At a node
$n$, the routine orders the children in ascending order of level and may mutate
the network only if all of the following hold:
\begin{itemize}
\item[(i)] $n$ is a majority gate and its deepest child
$c = \mathrm{deep}(n)$ is a majority gate;
\item[(ii)] $\ell(c) > \ell(\mathrm{second}(n)) + 1$, that is, $G(n)$;
\item[(iii)] the two deepest children of $c$ differ in level;
\end{itemize}
and the selective strategy, which may not increase area, additionally
requires (iv) that $c$ have a single fanout and (v) that one of $n$'s
two shallow children be the same node as one of $c$'s two shallow
grandchildren.  In particular, for every strategy and every node $n$,
a mutation at $n$ implies $G(n)$.

Suppose that no node of $s$ satisfies $G$, and consider any one of
the three strategies, visiting $n_1, n_2, \ldots$ in its order.  We
prove by induction on $k$ that the network after the visit to $n_k$
is $s$.  The base $k = 0$ is trivial.  For the step, the network
before the visit to $n_k$ is $s$ by the induction hypothesis, so the
routine evaluates its guard on $s$: if (i) holds then (ii) fails by
assumption, and if (i) fails the routine already stops there; in both
cases $\mathrm{rd}$ returns without mutating anything, and the network
after the visit is again $s$.  Hence each strategy terminates with
the network $s$: it acts as $I$ on $s$.

Skipping such a pass preserves the trajectory: the surrounding loop
adopts a pass's output only when it strictly improves on the
incumbent network, and an identity pass returns the incumbent
itself, so the skipped evaluation never changes the sequence of
committed states.
\end{proof}

\subsection{No-4-cut}

\begin{theorem*}[Theorem~\ref{thm:no4cut}, restated]
If $\ell(v) = 2$ and one fanin of $v$ is a primary input, the rewrite
engine proposes no candidate at $v$.
\end{theorem*}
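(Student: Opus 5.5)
The plan splits the statement into a structural claim about cuts and a source-level claim about the rewrite loop, and proves them in that order.

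\emph{Structural claim.} Write the fanins of $v$ as a primary input $p$ and a second fanin $u$. Since $\ell(v)=2$ and $\ell(p)=0$, we get $\ell(u)\le 1$.

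I would use the standard recursive description of ABC's cut enumeration. The cut set of $v$ consists of the trivial cut $\{v\}$ together with the merges $C_1\cup C_2$, where $C_1$ is a cut of $p$ and $C_2$ is a cut of $u$.
\begin{itemize}
\item A primary input has only its trivial cut $\{p\}$.
\item If $\ell(u)=0$, then $u$ is a primary input or a constant, and it too has only its trivial cut.
\item If $\ell(u)=1$, then both fanins of $u$ are level-$0$ nodes, so the cuts of $u$ are $\{u\}$ and the set of its at most two fanins.
\end{itemize}
Every nontrivial cut of $v$ is therefore $\{p\}\cup C_2$ with $|C_2|\le 2$. That gives at most $3$ leaves, and the trivial cut has one leaf.

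One point needs a short check: cut filtering (dominance removal) and the exclusion of constant leaves can only shrink these sets. Structural hashing may merge $p$ with a fanin of $u$, which also only lowers the leaf count. So every enumerated cut of $v$ has at most three leaves.

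\emph{Source-level claim.} Next I would anchor the claim that the pinned rewrite evaluator skips every cut whose leaf count differs from four. Concretely, this is the guard in the per-node cut loop that \texttt{continue}s when the number of leaves is not $4$, placed before truth-table computation, NPN lookup, or gain evaluation. Because the guard sits before any candidate is formed, the skip is unconditional. No other path produces a candidate: the engine's only proposal mechanism is the library replacement on a 4-leaf cut, and it neither pads smaller cuts with dummy leaves nor falls back to another enumeration. Combining this with the structural claim, the evaluator at $v$ sees no admissible cut and proposes nothing, so $E_{\mathrm{RW}}(s,v)=s$. This is the sterile implication that Definition~\ref{def:gates} requires.

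\emph{Main obstacle.} I expect the difficulty to lie in fidelity to the pinned source rather than in the combinatorics. Three things must be verified:
\begin{itemize}
\item the cut manager invoked by rewriting really computes cuts by fanin merging, with no extra leaves introduced through choice nodes or a previously cached cut set computed on an earlier, deeper network state;
\item the levels used in the predicate are the current levels at the moment of evaluation within the pass;
\item the ``exactly four leaves'' guard is not bypassed by a configuration flag set in the deployed \texttt{orchestrate} call.
\end{itemize}
I would settle these by reading the cut-computation call site inside the rewrite node routine. That reading should confirm that cuts are recomputed for $v$ from its current fanins, and that the deployed parameters leave the 4-leaf filter active.
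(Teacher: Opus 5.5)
Your proposal is correct and follows essentially the paper's route: the same fanin-merge description of the cut lists (a primary input owns only its trivial cut and a level-1 AND owns $\{u\}$ and its fanin pair, so every nontrivial cut of $v$ is $\{p\}\cup C_2$ with at most three leaves), combined with the source fact that the rewrite loop discards every cut with fewer than four leaves before forming any candidate. The fidelity obligations you flag are the ones the paper discharges: levels are maintained through every commit, and the network is structurally hashed immediately before the pass, so the cuts are fresh. One small note: $\ell(v)=2$ with a primary-input fanin already forces $\ell(u)=1$, so your $\ell(u)=0$ case, like the paper's, is vacuous.
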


\begin{proof}
Let $v$ be a node with $\ell(v) = 2$ and with one fanin $u$ a primary
input, and let $w$ be the other fanin.  Since $\ell(v) = 2$, the node
$v$ is an AND gate, and levels are unit cost, so
$\ell(v) = 1 + \max\{\ell(u), \ell(w)\}$ and $\ell(w) \le 1$.
Recall the cut lists the engine maintains.  A combinational input $x$
(a primary input or a constant) owns exactly one cut, the trivial cut
$\{x\}$; an AND gate $z$ with fanins $z_0, z_1$ owns
\begin{equation*}
\mathcal{C}(z) \;=\; \bigl\{\{z\}\bigr\} \;\cup\;
\bigl\{\, c_0 \cup c_1 \;:\; c_i \in \mathcal{C}(z_i),\;
|c_0 \cup c_1| \le 4 \,\bigr\},
\end{equation*}
the union running over the fanin cut lists and truncated by the
four-leaf merge cap.

Every non-trivial cut of $v$ therefore has the form $\{u\} \cup c$
with $c \in \mathcal{C}(w)$, since $\mathcal{C}(u) = \{\{u\}\}$.
There are two cases.  If $\ell(w) = 0$, then $w$ is a combinational
input, $\mathcal{C}(w) = \{\{w\}\}$, and the only merged cut is
$\{u, w\}$, with two leaves.  If $\ell(w) = 1$, then $w$ is an AND
gate whose fanins $x_0, x_1$ have level $0$ and are combinational
inputs, so $\mathcal{C}(w) = \{\{w\}, \{x_0, x_1\}\}$ and the merged
cuts of $v$ are $\{u, w\}$ and $\{u, x_0, x_1\}$.  In all cases,
every cut of $v$ has at most $1 + 2 = 3$ leaves.  The condition is
exact: if both fanins are level-$1$ AND gates, the merge
$\{x_0, x_1\} \cup \{x_0', x_1'\}$ has four leaves whenever the four
fanins are distinct, and is then evaluated.

The evaluation loop of the rewrite engine considers only cuts with
exactly four leaves: it discards every enumerated cut with fewer than $4$
leaves, and the merge cap forbids more than $4$.  Consequently the
loop evaluates no cut of $v$, and the engine returns no candidate.
The deployed
skip evaluates the two hypotheses of the theorem in constant time on
the node's cached level field; the flow maintains levels through
every commit, so the cached field equals the structural level, and it
hands the engine a structurally hashed network whose identifiers
follow topological order.  This network is produced by
structural hashing immediately before the pass, so the cached cuts
are fresh wherever the skip runs.  The skip therefore replicates the
engine's return exactly.
\end{proof}

\subsection{Cut4 certificates}
\label{app:cut4-certificates}

\begin{theorem*}[Theorem~\ref{thm:cut4}, restated]
The pass fires at $n$ through a genuine cut only if the replacement's
unreferenced gate count is below $|\mathrm{MFFC}(n)|$, and the count is
at most the stored class size $s^*$.
\end{theorem*}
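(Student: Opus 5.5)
The plan is to read the firing rule of the pinned mockturtle cut-rewriting pass as an accounting inequality, and then bound each side. Fix a node $n$ and a 4-leaf cut $c$ whose cone function lies in an NPN class with stored size $s^*$. The pass computes a gain for the candidate built from that class's stored MIG database form. It does so in the same dereference/reference style used in the proof of Lemma~\ref{lem:rwgain}. The saved count is the number of records freed by dereferencing the cone of $n$ relative to $c$. The cost is the number of gates of the instantiated replacement that are not already referenced, i.e.\ not found by structural hashing. The pass commits only when the gain is strictly positive. We also check in the source whether zero-gain acceptance is enabled in the deployed configuration; the quotient already routes the zero-gain variant through an identity stage. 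The first step is therefore to write the firing condition as
\[
\mathrm{unref}(c) \;<\; |\mathrm{MFFC}_c(n)|,
\]
and then to invoke the inclusion $\mathrm{MFFC}_c(n)\subseteq\mathrm{MFFC}(n)$ established in the proof of Lemma~\ref{lem:rwgain}. This gives the stated necessary condition $\mathrm{unref}(c)<|\mathrm{MFFC}(n)|$.

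The second step bounds the unreferenced count. The replacement is built by instantiating the stored form gate by gate over the cut leaves, under the NPN transform, which only permutes inputs and adds complement bits and so creates no gates. Each created gate is either fresh or a structural-hash hit. Hence the number of unreferenced gates is at most the number of gates in the stored form, namely $s^*$. Hash reuse and constant propagation can only lower it, which matches the remark before Theorem~\ref{thm:cut4}.

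The consequences then follow directly. For (i), a one-record cone gives $|\mathrm{MFFC}(n)|=1$, so firing needs $\mathrm{unref}=0$. That means the whole replacement hashes to existing nodes. Since the cut is genuine, the root of the replacement must then be an existing node computing the same function. I will argue from the source that this case is rejected: either the output equals $n$ itself and the identity guard discards it, or it is an old node. Confirming that this old-node case never yields a committed offer at one-record cones is the step I expect to be the main obstacle. My first attempt will be to show that the pass handles such hits through its structural-hash-equal check rather than as gain-positive candidates.

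For (ii), I will exhaustively enumerate the 222 NPN classes of 4-variable functions in the pinned database and confirm that $\max s^*=7$. This is a finite checker. A genuine cut with at least 3 leaves has a non-degenerate function, so its class entry exists. The replacement then costs at most $7<8\le|\mathrm{MFFC}(n)|$. I must additionally ensure that the cut-relative saved count, not only the full MFFC, exceeds 7. If the inclusion is strict, I will argue via the cut enumeration that some cut at $n$ realizes a cone covering at least 8 freed records. Failing that, I will weaken the claim to the cut-relative MFFC.
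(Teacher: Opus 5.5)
Your core accounting matches the paper's. The gain is a saved count minus the unreferenced count $u(c)$ of the instantiated stored form, acceptance needs $g\ge 1$, and $u(c)\le s^*$ because the counted gates are a subset of the at most $s^*$ gates instantiating the stored form. These two observations are all the restated theorem asks for.

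The one real divergence is the saved count. You import the cut-relative cone $\mathrm{MFFC}_c(n)$ from the ABC rewrite accounting of Lemma~\ref{lem:rwgain} and then apply the inclusion $\mathrm{MFFC}_c(n)\subseteq\mathrm{MFFC}(n)$. The paper instead reads the pinned mockturtle pass as computing $V(n)=|\mathrm{MFFC}(n)|$ exactly at each gate. The acceptance rule $g(n,c)=V(n)-u(c)\ge 1$ then gives $u(c)<|\mathrm{MFFC}(n)|$ directly. Under that reading, your displayed condition $\mathrm{unref}(c)<|\mathrm{MFFC}_c(n)|$ is not the source's rule and is not implied by firing in general. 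The endpoint is right, but the intermediate step misdescribes the implementation; read the gain formula off the source and drop the detour.

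The same source details remove the obstacles you anticipate in the consequences. First, $u(c)$ charges the replacement root unconditionally; your definition, ``not found by structural hashing,'' drops it on a root hash hit. Hence $u(c)\ge 1$ for every genuine cut and $g\le 0$ whenever $V(n)=1$. The sweep also copies $V(n)=1$ nodes verbatim without consulting any cut, so (i) needs no argument about old-node hits or a structural-hash-equal check. Second, because the saved count is the full $V(n)$, you get $g\ge V(n)-7\ge 1$ once $V(n)\ge 8$ and the cut is genuine. Here genuine means support at least $3$ after truth-table minimization, not merely three leaves. So (ii) needs no search for a larger cut-relative cone. Your fallback of weakening to $\mathrm{MFFC}_c$ would not prove (ii) as stated.
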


\begin{proof}
The pass makes one topological sweep over the gates of the network.
At a gate $n$ it computes $V(n) := |\mathrm{MFFC}(n)|$ exactly, by
DAG-aware reference counting, and if $V(n) = 1$ it copies $n$
unchanged without consulting any cut.  Otherwise it enumerates the
cuts of $n$; truth-table minimization discards every cut whose
function ignores enough leaves to drop the support below $3$, and
each surviving (\emph{genuine}) cut $c$ is resynthesized: the cut
function is canonicalized to its NPN class, and the class's stored database
form of $s^{*}$ gates is instantiated as the replacement cone.  Write
$u(c)$ for the number of gates of that cone that are unreferenced
outside it (the unreferenced count of Table~\ref{tab:predfull}), the
replacement root counted unconditionally.  The candidate's gain and
acceptance rule are
\begin{equation*}
g(n, c) \;=\; V(n) - u(c),
\qquad
\text{$c$ accepted} \;\Longleftrightarrow\; g(n, c) \ge 1 .
\end{equation*}
Hence the pass fires at $n$ through $c$ only if
$u(c) < V(n) = |\mathrm{MFFC}(n)|$, the first claim.  The replacement
cone realizes the stored form with at most $s^{*}$ gates, and the
unreferenced count enumerates a subset of them, so $u(c) \le s^{*}$,
the second claim; and $u(c) \ge 1$ for genuine cuts, because the root
is counted unconditionally and a class of support at least $3$ stores
a genuine gate at its root.

Two consequences follow.  (i) On a one-record cone, $V(n) = 1$, the
sweep copies $n$ verbatim and offers nothing; the accounting agrees,
since $g(n, c) \le 1 - 1 = 0$.  (ii) The decoded $4$-variable table
has $222$ NPN classes with stored-size histogram
\begin{equation*}
\{0{:}2,\; 1{:}2,\; 2{:}5,\; 3{:}18,\; 4{:}42,\; 5{:}117,\;
6{:}35,\; 7{:}1\}
\end{equation*}
and the maximum stored size is $\max s^{*} = 7$, so
$g(n, c) \ge V(n) - 7$, and any node
with $V(n) \ge 8$ and a genuine cut admits an offer.  Firing requires $V(n) \ge 2$ (because
$u(c) \ge 1$), and between the bounds $2 \le V(n) \le 7$ it requires
$u(c) \le V(n) - 1$; since only $27$ of the $222$ classes have
$s^{*} \le 3$ while $88\%$ have $s^{*} \ge 4$, the firing region is
essentially $V(n) \in [2, 7]$ crossed with the cuts of the
$s^{*} \le 3$ classes.

The pass is node-count monotone overall: acceptance demands strictly
positive gain, and a final guard returns the original network
unchanged whenever the rebuild is not smaller, so the pass can never
increase the node count.  Skipping it may forfeit rare improvements
but cannot alter functional correctness.  The converse of the firing condition
fails in exactly one way, through structural hashing: when a fragment
of the replacement cone already exists in the network with references
elsewhere, that fragment is not unreferenced, so $u(c)$ drops
strictly below $s^{*}$ and the pass can fire even when
$s^{*} \ge V(n)$; only the unconditional count of the replacement
root keeps a hit on the root itself from escaping the count.
\end{proof}

\subsection{The XMG trigger}

\begin{theorem*}[Theorem~\ref{thm:xmg}, restated]
If the XMG and MIG majority components first diverge, the immediately
preceding common state contains an adjacent node--parent pair sharing a
fanin.  If no such pair occurs along their common execution trace, the
pass endpoints coincide.
\end{theorem*}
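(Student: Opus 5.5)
The plan is a lockstep simulation argument over the shared visit order, with a single-step case analysis locating where the engines' only source-level difference can bite.

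First I would fix the model. Both majority components process the same visit order $n_1,n_2,\ldots$. At each visit they run the same guard sequence and the same constructors on the same majority rules (as in the proof of Lemma~\ref{lem:selectors}(b) and Theorem~\ref{thm:gap}). Each then performs a substitution: replace the visited node by a new signal and rebuild its parents through the structural-hash table.

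From the source I would record a single difference. The MIG substitution routine rejects a rebuild whose hashed parent coincides with the node being replaced. The XMG routine omits that check and accepts it.

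With that model in place, the core step is an induction on visits. Suppose the two states agree before visit $k$. Then every guard, child ordering, level comparison and constructor reads identical data, so both engines propose the same substitution (or none). The two post-states can therefore differ only if some rebuilt parent $p'$ hashes to the replaced node $n$ itself. So the first divergence occurs exactly at such a self-collision step.

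Next I would characterize the collision. A parent $p$ of $n$ is rebuilt with the fanin slot holding $n$ replaced by the new signal $y$. For its hash key to equal $n$'s key, the multiset of $p$'s other fanins together with $y$ must equal $n$'s fanin multiset, with matching complement bits.

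Since $p$ keeps at least one of its other fanins, say $x$, that fanin must also be a fanin of $n$. Hence $n$ and its parent $p$ share the fanin $x$, and this holds in the common pre-divergence state. In the AND encoding this is the shape $p=\mathrm{and}(n,x)$, $n=\mathrm{and}(x,y)$ of Fig.~\ref{fig:xmg}; for a genuine three-input majority the same multiset argument applies, with two shared fanins.

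The second sentence of the theorem is then the contrapositive. If no common state on the shared trace contains such an adjacent sharing pair, no first divergence step exists. Induction over the whole visit sequence then gives identical final networks, and any subsequent cleanup is a deterministic function of the network, so the pass endpoints coincide.

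I expect the main obstacle to be the source-reading step. It must establish that the old-node hash check is the \emph{only} behavioural difference, including node-ID allocation order, fanout updates and cleanup. It must also show that a rejected substitution in the MIG engine leaves the state (including hash-table side effects) exactly as before. If the XMG helpers add XOR-specific rules, I would first check that these never fire on pure majority states, so that the "majority components" are literally comparable.
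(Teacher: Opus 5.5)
Your proposal is correct and follows essentially the paper's own argument. Both use a lockstep first-divergence induction in which the only source-level difference is the MIG engine's rejection of an old-node hash hit, then a multiset comparison of the rebuilt parent's signature with the replaced node's to force a shared fanin, then the contrapositive, after first confirming that the XOR-guarded helpers never fire on majority-only states. The paper's multiset step is slightly sharper, deriving $m \in C$ and $k-1 \ge 1$ shared fanins, but the single shared fanin you extract suffices. Your concern that a rejected hit must leave the MIG state unchanged is inaccurate, since the MIG engine updates the parent's child in place, and it is also unnecessary, because necessity uses only that the two engines can differ on that branch and nowhere else.
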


\begin{proof}
Run both algebraic passes on the same network $s$.  In the deployed
setting the XMG side carries no XOR records, since its lane is entered
from a majority-only import, so the XMG engine's XOR-guarded helpers
never fire and its pass reduces to the majority component it shares
with the MIG engine: the same reassociation rules, applied in the same
order, through the same substitution machinery.  That machinery is one
structural hash table.  When a node $n$ is replaced by an equivalent
signal $m$, every parent $p$ of $n$ is rebuilt with $m$ substituted
for $n$, and the rebuilt signature (the operator together with the
children as a multiset of full signals) is looked up in the table so
that an existing node is reused.  The engines differ in exactly one
rule of this step: if the lookup returns the node $n$ being replaced,
whose table entry is still live during the pass, the MIG engine
rejects the hit and updates the child in place, whereas the XMG
engine accepts the returned node.  The passes can therefore diverge
only at a substitution where some rebuilt parent carries exactly the
signature of the replaced node $n$; call such an event a
\emph{collision}.

A collision forces a shared fanin.  Let $n$ have children multiset
$C$ with $|C| = k$ ($k = 2$ in the AND encoding, $k = 3$ for
majority), and suppose the rebuild of a parent $p$ collides with $n$.
The rebuild has the same operator as $n$ and children multiset
$(C_p \setminus \{n\}) \cup \{m\}$, where $C_p$ is $p$'s children
multiset; equality with $C$ as multisets forces
\begin{equation*}
m \in C
\qquad\text{and}\qquad
C_p \;=\; \bigl(C \setminus \{m\}\bigr) \cup \{n\},
\end{equation*}
so $p$ already carries every child of $n$ except possibly the
substituted one: the pair $(n, p)$ shares $k - 1 \ge 1$ fanins.  The
witness shape is the self-referential pair $n = \mathrm{and}(x, y)$,
$p = \mathrm{and}(n, x)$: replacing $n$ by its child $m = y$ rebuilds
$p$ as $\mathrm{and}(y, x)$, which is $n$'s signature
(Fig.~\ref{fig:xmg}).  In majority signals the same shape is
$p = \mathrm{maj}(x, y, n)$ with $n = \mathrm{maj}(x, y, c)$, two
shared fanins, and it is functionally redundant, since
$\mathrm{maj}(x, y, \mathrm{maj}(x, y, c)) = \mathrm{maj}(x, y, c)$
by idempotence.

Consider the first step at which the two executions differ.  Before
that step their states and selected substitutions are identical
(same start, rules, and visit order), so their distinct outcomes can
arise only from the sole difference in hash-hit handling.  The step must therefore
contain a collision, and the multiset argument above implies that the
common state immediately before the step contains the stated adjacent
pair.  Contrapositively,
if no intermediate common state contains the pair, no first divergence
exists and the pass endpoints coincide.

The shape is local, but a reassociation can manufacture it while a pass
runs.  An entry- or boundary-state absence check is therefore not a
trajectory-certified pass-level dedupe gate.  P6 rechecks the shape
after the XMG DFS stage and uses it as a source-derived Tier-2 gate.
It is retained in \taco{} with this registered residual.
\end{proof}

\subsection{The complete predicate set}
\label{app:proofspred}

Table~\ref{tab:predfull} lists the predicate set deployed by integrated
\taco{} and \taco{}-skip: each gate's checkable form, its kind under
Definition~\ref{def:gates}, its stable claim ID, and its final
evidence.  The runtime names map to stable gate IDs as follows:
P4$\to$G-P4, P9$\to$G-RW, P5$'$$\to$G-P5,
P6$\to$G-P6, P12$\to$G-P12, and E1--E5$\to$G-E1E5.
G-RF is the RF 2-PI-cone skip of Table~\ref{tab:gates}.
The corresponding theorems are listed in
Table~\ref{tab:predicates}.  Where a theorem covers only a named region, the
retained residual appears in the last column.  The accounting row
bounds what the cut-4 pass can return without skipping evaluations.
Stable IDs resolve through the released artifact manifest.

\begin{table*}[t]
\centering
\scriptsize
\setlength{\tabcolsep}{3pt}
\caption{The predicate set deployed by integrated \taco{}
and \taco{}-skip.
Kinds follow Definition~\ref{def:gates}; Tier-2 rows retain separate
evidence contracts.  Stable IDs map to the released claim
manifest.  $s^{*}$ is the stored class size of
Thm.~\ref{thm:cut4}.}
\label{tab:predfull}
\begin{tabular}{@{}p{0.12\textwidth}p{0.27\textwidth}p{0.11\textwidth}p{0.09\textwidth}p{0.31\textwidth}@{}}
\toprule
Gate & Checkable predicate & Kind & Stable ID & Final evidence \\
\midrule
\multicolumn{5}{@{}l}{\emph{AIG side}} \\
P1 RW/RF skip &
Root $|\mathrm{MFFC}| = 1$ (both fanins fanout $\ge 2$ or PI) &
Tier-2 cover & G-P1 &
RS covers the stated core; two residual classes make the $H_6$ QoR
sign case-dependent in integrated \taco{} \\
P2 RF skip &
Root $|\mathrm{MFFC}| \le 2$ &
Tier-2 & G-P2 &
RS territory; blocks gain-0 depth moves and some gain-1 wins in
integrated \taco{} \\
P3 fanout guard &
Root fanout $> 1000$ (a guard in the resubstitution source) &
Tier-2 & G-P3 &
Whole-root degradation guard; neutral on all six final cases \\
P4 depth-preserving resub skip &
$v$ critical (required level $= \ell(v)$) &
Dedupe & G-P4 &
Bit-identical; no redundant evaluations on critical roots \\
P9 no-4-cut RW skip &
$\ell(v) = 2$ with a PI fanin (G-RW: also $\ell(v) \le 1$) &
Sterile & G-RW &
Bit-identical on i2c and s38417, including gate timing; 98 skips on i2c \\
G-RF identity skip &
$v$ a genuine 2-input AND of two distinct non-constant PI nodes &
Sterile & G-RF &
Used only by \taco{}-skip; G-RF alone is bit-identical on 66/66
circuits and reduces time by 11\% \\
\midrule
\multicolumn{5}{@{}l}{\emph{MIG side}} \\
P5$'$ algebraic skip &
No MAJ node with MAJ-child gap $> 1$ and grandchild split; selective adds fanout-1, shared indices &
Sterile & G-P5 &
Exact; bit-identical on all six final cases \\
P6 XMG boundary skip &
No trigger pair (adjacent node-parent fanin sharing); re-checked at each pass boundary &
Tier-2 & G-P6 &
s38417 ungated 137173 vs 136918 gated; other five cases unchanged \\
P12 XMG-dfs identity &
XMG lane entry without MAJ-precondition node (AIG bridge gives xor3-free entry) &
Sterile & G-P12 &
Exact; bit-identical on all six final cases \\
E1-E3 cut4 certificates &
E1: gate-fanins fanout $\ge 2$; E2: $|\mathrm{MFFC}(n)| = 1$; E3: no cut $\ge 3$ after tt-min. &
Sterile & G-E1E5 &
E1 registered, not implemented (low trigger rate); E2 engine-internal \\
E4-E5 cut4 fire region &
E4: fires only if value2$(c) < |\mathrm{MFFC}(n)|$, value2 $\le s^{*}$; E5: $|\mathrm{MFFC}(n)| > s^{*}$ forces an offer for a genuine cut &
Accounting & G-E1E5 &
Live on fpu, chip\_bridge, ethernet: 5 commits out of 20 offers (11-case record) \\
\midrule
\multicolumn{5}{@{}l}{\emph{XAG side}} \\
XAG-CF skip &
No XOR node in the state (xor-count zero) &
Sterile & G-XAGCF &
Identity on XOR-free states \\
\bottomrule
\end{tabular}
\end{table*}
 \section{Scoped Obstructions and Open Items}

\subsection{The polarity-erased fiber obstruction}
\label{app:fiber}

Recall that $\sigma$ retains the full typed, rooted multigraph but
erases edge polarity and primary-input names
(Definition~\ref{def:structural}).  For a
fixed operator and scope, let $z_e(x)=1$ exactly when the endpoint on
$x$ is the identity.

\begin{theorem*}[Theorem~\ref{thm:frontier}, restated]
If $\sigma(x_0)=\sigma(x_1)$ and $z_e(x_0)\ne z_e(x_1)$, no predicate
that factors through $\sigma$ is both sound and complete for $e$'s
identity region on a domain containing $x_0$ and $x_1$.
\end{theorem*}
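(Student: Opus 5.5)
The argument is a direct fiber-constancy contradiction, and I would carry it out in three short steps.

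\emph{Step 1: fix what is to be refuted.}  Let $D$ be any domain of legal inputs containing $x_0$ and $x_1$.  Suppose, toward a contradiction, that $\pi$ is a structural predicate (Definition~\ref{def:structural}) that is both sound and complete for $e$'s identity region on $D$.  Soundness means $\pi(x)=1 \Rightarrow z_e(x)=1$ for all $x\in D$, the sterile-gate implication of Definition~\ref{def:gates}.  Completeness means $z_e(x)=1 \Rightarrow \pi(x)=1$ for all $x\in D$.  Together they give $\pi(x)=z_e(x)$ pointwise on $D$.

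\emph{Step 2: use the factorization.}  By Definition~\ref{def:structural} there is a $\widehat{\pi}$ with $\pi(x)=\widehat{\pi}(\sigma(x))$, read at root scope as $\sigma(s,v)$ with the distinguished root retained.  Since $\sigma(x_0)=\sigma(x_1)$, this yields $\pi(x_0)=\pi(x_1)$.  Combining with Step 1 gives
\[
z_e(x_0)=\pi(x_0)=\pi(x_1)=z_e(x_1),
\]
which contradicts the hypothesis $z_e(x_0)\ne z_e(x_1)$.

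For a more informative form, I would split into the two orderings $z_e(x_0)=1, z_e(x_1)=0$ and the reverse.  On that fiber, any sound $\pi$ must be $0$, because $\pi$ is constant there and one member is a non-identity; hence $\pi$ misses the identity member, and completeness fails.  Symmetrically, any complete $\pi$ fires on the non-identity member, and soundness fails.  This is the precise sense in which the fiber forces both inputs into dynamic evaluation.

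\emph{Step 3: scope bookkeeping.}  This is the only place needing care; there is no genuine mathematical obstacle.  Three points must be checked:
\begin{itemize}
\item the identity $z_e$ and the predicate $\pi$ are taken at the same scope (root or pass) and on the same input type;
\item at root scope the pair $(s,v)$ is the input, and $\sigma(s,v)$ retains the root, so equality of the two images still forces equal predicate values;
\item the conclusion quantifies over every $D\supseteq\{x_0,x_1\}$, since the contradiction involves only those two points.
\end{itemize}
The substantive work, exhibiting a concrete mixed fiber such as the cut-4 majority pair, is not part of this theorem and would be handled separately.
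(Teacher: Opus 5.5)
Your proof is correct and matches the paper's argument. Factoring through $\sigma$ forces $\pi(x_0)=\pi(x_1)$, while soundness plus completeness would force $\pi(x_i)=z_e(x_i)$; your case split on which property fails and your scope bookkeeping are accurate details that the paper leaves implicit.
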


\begin{proof}
If $\pi=\widehat{\pi}\circ\sigma$, then
$\pi(x_0)=\pi(x_1)$.  Soundness and completeness would instead require
$\pi(x_i)=z_e(x_i)$ for both inputs, which is impossible because the
two values of $z_e$ differ.
\end{proof}

\paragraph*{Exhibited cut-4 fiber}
Let
$m_1=\mathrm{maj}(a,b,c)$,
$m_2=\mathrm{maj}(m_1,a,b)$, and
$n=\mathrm{maj}(m_1,m_2,d)$.
With plain edges, majority absorption gives $m_2\equiv m_1$, so the
root is in the one-gate majority class while its removable cone has
two records; the cut-4 evaluator offers positive gain.  Complement
only the $m_1\!\to m_2$ edge.  The polarity-erased graph, levels,
fanouts, and cone sizes are unchanged, but the root becomes a genuine
four-variable class whose stored form costs at least the two-record
cone when no external hash reuse is present; the evaluator offers
nothing.  The decoded NPN table and both verdicts are independently
checked in the artifact.  Thus Theorem~\ref{thm:frontier} applies to
cut-4 rewriting.

Table~\ref{tab:fiberevidence} separates that proved mixed fiber from the
operators for which this paper establishes only source-level function
dependence.

\begin{table}[h]
\centering
\scriptsize
\setlength{\tabcolsep}{3pt}
\caption{What is proved about the function-dependent residue.
Source dependence identifies the discriminator; only an exhibited
mixed $\sigma$-fiber invokes Theorem~\ref{thm:frontier}.}
\label{tab:fiberevidence}
\begin{tabular}{@{}p{0.22\columnwidth}p{0.38\columnwidth}p{0.31\columnwidth}@{}}
\toprule
Operator & Function discriminator & Evidence in this paper \\
\midrule
Cut-4 rewriting & NPN class of the cut truth table and unreferenced stored form &
Mixed fiber exhibited; obstruction proved \\
Akers refactoring & Resynthesis of the cone truth table &
Source dependence established; mixed-fiber witness remains open \\
SOP balancing & ISOP cover of the cut truth &
Source dependence established; mixed-fiber witness remains open \\
ABC/MIG resubstitution & Truth containment/equality or window simulation &
Source dependence established; mixed-fiber witness remains open \\
ABC rewriting & Four-cut NPN class &
Source dependence established; mixed-fiber witness remains open \\
\bottomrule
\end{tabular}
\end{table}

\begin{remark}[scope]
The theorem applies where a mixed fiber is witnessed.  For the other
operators, source-level function dependence identifies the missing
information but does not by itself exhibit such a fiber.  Further
structural subclasses and inexpensive function-aware predicates also
remain open.
\end{remark}

\subsection{Limits of corpus envelopes}
\label{app:envelopes}

A natural formal route is to define a circuit class that contains the
corpus and is preserved by all forty actions, so that operator relations can be
proved once for the class.  A typical construction begins with a simple
coordinate envelope.

Let $N$, $D$, $F$, and $P$ denote node count, depth, maximum fanout,
and a fixed input--output path count, respectively.  A \emph{simple coordinate
envelope} is a finite conjunction of bounds $q(s)\le b_q$, with $q$
drawn from a subset of these coordinates.

No registered invariant establishes that the forty actions preserve a
box obtained from corpus maxima.  The pinned implementations
contain mechanisms that make such preservation nontrivial:
depth-motivated balancing may add gates, default resubstitution admits
level-raising replacements, refactoring has no depth-acceptance guard,
and divisor reuse and restructuring alter fanout and path
multiplicity.  They motivate the action-specific invariance requirement
below.

Accordingly, corpus maxima and other finite coordinate boxes remain
descriptive until accompanied by an action-specific invariance proof.
The formal statements quantify over legal implementation states or
explicitly named subsets.

\subsection{A complete closed-set example}
\label{app:closedset}

A formally complete construction makes this limitation concrete.  It
satisfies the usual set-theoretic requirements while carrying almost
no information about optimization.

\begin{example}[normal-quotient universe]
Let $\mathcal U_{\mathrm{MIG}}$ be all finite ordered combinational
MIGs.  Define a normalizer $N$ by four deterministic steps:
\begin{enumerate}
\item reduce the one-node identities
$\mathrm{maj}(x,x,y)=x$ and $\mathrm{maj}(x,\bar x,y)=y$;
\item sort fanins and structurally hash identical majority triples;
\item delete nodes unreachable from the ordered outputs; and
\item canonically relabel nodes by ASAP depth and then by their
canonical fanin triples, preserving PI and PO order.
\end{enumerate}
No reassociation, balancing, rewriting, refactoring, resubstitution,
or SAT sweeping occurs in $N$.  The procedure is deterministic,
semantics preserving, and idempotent.  Define
\[
  \mathcal S_N
  = \{C\in\mathcal U_{\mathrm{MIG}}:N(C)=C\},
  \qquad
  \widehat T_a = N\circ T_a ,
\]
where $T_a(C)$ is any committed successor of recipe action $a$ that
returns a finite ordered MIG.  Each AIG anchor enters the set through
$N\circ L_{\mathrm{AIG}\rightarrow\mathrm{MIG}}$, with
$L(\mathrm{and}(x,y))=\mathrm{maj}(0,x,y)$.
\end{example}

\begin{proposition}[closure by construction]
For every $C\in\mathcal U_{\mathrm{MIG}}$ and every admitted successor
$T_a(C)$,
\[
  \widehat T_a(C)\in\mathcal S_N .
\]
Consequently, $\mathcal S_N$ is forward closed under every normalized
transition $\widehat T_a$.
\end{proposition}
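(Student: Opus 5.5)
The claim reduces to the idempotence of the normalizer $N$: since $\widehat T_a(C)=N(T_a(C))$ and $T_a(C)$ is, by hypothesis, a finite ordered MIG in $\mathcal U_{\mathrm{MIG}}$, it suffices to show $N(N(D))=N(D)$ for every $D\in\mathcal U_{\mathrm{MIG}}$. Then $N(T_a(C))$ is a fixed point of $N$, hence lies in $\mathcal S_N$ by definition. Forward closure is a restatement: if $C\in\mathcal S_N\subseteq\mathcal U_{\mathrm{MIG}}$, the same argument places $\widehat T_a(C)$ in $\mathcal S_N$. Nothing about the action $a$ is used, which is exactly the point the example makes.

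The key steps, in order, are as follows.

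First, I fix $N$ as a deterministic function on $\mathcal U_{\mathrm{MIG}}$ with values in $\mathcal U_{\mathrm{MIG}}$. Each of the four steps preserves finiteness, acyclicity, the ordered PI/PO interface and semantics:
\begin{itemize}
\item the one-node identities are Boolean laws;
\item hashing merges only syntactically equal triples;
\item deletion removes only unreachable nodes;
\item relabeling is an isomorphism.
\end{itemize}
To make step 1 well defined, I interpret it as applied to fixpoint in topological order. Each reduction removes a node, so the process terminates.

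Second, I prove idempotence by showing that the output $D'=N(D)$ already satisfies every step's postcondition, so each step acts as the identity on it:
\begin{itemize}
\item $D'$ contains no reducible triple, because step 1 ran to fixpoint and the later steps never create one: sorting and hashing only merge equal triples, deletion removes nodes, and relabeling renames.
\item Its fanins are sorted and no two nodes share a triple.
\item Every node is reachable from an output.
\item Its labels are already canonical, so the relabeling of step 4 returns the same labels.
\end{itemize}

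The main obstacle is this interaction between steps, in particular whether step 2 can re-enable a step 1 reduction. Merging two hashed nodes $u\equiv w$ can turn a parent $\mathrm{maj}(u,w,y)$ into $\mathrm{maj}(u,u,y)$, which step 1 should reduce. I would resolve it by defining steps 1--2 as a single joint fixpoint loop (reduce, then hash, repeat until neither fires). Equivalently, I would process nodes in topological order: a node is first rewritten through its already-canonical fanins, then reduced, then hashed, so a merge is always visible before its parents are examined. Step 3 cannot create reducible or duplicate triples, and step 4 depends only on the isomorphism class with its ordered interface. With this amendment idempotence follows, and the claim follows. I would state the amendment explicitly as the precise reading of the procedure.
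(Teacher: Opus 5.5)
Your proof is correct and follows the paper's route: the paper's proof is the one-line chain $N(\widehat T_a(C))=N(N(T_a(C)))=N(T_a(C))=\widehat T_a(C)$, and it takes idempotence of $N$ as already stated in the example rather than proving it. Your check of idempotence goes beyond the paper, and your observation is sound. Hashing in step 2 can turn $\mathrm{maj}(u,w,y)$ into $\mathrm{maj}(u,u,y)$, so the literal sequential reading of the four steps fails idempotence. Reading steps 1--2 as an interleaved topological rebuild is the reading under which the paper's asserted idempotence actually holds.
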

\begin{proof}
Idempotence gives
$N(\widehat T_a(C))=N(N(T_a(C)))=N(T_a(C))=\widehat T_a(C)$.
Thus the normalized successor is a fixed point of $N$.
\end{proof}

The set is also proper in the formal sense: it contains live
irreducible majority chains of unbounded size and depth.  Scalable
nonmember families result from adding unreachable logic, structurally
duplicate gates, or
$\mathrm{maj}(x,x,y)$ identity wrappers; $N$ removes each defect.
Membership is exactly decidable by normalize-and-compare, with no
fitted constants.  All circuits in the historical 55-case set belong
to $\mathcal S_N$ after normalization.

\begin{remark}[why the construction is trivial]
The proof never uses $C\in\mathcal S_N$, any property of $a$, or any
quality measure: appending $N$ to the transition forces every successor
into the set.  Moreover, $\mathcal S_N$ excludes only representation
garbage and identifier order; every Boolean function has a
representative in it.  On the 55 anchors, $N$ removed zero logic nodes,
so its only observed effect there was canonical relabeling.  The
construction is a nonempty, proper, forward-closed circuit set.  Its
closure nevertheless says nothing about which operator to run or what
quality it can achieve.  Its certificate attaches to the
normalization wrapper; deployed optimization choices require their own
relations.  Functional correctness of an
action is a separate obligation; the proposition assumes only an
admitted MIG successor.
\end{remark}

\subsection{The open register}
\label{app:open}

The scoped results above leave several concrete directions open:
\begin{itemize}
\item Mixed-fiber witnesses for the function-dependent operators listed
as open in Table~\ref{tab:fiberevidence}.
\item Function-aware predicates between structural gates and the full
engines.
\item The minimum certified basis over per-root selection modes,
currently only loosely bounded.
\item Aggressive versus selective algebraic rewriting under
finite-factor simulation in either direction.
\item Database size-optimality for stored sizes 5--7 (verified exactly
through size 4; sizes 5--7 currently rely on the database entries).
\item Ablations of each lane's role and the remaining rewrite--refactoring and
resubstitution--refactoring containment edges.
\end{itemize}
 \section{Extended Experimental Material}

\subsection{Bit-identity on all 66 circuits}

The gated variant produces outputs bit-identical to those of the published
algorithm on all 66 circuits; total runtime over the set decreases from
38.2\,s to 34.0\,s in the bit-identity harness.  No case shows a QoR
difference under either gate, alone or combined.

\subsection{All-mode geomeans against \texttt{orchestrate}}
\label{app:experiments}

Tables~\ref{tab:orch16} and~\ref{tab:orch66} report geomean ratios of
\taco{} to \texttt{orchestrate}~\cite{li2024orchestration} in each of
its four modes, over the sixteen paper circuits and over the full
66-circuit set respectively.  The gap is largest in the single-pass
mode.  The iterative mode trades about 2\% more nodes for 3--4\%
fewer levels and stays ahead on NDP.  Its commit rule continues to
accept level-reducing moves at the fixed point, producing this
node--level exchange.  The resyn interleavings run 4--7\% deeper on
the sixteen paper circuits.  Both final harnesses use the default
level-preserving engine settings.  The two flows share this setting,
while their selection and commit rules differ; on this subset, the extra
depth buys no area.  The baseline in the body is the
published Orchestrate algorithm, invoked by the ABC command
\texttt{orchestrate}.  The paper additionally reports
LocalGreedy-tuned \texttt{resyn} variants on five circuits.  For
context, those best-reported flows use 0.5--1.8\% fewer nodes than
\taco{} on the five cases.  The same mechanism appears in
\texttt{sqrt}: those flows commit zero-gain moves freely,
and our commit rule does not.  Over the full 66-circuit set the node
geomeans are at parity with or below the baseline.  Per-case numbers for every run
are released in the artifact's output files.

\begin{table}[h]
\centering
\scriptsize
\setlength{\tabcolsep}{4pt}
\caption{Geomean ratios of \taco{} to \texttt{orchestrate}, sixteen
paper circuits.  Wins count strictly smaller node counts.}
\label{tab:orch16}
\begin{tabular}{@{}lllllll@{}}
\toprule
Mode & Nodes & Levels & NDP & Wins/16 & \taco{} s & Speedup \\
\midrule
Single & 0.9897 & 0.9681 & 0.9581 & 14 & 16.5 & $2.6\times$ \\
Iterative & 1.0203 & 0.9695 & 0.9891 & 4 & 94.6 & $1.6\times$ \\
Resyn & 1.0008 & 1.0534 & 1.0543 & 8 & 41.5 & $2.7\times$ \\
Resyn3 & 1.0023 & 1.0691 & 1.0715 & 10 & 64.8 & $2.3\times$ \\
\bottomrule
\end{tabular}
\end{table}

\begin{table}[h]
\centering
\scriptsize
\setlength{\tabcolsep}{4pt}
\caption{Geomean ratios of \taco{} to \texttt{orchestrate}, full
66-circuit set.  Wins count strictly smaller node counts.}
\label{tab:orch66}
\begin{tabular}{@{}lllllll@{}}
\toprule
Mode & Nodes & Levels & NDP & Wins/66 & \taco{} s & Speedup \\
\midrule
Single & 0.9879 & 0.9515 & 0.9400 & 36 & 37.8 & $1.9\times$ \\
Iterative & 1.0001 & 0.9564 & 0.9564 & 25 & 140.7 & $1.6\times$ \\
Resyn & 0.9942 & 1.0028 & 0.9970 & 26 & 97.5 & $1.9\times$ \\
Resyn3 & 0.9944 & 1.0072 & 1.0015 & 27 & 154.6 & $1.6\times$ \\
\bottomrule
\end{tabular}
\end{table}

\subsection{Effort details}

All reported seconds are per-case wall times on one x86-64 macOS host;
totals are sums over cases.  Over the 66 circuits, the pass counts are
66 for single (one per case), 193 for iterative (at most 10 per case),
198 for resyn, and 330 for resyn3.  Per-case times appear in the body
tables and released artifact; engine-level evaluation counts are
printed in each run's \texttt{CGO\_RESULT} log.

\subsection{\taco{}-max results against HeLO, per case}

Table~\ref{tab:helomax} reports the deterministic maximum-effort
track against HeLO's published values.  E denotes an exact input, R a
reconstruction, and V the disclosed near variant; only E enters the
primary direct geomean.

\begin{table}[h]
\centering
\scriptsize
\setlength{\tabcolsep}{1.5pt}
\caption{HeLO set, per case: \taco{}-max versus HeLO's reported values.
The NDP column is for \taco{}-max, and NDP/HeLO is its ratio to HeLO's
published NDP.  E denotes an exact input, R a reconstruction, and V a
near variant.}
\label{tab:helomax}
\begin{tabular}{@{}lllllll@{}}
\toprule
Case & In & HeLO n/d & \taco{}-max n/d & NDP & NDP/HeLO & s \\
\midrule
aes\_core & E & 21867/18 & 19237/18 & $3.46{\times}10^5$ & 0.880 & 1941 \\
i2c & E & 1385/8 & 1329/8 & $1.06{\times}10^4$ & 0.960 & 2 \\
mem\_ctrl & E & 56592/61 & 64119/47 & $3.01{\times}10^6$ & 0.873 & 986 \\
chip\_bridge & R & 58317/19 & 62064/16 & $9.93{\times}10^5$ & 0.896 & 301 \\
fpu & R & 68099/20 & 63956/19 & $1.22{\times}10^6$ & 0.892 & 481 \\
s38417 & V & 9522/16 & 8060/17 & $1.37{\times}10^5$ & 0.899 & 21 \\
\midrule
Exact geomean & E & -- & -- & -- & 0.903 & -- \\
All context & E/R/V & -- & -- & -- & 0.900 & -- \\
\bottomrule
\end{tabular}
\end{table}
 \subsection{The Optional Fast Extension}
\label{app:fast}

The main-paper configuration uses source-default engine parameters and
the stated deterministic stopping rules.  The released implementation also
provides \taco{}-fast, a deterministic empirical overlay for
runtime-sensitive use.  It retains $\mathcal G_{\mathrm{TACO}}$, the
residual operator set, the selector, the lane order, and the strict-NDP
commit rule.  The overlay changes the evaluation policy through tuned
AIG bounds, an early iterative stop, and four stateful yield gates.
\taco{}-fast at maximum effort applies the same overlay with the iterative AIG
pre-pass and the larger MIG-round and time limits of \taco{}-max.

\subsubsection{Configuration and evidence contract}

Table~\ref{tab:fast-constants} gives the complete configuration delta.
The normal-effort HeLO-comparison recipe is
\texttt{cgo -M 2; cgo\_mig}; the maximum-effort recipe is
\texttt{cgo -M 1 -P 20; cgo\_mig -R 60 -T 1800}.  In the released
command line, \texttt{-C} selects the main-paper \taco{}
configuration.

\begin{table}[H]
\centering
\scriptsize
\setlength{\tabcolsep}{3.5pt}
\caption{Complete delta from \taco{} to the fast overlay.}
\label{tab:fast-constants}
\begin{tabular}{@{}lll@{}}
\toprule
Setting & \taco{} & \taco{}-fast \\
\midrule
RF node/cone bound & 10/16 & 9/14 \\
RW cut keep count & 250 & 100 \\
Iterative stop & no-improvement fixpoint & fixpoint or gain $<0.5\%$ \\
AIG yield window & disabled & 4096 evaluations \\
MIG yield window & disabled & 3 rounds \\
\bottomrule
\end{tabular}
\vspace{2pt}

\begin{minipage}{0.96\columnwidth}
\footnotesize The two engine bounds and the stopping threshold were
selected on the development corpus.  The yield windows follow the
measured dry-block convention recorded in the artifact.
\end{minipage}
\end{table}

The four online gates are listed in Table~\ref{tab:fast-gates}.  Each is
a deterministic state machine scoped to one pass or one carrier lane.
Their history-based triggers carry measured contracts, including the
residual opportunities stated below.

\begin{table}[H]
\centering
\scriptsize
\setlength{\tabcolsep}{3pt}
\caption{Empirical gates used only by the fast overlay.}
\label{tab:fast-gates}
\begin{tabular}{@{}p{0.10\columnwidth}
                    p{0.38\columnwidth}
                    p{0.42\columnwidth}@{}}
\toprule
Gate & Trigger and action & Registered residual \\
\midrule
F-P7 & After a 4096-evaluation block with zero committed gain,
disable the dry RW, RS, or RF engine for the rest of the AIG pass. &
A later block may contain a commit. \\
F-P8 & After 4096 MFFC-1 evaluations with no RS/R2p candidate,
disable RS on MFFC-1 roots for the rest of the AIG pass. &
At most a gain-1 Divs0/Quit candidate per later root is at stake. \\
F-P10 & After three offer-free cut4 rounds, disable cut4 for the rest
of that carrier lane. &
A later round may enter the cut4 fire region. \\
F-P11 & After three evaluated rounds without a win, disable the dry
resubstitution, refactoring, or balancing pass for the rest of that
carrier lane. &
A later restructuring commit may be missed. \\
\bottomrule
\end{tabular}
\end{table}

\subsubsection{Orchestration results}

Table~\ref{tab:fast-orch-modes} reports the fast overlay under all four
published Orchestration schedules.  Ratios and speedups use the same
inputs, host, harness, and published \texttt{orchestrate} command as
Tables~\ref{tab:orch16} and~\ref{tab:orch66}.  In single-pass mode, the
overlay moves the $O_{16}$ depth and NDP ratios from
$0.9681/0.9581$ to $0.9549/0.9455$, while total time decreases from
16.5\,s to 11.6\,s.  On $U_{66}$, NDP moves from $0.9400$ to
$0.9344$ and time from 37.8\,s to 16.2\,s.

\begin{table}[H]
\centering
\scriptsize
\setlength{\tabcolsep}{2.4pt}
\caption{\taco{}-fast geomean ratios to \texttt{orchestrate} under
the four published schedules.  Wins count strictly smaller node
counts; seconds are totals on the common host.}
\label{tab:fast-orch-modes}
\begin{tabular}{@{}llllllll@{}}
\toprule
Set & Mode & Nodes & Levels & NDP & Wins & s & Speedup \\
\midrule
$O_{16}$ & Single & 0.9901 & 0.9549 & 0.9455 & 14/16 & 11.6 & $3.7\times$ \\
 & Iter. & 1.0230 & 0.9630 & 0.9851 & 4/16 & 30.0 & $5.0\times$ \\
 & Resyn & 1.0028 & 1.0403 & 1.0432 & 7/16 & 26.6 & $4.2\times$ \\
 & Resyn3 & 1.0056 & 1.0551 & 1.0610 & 7/16 & 37.3 & $3.9\times$ \\
\midrule
$U_{66}$ & Single & 0.9882 & 0.9455 & 0.9344 & 36/66 & 16.2 & $4.4\times$ \\
 & Iter. & 1.0004 & 0.9524 & 0.9528 & 24/66 & 36.9 & $6.2\times$ \\
 & Resyn & 0.9955 & 1.0009 & 0.9965 & 26/66 & 36.1 & $5.2\times$ \\
 & Resyn3 & 0.9926 & 1.0023 & 0.9950 & 25/66 & 50.9 & $4.7\times$ \\
\bottomrule
\end{tabular}
\end{table}

The node geomean remains within $0.04\%$ of \taco{}, while the corrected
FlowTune/ITC inputs contribute several of the depth reductions.  On the disjoint
$O_{16}\setminus D_{55}$ subset, \taco{}/\taco{}-fast respectively
give $0.9897/0.9908$ nodes, $0.9568/0.9393$ levels, and
$0.9469/0.9307$ NDP, with 10/11 strict node wins in both configurations.
The artifact retains the per-case nodes, levels, NDP, wall time, and
engine-evaluation counts for every schedule.

The output differences follow the registered empirical choices.  On
\texttt{b18\_1} in \texttt{resyn} mode, the fast overlay is better by
115 nodes; on \texttt{s38417}, the base \taco{} configuration is better by
15 nodes.  These cases establish that the online gates can move QoR in
either direction even though the aggregate node ratios remain close.

\subsubsection{HeLO-comparison results}

Table~\ref{tab:fast-helo} reports the two \taco{}-fast effort settings on the six
available HeLO cases, preserving the input-fidelity groups of
Table~\ref{tab:benchsets}.  At normal effort, its exact-input and
all-context NDP geomeans are $0.940$ and $0.944$; at maximum effort,
they are $0.944$ and $0.929$.  For comparison, the main-paper
\taco{}-max result in Table~\ref{tab:helo} is $0.903/0.900$.

\begin{table*}[!b]
\centering
\setlength{\tabcolsep}{2.6pt}\scriptsize
\caption{Fast-overlay HeLO results grouped by input fidelity.  Entries
give nodes, depth, NDP (in $10^4$), and seconds; parenthesized values
are ratios to HeLO's published values.}
\label{tab:fast-helo}
\begin{tabular}{@{}l rrrr rrrr rrrr@{}}
\toprule
 & \multicolumn{4}{c}{HeLO~\cite{pu2025helo}}
 & \multicolumn{4}{c}{\taco{}-fast}
 & \multicolumn{4}{c}{\taco{}-fast, maximum effort} \\
\cmidrule(r){2-5}\cmidrule(r){6-9}\cmidrule(r){10-13}
Case & n & d & ndp & s$^\dagger$
 & n (ratio) & d (ratio) & ndp (ratio) & s (ratio)
 & n (ratio) & d (ratio) & ndp (ratio) & s (ratio) \\
\midrule
\multicolumn{13}{@{}l}{\emph{Exact input artifacts: primary comparison}} \\
aes\_core & 21867 & 18 & 39.4 & 63
 & 20084 (0.918) & 19 (1.056) & 38.2 (0.969) & 1301 (20.7)
 & 18367 (0.840) & 21 (1.167) & 38.6 (0.980) & 1964 (31.2) \\
i2c & 1385 & 8 & 1.11 & 21
 & 1317 (0.951) & 8 (1.000) & 1.05 (0.951) & 2 (0.10)
 & 1329 (0.960) & 8 (1.000) & 1.06 (0.960) & 2 (0.09) \\
mem\_ctrl & 56592 & 61 & 345 & 337
 & 66239 (1.170) & 47 (0.770) & 311 (0.902) & 967 (2.87)
 & 53275 (0.941) & 58 (0.951) & 309 (0.895) & 899 (2.67) \\
\midrule
\multicolumn{13}{@{}l}{\emph{LSOracle-OPDB reconstructions: context}} \\
chip\_bridge & 58317 & 19 & 111 & 262
 & 67935 (1.165) & 16 (0.842) & 109 (0.981) & 96 (0.37)
 & 67147 (1.151) & 17 (0.895) & 114 (1.030) & 164 (0.63) \\
fpu & 68099 & 20 & 136 & 480
 & 65408 (0.960) & 20 (1.000) & 131 (0.960) & 300 (0.63)
 & 59023 (0.867) & 19 (0.950) & 112 (0.823) & 460 (0.96) \\
\midrule
\multicolumn{13}{@{}l}{\emph{Disclosed near variant: context}} \\
s38417 & 9522 & 16 & 15.2 & 30
 & 8069 (0.847) & 17 (1.062) & 13.7 (0.900) & 27 (0.91)
 & 8077 (0.848) & 17 (1.062) & 13.7 (0.901) & 16 (0.54) \\
\midrule
Exact-input geomean & \multicolumn{4}{c}{1.000}
 & 1.0074 & 0.9334 & 0.9403 & 1.7804
 & 0.9121 & 1.0352 & 0.9442 & 1.9933 \\
All-available context & \multicolumn{4}{c}{1.000}
 & 0.9948 & 0.9484 & 0.9435 & 1.0296
 & 0.9289 & 1.0003 & 0.9292 & 1.1523 \\
\bottomrule
\end{tabular}
\vspace{2pt}

\begin{minipage}{0.98\textwidth}
\footnotesize $^\dagger$HeLO seconds are reproduced from its paper and
were measured on a different machine from ours.  Their ratios provide
cross-machine order-of-magnitude context; the Orchestration study above
contains the same-machine speedups.
\end{minipage}
\end{table*}

The paired normal-effort diagnostic holds the AIG pre-pass fixed and
switches only the MIG-side fast overlay.  On \texttt{fpu}, disabling
that overlay produces 59{,}794 nodes at depth 20, whereas the fast run
produces 65{,}408 nodes at the same depth.  The trace attributes the
5{,}614-node difference to F-P11: it ends resubstitution three rounds
early and forgoes two downstream depth-rewrite commits.  The same trace
supplies the concrete residual witness recorded for this empirical
gate.

\subsubsection{Scope of the fast evidence}

The tuned values were selected and evaluated on the development
manifests, so the $D_{55}$ and $H_6$ results are in-sample performance
evidence.  The eleven corrected inputs in
$O_{16}\setminus D_{55}$ entered after the configuration freeze and
form the disjoint evaluation subset reported above.  The final untimed
CEC sweep passes all 552 outputs from the main-paper and fast-overlay
configurations.  The evidence therefore covers functional equivalence,
deterministic replay, measured QoR and runtime, and the late-opportunity
residuals stated in Table~\ref{tab:fast-gates}.
 \section{Production Process: Errors and Final Validation}

\subsection{Observed failure modes}
\label{app:failures}

Three recurrent failures could promote a local observation into a
false general claim: measuring the wrong call path, citing a superseded
lemma, or accepting unsupported generated text.  Table~\ref{tab:errors}
pairs each observed instance with the control that now prevents its
reuse as evidence.

\begin{table}[H]
\centering
\scriptsize
\caption{Observed failures and their current controls.}
\label{tab:errors}
\begin{tabular}{@{}p{0.18\columnwidth}p{0.30\columnwidth}p{0.39\columnwidth}@{}}
\toprule
Failure mode & Observed instance & Countermeasure \\
\midrule
Bug--artifact confusion & Cut4 appeared sterile when a harness discarded
its returned network & Re-run through the deployed call path and check
return-value ownership \\
Citation drift & XMG was excluded using a genuine but later-withdrawn
lemma & Status registry; only CURRENT entries may support a result \\
Confabulation & An LLM produced claims unsupported by the record &
Pinned source anchors, explicit obligations, and independent adversarial
re-reading \\
\bottomrule
\end{tabular}
\end{table}

\subsection{The XMG correction, condensed}
\label{app:xmg-correction}

A standalone harness reported no wins for the XMG lane on the
11-case set.  A subsequent exclusion relied on a stale ledger claim
that the XMG and MIG algebraic scripts were cost-equivalent on every
AIG input.  The registry had already withdrawn that claim: a seven-AND witness reaches
$(3,2)$ under the MIG script and $(4,3)$ under the XMG script.  The
lane was restored; the incident motivated the status registry.

Source re-reading localized the difference to an
old-node structural-hash collision.  Theorem~\ref{thm:xmg} states the
necessary condition for a first divergence: the preceding common state
contains an adjacent node--parent pair sharing a fanin.  Since
reassociation can create the pair, its absence at pass entry does not
guarantee absence throughout the pass.

P6 checks the boundary state before the XMG aggressive pass.  Audit
discarded its first implementation because of an uninitialized variable
and an incomplete child-slot test.  The rewritten check matches the
collision shape; its place in the lane schedule remains a measured
Tier-2 choice.  The final $H_6$ logs show zero trigger/outcome
consistency violations, and s38417 reaches NDP 136{,}918 versus
137{,}173 with P6 disabled.  The registry therefore retains P6 under
its measured Tier-2 contract.

\subsection{Final-validation ledger}
\label{app:validation}

Table~\ref{tab:machine} links each final quantitative or finite claim to
the independent check that validates it.

\begin{table}[H]
\centering
\scriptsize
\setlength{\tabcolsep}{2.5pt}
\caption{Checks tied directly to final claims in this paper.}
\label{tab:machine}
\begin{tabular}{@{}p{0.22\columnwidth}p{0.17\columnwidth}p{0.35\columnwidth}p{0.13\columnwidth}@{}}
\toprule
Claim & Evidence type & Independent check & Result \\
\midrule
40$\to$31 exact cover & Source proof/audit & U-Q31 and its four
dependencies over the legal implementation domain & PASS \\
Stored 4-input NPN forms & Exhaustive decoder & Two independent
decoders; every driver canonicalizes to its claimed class & 222/222 \\
RS/RW/RF gain bounds & Source proof/audit & T8/T9 source obligations
and adversarial audits & PASS \\
Gain-bound regression & Finite log scan & No recorded violation in
\texttt{gain\_scan.json} & 66/66 \\
Gate-only trajectory & Differential output & Byte comparison after
normalizing the embedded timestamp & 66/66 identical \\
Optimizer correctness & CEC & Final untimed equivalence sweep over all
released outputs & 552/552 PASS \\
\bottomrule
\end{tabular}
\end{table}

The ledger separates three kinds of support.  Source audits discharge
universal proof obligations; finite decoders and scans discharge
enumerable obligations; differential and CEC runs validate produced
artifacts.  Each supports only the corresponding type of claim.

\newpage
\subsection{Representative refutations}
\label{app:refutations}

Table~\ref{tab:refuted} retains six consequential draft claims whose
counterexamples materially changed the final theory or implementation.

\begin{table}[H]
\centering
\scriptsize
\caption{Draft claims refuted during this campaign.}
\label{tab:refuted}
\begin{tabular}{@{}p{0.30\columnwidth}p{0.39\columnwidth}p{0.21\columnwidth}@{}}
\toprule
Draft claim & Counterexample or defect & Final disposition \\
\midrule
Cut4 cone size $\le7$ implies identity &
A two-record MAJ3 cone has stored size 1 and gain 1 &
Use E1--E5 accounting certificates \\
NPN-minimal cone implies cut4 identity &
External hash sharing lowers the unreferenced replacement count &
Retain the dynamic count check \\
First fiber witness has stored size 2 &
$\mathrm{maj}(\mathrm{maj}(x,y,z),x,y)$ absorbs to MAJ3 &
Replace by the verified three-gate polarity fiber \\
H08 $\equiv$ H10 &
Seven-AND witness: $(3,2)$ versus $(4,3)$ &
Keep both recipe actions \\
H27 dominated by H39 &
Nine-gate witness reverses the proposed relation &
Keep both recipe actions \\
Aggressive algebraic rewriting simulates selective &
Frozen-order family separates the two strategies &
Retain selective rewriting \\
\bottomrule
\end{tabular}
\end{table}

\paragraph*{Inventory citation rule}
Only claims marked current in the artifact support the results.
Withdrawn and refuted claims remain as provenance and do not serve as
premises.
 \fi

\end{document}
\typeout{get arXiv to do 4 passes: Label(s) may have changed. Rerun}